\documentclass[12pt]{article}
\usepackage[margin=1in]{geometry}

\usepackage[T1]{fontenc} 
\usepackage{lmodern}
\usepackage[utf8]{inputenc} 
\usepackage[english]{babel} 

\usepackage{amsfonts,dsfont} 
\usepackage{amsthm} 
\usepackage{amsmath} 
\usepackage{braket} 
\usepackage{mathtools} 
\usepackage{bm} 
\usepackage{mathrsfs} 

\usepackage{microtype} 
\usepackage{natbib} 
\usepackage[colorlinks=true,allcolors=blue]{hyperref} 
\usepackage{setspace} 
\usepackage{multirow} 
\usepackage{enumitem} 
\usepackage{caption}

\theoremstyle{definition}
\newtheorem{definition}{Definition}
\newtheorem{remark}[definition]{Remark}

\theoremstyle{plain}
\newtheorem{theorem}[definition]{Theorem}
\newtheorem{proposition}[definition]{Proposition}
\newtheorem{lemma}[definition]{Lemma}

\newcommand{\R}{\mathbb{R}}
\newcommand{\indic}{\mathds{1}}
\newcommand{\Prob}{\mathbb{P}}
\newcommand{\E}{\mathbb{E}}
\newcommand{\X}{\mathbb{X}}
\newcommand{\crm}{\tilde \mu}

\newcommand{\lesspace}{\vspace{-0.125\baselineskip}}
\newcommand{\cut}{\vspace{0pt}}

\begin{document}
	
	\pagestyle{plain}
	\pagenumbering{arabic}
	
	\title{\bf Principled Estimation and Prediction \\ with Competing Risks: \\ a Bayesian Nonparametric Approach }
	
	\author{Claudio Del Sole$^1$ \qquad Antonio Lijoi$^2$ \qquad Igor Pr\"unster$^2$ \\
		{\small $^1$Department of Economics, Management and Statistics, University of Milano-Bicocca} \\[-12pt]
		{\small $^2$Bocconi Institute for Data Science and Analytics, Bocconi University}}
	
	\date{}
	
	\maketitle
	
	\begin{abstract}
		\noindent Competing risks occur in survival analysis when multiple causes of death are present. They play a prominent role in several domains extending beyond biostatistics to encompass epidemiology, actuarial sciences, and reliability theory. This paper adopts a multi--state modeling framework to competing risks. We introduce a class of flexible nonparametric priors, defined through hierarchical completely random measures, to model the transition probabilities, and identify the specific (conditionally) conjugate member of this general class. Furthermore, we determine the joint marginal distribution of the data and of a latent random partition, and characterize the posterior distribution of the model. 
		Leveraging these distributional results, we evaluate the predictive probability that a future event is of a specific type (e.g.~death from a particular cause), as a function of the time at which the event occurs. The resulting function, derived on sound principles, is termed the \textit{prediction curve}, and represents a major innovation in the literature. In addition, we provide posterior estimates for the survival function, and for the cause--specific incidence and subdistribution functions.
		Suitable simulation algorithms for posterior inference are also devised.
		The model's performance, as well as the algorithms' effectiveness, is evaluated through simulation studies. Finally, we illustrate our approach on clinical datasets.
	\end{abstract}

	\medskip
	
	\noindent \textit{Keywords:} Bayesian Nonparametrics; Competing risks; Hierarchical Processes; Prediction Curve; Random Partition; Survival Analysis.
	
	\vfill
	\newpage
	
	\section{Introduction}
	\label{sec:intro}
	
	Survival analysis must often account for multiple competing sources of risk, where the observed failure (or death) is determined by one among several causes (or types of events). For instance, in clinical studies, a fatal outcome after treatment may be due to a number of different causes, that are either related or unrelated to the treatment, and only one of those causes can be actually observed.
	\emph{Competing events} or \emph{competing risks} arise when the observation of an event of interest, and of the associated survival time, is potentially prevented by the occurrence of one of a set of distinct alternative events \citep{kalbfleisch2002, geskus2024}. The natural application of competing risks analysis is in biomedical and clinical studies, in which subjects may die from different causes. However, competing risks play a prominent role in several domains, another notable setting being reliability--testing in industry, where the possible breakdown of a complex system is due to the failure of one of its components.
	
	\subsection{Approaches to competing risks}
	\label{sec:competing_risks}

	Throughout, the observed survival time is denoted by $T \in \R^+$ and assumed to be a random variable (r.v.) with absolutely continuous distribution (with respect to the Lebesgue measure). Moreover, $\Delta \in \set{1, \dots, D}$ is the categorical r.v.~identifying the observed event type, or equivalently the cause of death/failure. For each event type $\delta\in\set{1,\ldots,D}$, the \emph{cause--specific hazard rate} at time $t \ge 0$ is defined as
	\begin{equation}
		\label{eq:cause_hazard}
		h_\delta(t) := 
		\lim_{s \to 0} \ \frac{1}{s} \ 
		\Prob\left( t \le T < t + s, \, \Delta = \delta \mid T > t\right),
	\end{equation} 
	representing the instantaneous rate of occurrence of an event of type $\delta$ at time $t$, given survival up to that time from \emph{all} possible types of events. Hence, individuals experiencing a competing event are no longer at risk. The cause--specific hazard rates are the building blocks of several important quantities. For instance, the \emph{survival function} of $T$, evaluated at time $t\ge0$, is given by \lesspace
	\begin{equation}
		\label{eq:overall_surv}
		S(t) := \Prob\left(T > t\right) = \exp \left( - \sum_{\delta=1}^D \int_0^t h_\delta (s) \, ds \right). 
	\end{equation}
	Similarly, the cause--specific \emph{incidence functions}, representing the infinitesimal probability of occurrence of each event type, are defined in terms of the cause--specific hazard rates as
	\begin{equation*} 
		f_\delta(t) := \lim_{s \to 0} \ \frac{1}{s} \ \Prob \left(  t \le T < t + s, \, \Delta = \delta \right) = h_\delta(t) \, \exp \left( - \sum_{\ell=1}^D \int_0^t h_\ell(s) \, ds \right), \qquad t \ge 0.
	\end{equation*}
	The probability of occurrence for each event type $\delta$, that is, the proportion $\pi_\delta$ of subjects eventually experiencing $\delta$, is the limit of the corresponding \emph{cumulative incidence function}, or \emph{subdistribution},
	\begin{equation*}
		\pi_\delta = \Prob\left(\Delta = \delta\right) = \lim_{t \to \infty} \int_0^t f_\delta(s) \, ds,
	\end{equation*}	
	with the obvious constraint $\sum_{\delta=1}^D \pi_\delta = 1$.
	This setup is consistent with the \emph{multi--state approach} to statistical modeling of competing risks. Indeed, the cause--specific hazards can be seen as transition rates in a multi--state model \citep{andersen2002, putter2007} for a Markov process $\{M(t) \colon t\ge 0\}$ with one transient state, `alive', labeled as $0$, and $D$ absorbing states `death from cause $\delta$', with label $\delta\in\set{1,\ldots,D}$. Hence, $h_\delta$ is the transition intensity from $0$ to $\delta$. Specifically, if $P_{\ell,k}(0,t)=\Prob\left(M(t)=k \mid M(0)=\ell\right)$ denotes the probability of moving from state $\ell$ to state $k$ in the time interval $(0,t)$, then
	\begin{equation}
		\label{eq:transitions}
		P_{0,0}(0,t)=S(t),  \qquad P_{0,\delta}(0,t)=\int_0^t f_\delta(s)\,ds,
	\end{equation} 
	are respectively the survival function in \eqref{eq:overall_surv} and the cause--specific cumulative incidence function.
	
	An alternative approach to competing risks data \citep{crowder2012} considers latent, or \emph{potential}, survival times $\bm{Y}_{1:D}=(Y_1,\ldots,Y_D)$, one for each event type. They are assumed to be distinct almost surely (a.s.)
	and the observed survival time is $T = \min\{Y_1,\ldots,Y_D\}$. Whilst intuitive, this approach has been criticized for its lack of plausibility and interpretability in biomedical applications, as it assumes all event types eventually occur for each individual \citep{geskus2024}. Additionally, the marginal distributions of the latent survival times cannot be identified without further assumptions on the dependence structure  \citep{kalbfleisch2002}. Even considering a dependent parametric model for $\bm{Y}_{1:D}$, observed data do not allow to discern this model from one with independent risks \citep{cox1959, tsiatis1975, crowder1991}. For this reason, several authors suggest to focus only on the estimation of observable quantities, often assuming independence of competing events for convenience and interpretability. However, this approach is  useful to construct simulated datasets; see  \citealp[e.g.][]{beyersmann2009} and Section~\ref{sec:simulation_three_risks}. Section~\ref{supsec:latent_times} outlines a Bayesian nonparametric approach to modelling latent failure times.
	
	The statistical analysis of competing risks data typically focuses on: (i) the estimation of the cause--specific cumulative incidence functions, which allows to estimate the probabilities that each competing event occurs within some time interval; (ii) the association of these quantities with different treatments or predictors of interest \citep{finegray1999}; (iii) the estimation of the survival function $S$ in \eqref{eq:overall_surv}. These statistical problems are widely discussed in the frequentist literature, as evidenced by numerous reviews and textbooks; see, e.g., \cite{kalbfleisch2002}, \cite{lawless2003}, \cite{crowder2012}, \cite{geskus2015,geskus2024}, \cite{cooklawless2018}. 
	In contrast, the literature on Bayesian nonparametric approaches to competing risks is rather limited. Bayesian models are often based on gamma or beta process priors, which were successfully employed to univariate survival data 
	\citep{dykstra1981, lo1989, hjort1990}. Recently, \cite{arfe2019} introduced a beta--Stacy process generalization \citep{walker1997} for cumulative incidence functions on a discrete timescale. \cite{xu2016} proposed a model for evaluating dynamic treatment regimes based on dependent Dirichlet processes, while \cite{xu2020} develop a similar approach for semi--competing risks. Instead, independent gamma processes are considered in \cite{oganisian2024} for modelling sequential treatment decisions with time-varying confounders. Further pointers and discussions can be found in \cite{mullerquintana2004,muller2015}. An alternative construction based on Bayesian Additive Regression Trees was proposed in \cite{sparapani2020}; we discuss and compare it to our method in Section~\ref{supsec:sparapani}, also through numerical illustrations.
	
	Unlike previous contributions, we take a comprehensive approach by specifying a prior on the transition probabilities of the associated multi--state model, successfully addressing all major inferential problems in the competing risks literature.
	Additionally, our principled framework leads to a significant breakthrough in prediction: we introduce the concept of \textit{prediction curve}, which corresponds to the conditional probability of a future event being of a specific type given the past observations, as a function of the event's occurrence time. Prediction curves are not only of foundational and methodological interest, but also offer great potential for practical decision--making due to their clear interpretability.
	
	\subsection{A Bayesian nonparametric model}
	
	Our model for competing risks data employs a dependent prior for the cause--specific hazard rates, which induces a prior on the transition probabilities $P_{0,\delta}$ and $P_{0,0}$ in \eqref{eq:transitions}. For each $\delta\in\set{1,\ldots,D}$, assuming $\tilde h_\delta$ is a random cause--specific hazard, the induced prior coincides with the distribution of
	\begin{equation*}
		\tilde P_{0,\delta}(0,t)
		=\int_0^t \tilde h_\delta(s) \, \exp\left(-\sum_{\ell=1}^D\int_0^s \tilde h_\ell(u)\, du \right) ds,
	\end{equation*}
	while $\tilde P_{0,0}(0,t)$ is expressed as in \eqref{eq:overall_surv}, replacing each $h_\delta$ with its random counterpart $\tilde h_\delta$.
	For modeling the random hazard rates, we adopt a kernel mixture specification, the most popular approach in Bayesian Nonparametrics. In the univariate case, this strategy was pioneered by \cite{dykstra1981} and \cite{lo1989} using gamma processes, and unraveled in its full potential by \cite{ishwaran2004} and \cite{james2005} for generic kernels and completely random measures as mixing distributions. 
	In our more complex multivariate setup, we consider dependent cause--specific mixture hazard rates by specifying dependent mixing measures.
	In particular, among the various possibilities \citep{quintana2022}, we adopt a hierarchical specification and resort to hierarchical completely random measures \citep{camerlenghi2021}, known for their analytical tractability.
	This dependent  prior is consistent with an assumption of conditional independence of $\tilde h_1,\ldots,\tilde h_D$, given a common random measure $\tilde\mu_0$ at the hierarchy's root, leading to two key advantages: (a) it induces unconditional dependence among the hazard rates; (b) it enables borrowing of information across different causes or types of events $\delta\in\set{1,\ldots,D}$. This specification allows substantial flexibility in modeling the hazard rates without imposing restrictive dependence assumptions.

	We provide a complete picture of the posterior behavior of the proposed model and identify the (conditionally) conjugate prior within our general class. The latter is based on generalized gamma completely random measures, and parallels the role played by the beta process \citep{hjort1990} in nonparametric models for cumulative hazards and by the beta-Stacy process \citep{walker1997} in neutral--to--the--right models. Moreover, we obtain Bayesian estimates of the cause--specific cumulative incidence functions and of the survival function, along with a characterization of the latent random partition induced by the model. 
	Leveraging the partition distribution, we introduce the concept of \textit{prediction curve}, which corresponds to the probability of a future failure or death due to a specific cause, as a function of the event's occurrence time, $t\mapsto \Prob\left(\Delta_{n+1}=\delta\mid T_{n+1}=t,\mbox{ data}\right)$. Prediction curves exhibit a neat form that reflects the composition structure generated by hierarchical discrete random measures. This principled tool is a major strength of our approach; as a preliminary peek at its practical relevance and immediate interpretability, Figure~\ref{fig:prediction_melanoma} displays the estimates of the prediction curves, with 0.95 pointwise credible bands, for the ``melanoma dataset'' analyzed in Section~\ref{supsec:melanoma}. This dataset originates from a clinical study at the Odense University Hospital on stage I melanoma patients: the two competing risks are death from melanoma and from other causes.

	\begin{figure}
		\centering
		\includegraphics[width=0.6\textwidth]{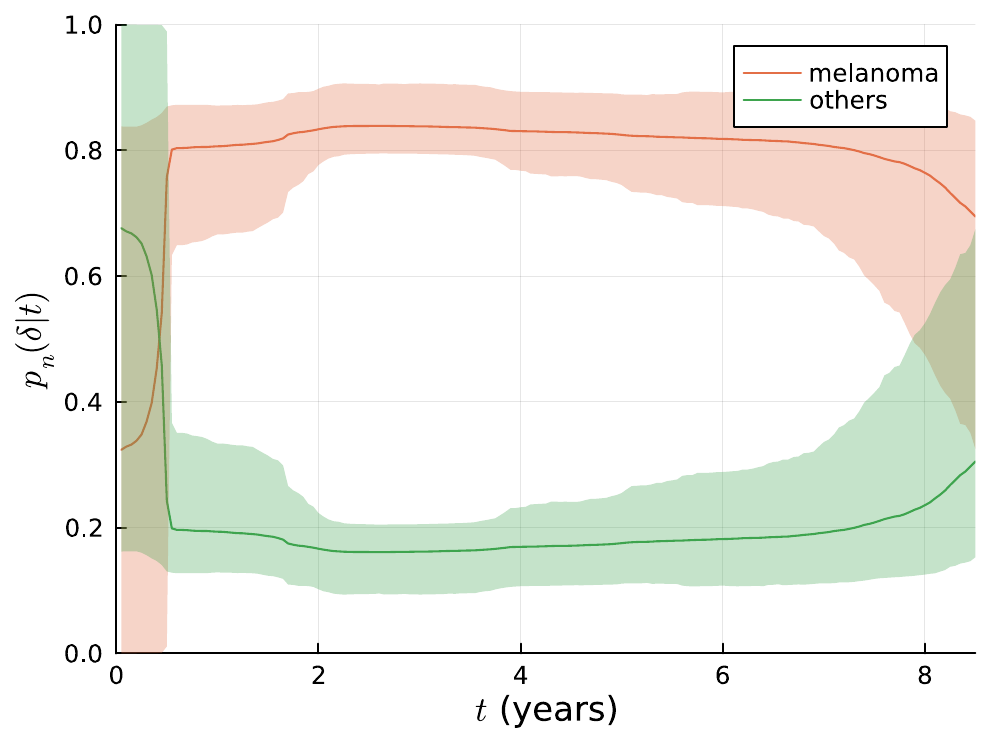}
		\captionsetup{width=0.87\textwidth,font=small}
		\caption{Prediction curves corresponding to melanoma and other causes of death for a future observation $\Delta_{n+1}$, as a function of the time $T_{n+1}=t$ at which the event occurs.} \label{fig:prediction_melanoma} 
	\end{figure}

	\subsection{Outline} 
	
	Section~\ref{sec:mixture_hazard} introduces kernel mixtures with respect to random measures to construct a prior on cause--specific hazard rates, which induces a prior on the space of transition probabilities of the multi--state model. In Section~\ref{sec:hierarchical_CRM}, we define  hierarchical completely random measures, introduce the atoms' marking approach as a key technical tool to derive the main results, and establish consistency results for specific kernel choices.
	Section~\ref{sec:marginal} investigates the latent clustering structure, which is pivotal for predictive inference, particularly in determining prediction curves, and for characterizing the posterior distribution, as detailed in Section~\ref{sec:posterior}. Simulation algorithms for posterior inference are devised in Section~\ref{sec:gibbs}. Finally, in Section~\ref{sec:simulation_study}, we present two simulation studies and a clinical application to evaluate the model's performance and the effectiveness of our samplers.
	References to the Supplementary Material are identified by the prefix `S', e.g.~Section S1. The Julia implementation for the proposed algorithms is available at: \url{github.com/claudiodelsole/CompetingRisks.jl}.
	
	\section{Cause--specific hazards and prediction curves}
	\label{sec:mixture_hazard}
	
	Kernel mixtures represent the most popular Bayesian nonparametric approach for exchangeable single--cause, and possibly right--censored, survival data \citep{dykstra1981,james2005}. 
	In our setting,	such a mixture specification applies to the cause--specific random hazard rates
	\begin{equation} \label{mixture_hazard}
		\tilde h_\delta (t) = \int_\X k(t;x) \, \crm_\delta(dx), \qquad \delta \in \set{1,\dots,D},
	\end{equation}
	where $\crm_1, \dots, \crm_D$ are random measures on $\X$ and $k \colon \R^+ \times \X \mapsto \R^+$ is a deterministic non-negative kernel. The distribution of $\tilde h_\delta$ serves as the prior on the space of hazards corresponding to cause $\delta$. However, when extending from a single--cause to the multivariate setup with multiple competing risks, we would like to specify dependent cause--specific mixture hazards. This is achieved by acting at the level of the mixing measures: we assume $\bm{\crm}=(\crm_1,\ldots,\crm_D)$ to be a vector of dependent random measures with hierarchical dependence structure, which corresponds to a hCRM; see Sections~\ref{supsec:crm} and~\ref{sec:dependent_cmrs}.
	Consequently, the prior on the transition probabilities $P_{0,\delta}$ and $P_{0,0}$ is expressed in terms of $\bm{\crm}$ as the law of $\{\tilde P_{0,\delta}(0,t) \colon t\ge 0\}$ and $\{\tilde P_{0,0}(0,t) \colon t\ge 0\}$, respectively, where
	\begin{gather} 
		\tilde P_{0,\delta}(0,t)
		=\int_0^t \int_\X k(s; x)\,\crm_\delta(dx)\:
		\exp \left( -\sum_{\ell=1}^D
		\int_0^s \int_\X k(u; x)\,\crm_\ell(dx)\,du \right) ds, \nonumber \\
		\tilde P_{0,0}(0,t) =
		\tilde S(t) = \exp \left( - \sum_{\ell=1}^D \int_0^t \int_\X k(s;x) \, \tilde \mu_\ell(dx)\,ds \right). \label{survival}
	\end{gather}
	For increasing hazards, the standard choice is the Dykstra–Laud kernel  $k(t; x) = \gamma(x) \, \indic_{\{t \ge x\}}$, with $\X = \R^+$ and $\gamma(\,\cdot\,)$ a positive right-continuous function, introduced in \cite{dykstra1981}. 
	Typically, $\gamma(x) = \gamma$ for any $x\in\R^+$ is assumed, a simplifying assumption we also adopt; however, results can be easily extended to any choice of  $\gamma(\,\cdot\,)$. Alternative kernel choices leading to different shapes of the hazard rates are the rectangular kernel $k(t; x) = \gamma(x) \,\indic_{\{0 \le t - x\le \tau\}}$, with bandwidth $\tau > 0$, and the Ornstein-Uhlenbeck kernel $k(t; x) = \sqrt{2 \kappa} \, \exp(- \kappa(t-x)) \,\indic_{\{t \ge x\}}$, with rate $\kappa > 0$ \citep{ishwaran2004, peccati2008, deblasi2009}. Further methodological and computational investigations can be found, e.g., in \cite{ishwaran2004, nieto2004, catalano2020}.
	
	We consider the transition probabilities \eqref{survival} to model competing risks data. Specifically, for exchangeable observations $(T_i, \, \Delta_i)$, where $T_i \in \R^+$ is the $i$--th survival time and $\Delta_i \in \set{1, \dots, D}$ is the corresponding event type, our nonparametric model takes the form
	\begin{equation} \label{model}
		(T_1, \Delta_1), \, \dots \, , (T_n, \Delta_n) \mid \bm{\crm} \: \overset{\text{\scriptsize iid}}{\sim} \: \tilde p, \qquad
		\boldsymbol \crm \: \sim \: \mbox{hCRM},
	\end{equation}
	where for any $i=1,\ldots,n$ and $\delta\in\{1,\ldots,D\}$,  
	\begin{equation}
		\label{directing_measure}
		\tilde p(dt, \delta)
		= \tilde f_\delta(t) \,dt =
		\int_\X k(t;x) \, \tilde \mu_\delta(dx) \: \exp \left( - \sum_{\ell=1}^D \int_0^t \int_\X k(s;x) \, \tilde \mu_\ell (dx)\,ds \right) dt.
	\end{equation}
	The prior distribution on $\tilde p$ is thus induced by the distribution of $\bm{\crm}$, characterized in \eqref{hierarchical_crm}. As a consequence, the prior probability of observing an event of type $\delta$, conditionally on $T$ and $\bm{\crm}$, is \begin{equation} \label{eq:relative_hazards}
		\Prob\left(\Delta_i=\delta \mid T_i=t,\,\bm{\crm}\right) \: = \: \frac{ \displaystyle 
		\int_\X k(t;x)\,\tilde\mu_\delta(dx)}{ \displaystyle 
		\sum_{\ell=1}^D \int_\X k(t;x)\,\tilde\mu_\ell(dx)}, \:
	\end{equation}
	for any $\delta\in\set{1,\ldots,D}$ and $t > 0$. 
	In the literature on multi--state models, the ratios of hazards in \eqref{eq:relative_hazards} correspond to the transition probabilities given the transition time and model parameters \citep[Section 2.5]{cooklawless2018}. These are useful for simulation purposes: indeed, the conditional distribution of the event type is multinomial.
	Importantly, from a Bayesian perspective, the function $t\mapsto \Prob\left(\Delta_i=\delta \mid T_i=t \right)$ can be interpreted as the prior predictive probability of observing an event of a given type $\delta$ as the survival time $T_i$ varies. 
	A primary goal of the paper, achieved in Section~\ref{sec:marginal}, is the evaluation of the updated (posterior) function, given the data, which we call the \emph{prediction curve}.
	
	\begin{definition} \label{def:pred_curve}
		The \textit{prediction curve} associated to cause $\delta\in\set{1,\ldots,D}$, for $n\ge 1$, is the function
		\begin{equation}
			\label{eq:predict_curve}
			t \, \mapsto \,  p_n(\delta\mid t) = \Prob \left( \Delta_{n+1}=\delta \mid (T_i, \Delta_i)_{i=1}^n,T_{n+1}=t \right).
		\end{equation}
	\end{definition}
	
	\noindent Hence, conditionally on the observed data, the prediction curve $p_n(\delta \mid t)$ describes the probability of the next event being of type $\delta$ as a function of the survival time $T_{n+1}$. This serves as a valuable tool for evaluating the relative probabilities of the different event types over time.
	
	The rest of the paper develops a comprehensive distributional theory for this class of models including: (i) a characterization of the marginal distribution of data and latent variables, essential for determining the prediction curve \eqref{eq:predict_curve}; (ii) the posterior distribution of $\tilde p$ in \eqref{directing_measure}, enabling Bayesian estimation and uncertainty quantification of both the survival function and cause--specific incidence functions. The fundamental technical tools to achieve these goals are provided in the next section.
	
	\section{Dependent prior specification}
	\label{sec:hierarchical_CRM}

	We construct a tractable and effective dependent prior for $\bm{\crm}=(\crm_1,\ldots,\crm_D)$, inducing a prior on the space of cause-specific hazard rates and transition probabilities, as described in Section~\ref{sec:mixture_hazard}. Specifically, we adopt hierarchical structures, a popular approach for introducing dependence among \textit{completely random measures} (CRMs). Then, we introduce the marked process, investigate some key prior properties, and provide a consistency result. An overview of CRMs is given in Section~\ref{supsec:crm}.
	
	\subsection{Hierarchical completely random measures}
	\label{sec:dependent_cmrs}

	A prominent research stream in Bayesian Nonparametrics focuses on heterogeneous data or latent structures associated with different yet related experimental conditions. These settings can be framed within partial exchangeability, which assumes within--group homogeneity and across--group heterogeneity. Success stories of this approach include topic modeling, ANOVA, change--point detection, reliability analysis, species sampling and numerous biostatistics and bioinformatics applications \citep[see][for a recent review]{quintana2022}. A popular strategy to model heterogeneity nonparametrically leverages vectors of dependent CRMs, as prior dependence among CRMs enables borrowing of information across different groups. Several effective proposals have appeared in the literature. In survival analysis, \cite{lijoi2014a} model each vector component as a superposition of a common and an independent idiosyncratic CRMs; similar additive structures are employed in \cite{muller2004} and \cite{lijoi2014b} to model dependent random probabilities. More complex constructions based on multivariate L\'evy intensities are found in \cite{epifani2010} and \cite{riva2018, riva2021}, relying on L\'evy copulas, and in \cite{griffin2017}, where compound random measures (CoRMs) are introduced. CoRMs represent a general class of dependent random measures encompassing both superpositions of CRMs \citep{lijoi2014a, lijoi2014b} and weighted CRMs \citep{chen2013}. See \cite{riva2022} for an application to survival regression models. Instead, \cite{nipoti2018} define group--specific baseline hazard rates in frailty models employing independent CRMs. Further Bayesian nonparametric constructions of dependent random measures with biological and medical applications, though not directly based on CRMs, can be found in \cite{jara2010,jara2011,soriano2019,cassese2019,christensen2020}.
	
	Unlike contributions above, we employ dependent random measures in an exchangeable framework to jointly model survival times $T_i$ and competing causes $\Delta_i$. Specifically, we focus on hierarchical structures that satisfy a conditional independence assumption, paralleling the standard Bayesian paradigm arising from de~Finetti's representation theorem for exchangeable random elements. A vector of random measures $\boldsymbol \crm =(\crm_1,\ldots,\crm_D)$ has hierarchical structure if $\tilde \mu_1, \dots, \tilde \mu_D \mid \crm_0 \overset{\text{\scriptsize iid}}{\sim} \tilde{\mathcal{M}}_{\crm_0}$ and $\crm_0 \sim \mathcal{M}_0$, where $\tilde{\mathcal{M}}_{\crm_0}$ is the conditional distribution of each $\tilde \mu_\delta$ given the random measure $\crm_0$, which is the hierarchy's root with distribution $\mathcal{M}_0$. 
	The idea of infinite--dimensional hierarchical constructions is due to \cite{teh2006}, who introduced the \emph{hierarchical Dirichlet process}. A general distribution theory for hierarchical random probability measures was developed in \cite{camerlenghi2019}. Instead, we build on the hierarchical construction of CRMs in \cite{camerlenghi2021}, considering a more general formulation.
	
	\begin{definition} \label{def:hCRM}
		Let $\crm_0$ be a CRM on $\X$ with L\'evy intensity  $\nu_0(ds,dx)=\rho_0(ds)\,\Lambda_0(dx)$, for some $\sigma$--finite diffuse $\Lambda_0$ on $\X$, which is termed base measure. A vector of \textit{hierarchical completely random measures} (hCRM) $\bm{\crm}=(\crm_1,\ldots,\crm_D)$ is defined as
		\begin{equation} \label{hierarchical_crm}
			\tilde \mu_1, \dots, \tilde \mu_D \, \mid \ \tilde \mu_0 \ \overset{\text{\scriptsize iid}}{\sim} \, \text{CRM}(\tilde \nu), \qquad 
			\tilde \mu_0  \ \sim  \ \text{CRM}(\nu_0),
		\end{equation}
		where $\tilde \nu(ds,dx) = \rho(ds) \, \tilde \mu_0(dx)$ and $\mbox{CRM}(\nu)$ denotes a CRM with intensity measure $\nu$.
	\end{definition} 

	\noindent Partially exchangeable data are naturally suited for the application of hCRMs, as shown in \cite{camerlenghi2021} for estimating the marginal survival function of each population. In contrast, we apply hCRMs in a seemingly incompatible setup, namely exchangeable survival data with competing risks, as detailed in Section~\ref{sec:mixture_hazard}. 	
	For illustration, we consider the \emph{generalized gamma} hCRM, a large subclass of hCRM arising, for any $\beta, \beta_0 > 0$ and $\sigma, \sigma_0 \in [0,1)$, from the specification 
	\begin{equation} \label{generalized_gamma}
		\rho(ds) = \frac{1}{\Gamma(1-\sigma)} \, s^{-1-\sigma} \, e^{-\beta s} \, ds, \qquad \rho_0(ds) = \frac{1}{\Gamma(1-\sigma_0)} \, s^{-1-\sigma_0} \, e^{-\beta_0 s} \, ds.
	\end{equation}
	Special cases include the \emph{gamma} hCRM, which is retrieved by setting $\sigma = \sigma_0 = 0$, the \emph{$\sigma$-stable} hCRM, characterized by $\beta = \beta_0 = 0$, and \emph{inverse Gaussian} hCRM, when  $\sigma = \sigma_0 = 1/2$.
	
	\subsection{The marked process}
	\label{sec:marked_process}
	
	A key feature of hierarchical constructions is that the random measures $\tilde \mu_1, \dots, \tilde \mu_D$ share the same sequence of locations. Indeed, each measure can be represented as $\crm_\delta=\sum_{h\ge 1} S_{\delta h}\,\delta_{X_{\delta h}^*}$ where, conditionally on $\crm_0$, the atoms $(X_{\delta h}^*)_{h\ge 1}$ are i.i.d.~from a distribution obtained as a transformation of $\crm_0$. Notably, the jumps $S_{\delta h}$ and $S_{\delta \ell}$ are indistinguishable if $X_{\delta h}^*=X_{\delta \ell}^*=X_j^*$, i.e.~if they are associated to the same atom from $\crm_0$: indeed, the actual jump of $\crm_\delta$ at $X_j^*$ depends on $S_{\delta h}+S_{\delta \ell}$, and the individual components cannot be disentangled from their sum. We overcome this issue and considerably simplify the treatment of hCRMs by associating to each jump $S_{\delta h}$ a diffuse independent mark $Z_{\delta h}^*$, which acts as unique label. This is achieved by extending $\tilde \mu_1, \dots, \tilde \mu_D$ to the enlarged space $[0,1] \times \X$ with corresponding augmented intensity $\tilde \nu^e(ds,dz,dx) = \rho(ds) \, H(dz) \, \crm_0(dx)$, where $H$ is an arbitrary diffuse probability measure on $[0,1]$. Consequently, each extended random measure at the bottom of the hierarchy is 
	\begin{equation} \label{hierarchical_crm_sequence}
		\crm^e_\delta = \sum_{h \ge 1} S_{\delta h} \, \delta_{(Z^*_{\delta h}, X^*_{\delta h})}, 
		\qquad \delta \in \set{1, \dots, D},
	\end{equation}
	where $(S_{\delta h})_{h \ge 1}$ and $(Z^*_{\delta h})_{h \ge 1}$ are independent sequences of random jumps and marks, respectively, while $(X^*_{\delta h})_{h \ge 1}$ is a sequence of independent locations whose distribution depends on $\crm_0$. 
	The original $\crm_\delta$ is recovered from $\crm^e_\delta$ by marginalizing over the marks. From a technical perspective, this augmentation device enables a direct and straightforward proof strategy; see Section~\ref{supsec:proofs}. Moreover, the same technique may also simplify arguments in other setups, such as those in e.g.~\cite{camerlenghi2019,camerlenghi2021}.
	
	\subsection{Distributional properties and weak consistency}
	\label{sec:consistency}
	
	Firstly, we investigate the properness of the survival function in \eqref{survival}, namely $\lim_{t \to +\infty} \tilde S(t) = 0$ a.s., which depends on both the kernel and the underlying random measures. Assuming a deterministic and non-negative kernel satisfying $\displaystyle \int_0^t k(s; x)\,ds < \infty$ for any $t \ge 0$ and $x \in \X$, the following result provides sufficient conditions.

	\begin{proposition} \label{prop:survival}
		The survival function $\tilde S$ in \eqref{survival} is proper if either of the following conditions holds: \lesspace
		\begin{itemize}
			\item[{\rm (i)}] the 
			$\crm_1,\ldots,\crm_D$ and $\crm_0$ are infinitely active and $\displaystyle \lim _{t \to \infty} \int_0^t k(s; x) \, ds = \infty$ for every $x \in \X$;
			\item[{\rm (ii)}] the 
			measure $\Lambda_0$ is infinite and $\displaystyle \lim _{t \to \infty} \int_0^t k(s; x) \, ds \ge C$ for every $x \in \X$ and some $C > 0$.
		\end{itemize}
	\end{proposition}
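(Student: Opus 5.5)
We write $K_t(x) := \int_0^t k(s;x)\,ds$, which is non-decreasing in $t$ because $k\ge 0$. By Fubini, the survival function in \eqref{survival} is
\[
\tilde S(t)=\exp\Big(-\sum_{\ell=1}^D \int_\X K_t(x)\,\crm_\ell(dx)\Big).
\]
Since $\tilde S$ is monotone, properness is equivalent to $\sum_\ell \int_\X K_t\,d\crm_\ell \to \infty$ a.s. It suffices to show this for a single component, say $\crm_1$. Moreover, monotone convergence gives $\int K_t\,d\crm_1 \uparrow \int K_\infty\,d\crm_1$, where $K_\infty(x):=\lim_{t\to\infty}K_t(x)$. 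So the whole task reduces to proving $\int_\X K_\infty(x)\,\crm_1(dx)=\infty$ a.s. under (i) or under (ii).

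\textbf{Case (ii).} Here $K_\infty\ge C$, so $\int K_\infty\,d\crm_1\ge C\,\crm_1(\X)$, and it is enough to show $\crm_1(\X)=\infty$ a.s.
\begin{itemize}
\item Conditionally on $\crm_0$, $\crm_1$ is a CRM with intensity $\rho(ds)\,\crm_0(dx)$. Its total mass satisfies
\[
\E\big[e^{-\lambda\crm_1(\X)}\mid\crm_0\big]=\exp\{-\crm_0(\X)\,\psi(\lambda)\},
\qquad \psi(\lambda)=\int(1-e^{-\lambda s})\rho(ds)>0 .
\]
Hence $\crm_1(\X)=\infty$ a.s. on the event $\{\crm_0(\X)=\infty\}$: the Laplace transform vanishes for every $\lambda>0$.
\item Likewise, $\E[e^{-\lambda\crm_0(\X)}]=\exp\{-\Lambda_0(\X)\,\psi_0(\lambda)\}=0$ because $\Lambda_0(\X)=\infty$. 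So $\crm_0(\X)=\infty$ a.s.
\item There is a technical point. To make the Laplace argument rigorous for infinite measures, I would apply it on increasing sets $A_n\uparrow\X$ with $\Lambda_0(A_n)<\infty$ and $\Lambda_0(A_n)\to\infty$. Then $\E[e^{-\lambda\crm_0(A_n)}]\to 0$, and letting $\lambda\downarrow 0$ shows $\crm_0(A_n)\to\infty$ in probability, hence a.s. by monotonicity.
\end{itemize}

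\textbf{Case (i).} Here $K_\infty\equiv\infty$.
\begin{itemize}
\item Infinite activity of $\crm_0$ means $\rho_0(\R^+)=\infty$, which makes $\crm_0$ a.s. nonzero. Then, conditionally, $\crm_1$ has intensity with $\rho(\R^+)\,\crm_0(\X)=\infty$. This yields a.s. infinitely many atoms of $\crm_1$, in particular at least one atom $X^*$ with positive jump $S$.
\item Consequently $\int K_t\,d\crm_1\ge S\,K_t(X^*)\to\infty$ as $t\to\infty$.
\item The main obstacle is ensuring that $\crm_1$ is a.s. nonzero rather than merely nonzero with positive probability. One needs $\Prob(\crm_0=0)=0$; with a nonzero $\Lambda_0$ this follows because the Poisson number of jumps of $\crm_0$ in any set $A$ with $\Lambda_0(A)>0$ has infinite mean. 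The same reasoning then applies conditionally to $\crm_1$.
\end{itemize}

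In both cases $\sum_\ell\int K_t\,d\crm_\ell\to\infty$ a.s., hence $\tilde S(t)\to 0$ a.s., which is the properness claimed in the statement.
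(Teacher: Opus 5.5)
Your proof is correct, but it takes a different route from the paper's.

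\textbf{The paper's route.} The paper works with the aggregated cumulative hazard $\tilde H(t)=\sum_\ell\int_\X K_t(x)\,\crm_\ell(dx)$. It conditions on $\crm_0$ and applies the Laplace functional at both levels of the hierarchy, which gives $\E[e^{-\lambda\tilde H(t)}]=\exp\{-\int_\X\psi_0(D\psi(\lambda K_t(x)))\,\Lambda_0(dx)\}$. From this it derives a necessary and sufficient criterion: $\int_\X\psi_0(D\psi(\lambda K_t(x)))\,\Lambda_0(dx)\to\infty$ for every $\lambda>0$. It then checks (i) by monotone convergence, using $\psi,\psi_0\to\infty$ under infinite activity. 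It checks (ii) via the lower bound $\psi_0(D\psi(\lambda C))\,\Lambda_0(\X)=\infty$.

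\textbf{Your route.} You argue pathwise on a single component. After monotone convergence to $K_\infty$, case (ii) reduces to $\crm_1(\X)=\infty$ a.s. You obtain this by pushing infinite total mass down the hierarchy through total-mass Laplace transforms. Case (i) reduces to $\crm_1\neq0$ a.s., because one atom at a location where $K_\infty=\infty$ already makes the integral diverge.

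\textbf{What each buys.} The paper's route gives an if-and-only-if characterization, which applies to kernels and intensities beyond (i)--(ii). It also exhibits the composite exponent $\psi_0(D\psi(\cdot))$, which reappears in the consistency proof. Your route is more elementary and avoids the continuity-theorem step. It also makes clear what each hypothesis is actually used for: in (i), only $\Prob(\crm_1\neq0)=1$ matters.

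\textbf{One small slip.} In your technical remark on $A_n\uparrow\X$, the limit $\lambda\downarrow0$ plays no role; for fixed $n$ it would send the transform to $1$. For a single fixed $\lambda>0$, monotone convergence gives $\E[e^{-\lambda\crm_0(\X)}]=\lim_n\exp\{-\Lambda_0(A_n)\psi_0(\lambda)\}=0$. This already forces $\crm_0(\X)=\infty$ a.s.
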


	\noindent Specifically, the Dykstra-Laud kernel with generalized gamma hCRMs \eqref{generalized_gamma} satisfies (i), while the rectangular and Ornstein-Uhlenbeck kernels with $\Lambda_0$ proportional to the Lebesgue measure meet (ii). A necessary and sufficient condition, although less transparent, is derived within the proof in Section~\ref{supsec:proofs}.

	As anticipated in Section~\ref{sec:mixture_hazard}, different kernels encode distinct prior properties in the model; e.g. the Dykstra–Laud kernel is suited to distributions with increasing hazards. This naturally motivates an investigation of the prior support under different kernel choices. Let $\mathcal{P}$ be the prior induced by the distribution of the survival function $\tilde S$ in \eqref{survival} on the space of probability densities on $\mathbb{R}^+$. A density $f_0$ belongs to the Kullback-Leibler support of $\mathcal P$ if $\mathcal P$ assigns positive probability to any Kullback--Leibler neighbourhood of $f_0$. Schwartz's Theorem (see \citealp[Chapter 6]{ghosal2017}) then guarantees weak consistency at any ``true'' data--generating density $f_0$ in the Kullback--Leibler support of $\mathcal P$.
	Building on the results for single--cause mixture hazard models in \cite{deblasi2009}, we identify, for different kernel choices, broad classes of hazards $h_0$ such that the corresponding survival density $f_0$ belongs to the Kullback-Leibler support of $\mathcal P$, ensuring weak consistency. For simplicity, consider the generalized gamma hCRM in \eqref{generalized_gamma} with $\Lambda_0$ proportional to the Lebesgue measure. These model assumptions guarantee some technical conditions, but do not compromise the generality of the result; refer to the proof in Section~\ref{supsec:proofs} for details.
	
	\begin{theorem} \label{prop:support}
		Let $\boldsymbol{\crm}$ be the generalized gamma hCRM \eqref{generalized_gamma} and let $\Lambda_0$ be proportional to the Lebesgue measure on $\R^+$. Then, the following hold:
		\begin{itemize}
			\item[{\rm (a)}] \emph{Dykstra--Laud kernel $k(t;x) = \gamma\,\mathds{1}_{\set{t \ge x}}$:} the prior $\mathcal{P}$ is weakly consistent at each $f_0$ such that $\displaystyle \int_0^\infty t^2 f_0(t)\,dt < \infty$, with $h_0(0) = 0$ and $h_0(t)$  strictly positive and non-decreasing for any $t > 0$;
			\item[{\rm (b)}] \emph{Rectangular kernel $k(t;x) = \gamma \,\mathds{1}_{\set{ 0 \le t-x \le \tau}}$:} if the bandwidth $\tau$ is random and its prior assigns positive probability to $[0,L]$ for some $L > 0$, then $\mathcal{P}$ is weakly consistent at each $f_0$ such that $\displaystyle \int_0^\infty t f_0(t)\,dt < \infty$, with $h_0(0) = 0$ and $h_0(t)$ strictly positive, bounded and Lipschitz for any $t > 0$;
			\item[{\rm (c)}] \emph{Ornstein-Uhlenbeck kernel $k(t; x) = \sqrt{2 \kappa} \, \exp(- \kappa(t-x)) \,\mathds{1}_{\set{t \ge x}}$:} the prior $\mathcal{P}$ is weakly consistent at each $f_0$ such that $\displaystyle \int_0^\infty t f_0(t)\,dt < \infty$, with $h_0(0) = 0$ and $h_0(t)$ strictly positive and differentiable for any $t > 0$; also, for any $t > 0$ for which $h_0'(t) < 0$, one has $\vert h_0'(t) \vert /h_0(t) < \kappa \sqrt{2\kappa}$.
		\end{itemize}
	\end{theorem}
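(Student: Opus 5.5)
Our strategy is to reduce the statement to the single--cause results of \cite{deblasi2009} via Schwartz's theorem. The prior $\mathcal P$ on survival densities is driven only by the overall hazard
\[
\tilde h(t)=\sum_{\delta=1}^D \tilde h_\delta(t)=\int_\X k(t;x)\,\bar\mu(dx), \qquad \bar\mu:=\sum_{\delta=1}^D\crm_\delta .
\]
Weak consistency at $f_0$ therefore follows once we show that $f_0$ lies in the Kullback--Leibler support of the law of a mixture hazard model with mixing measure $\bar\mu$.

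\textbf{Step 1 (support of $\bar\mu$).} We show that $\bar\mu$ has full weak support in the relevant sense. This is the property that \cite{deblasi2009} require of the mixing CRM (their conditions are phrased via positivity of small-ball probabilities for $\int_0^t k(s;x)\,\bar\mu(dx)$ around $\int_0^t k(s;x)\,\mu^*(dx)$, for a target measure $\mu^*$).

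Conditionally on $\crm_0$, the measure $\bar\mu$ is a CRM with intensity $\rho^{*D}$-type L\'evy measure and base $\crm_0$. Equivalently, it is a CRM with intensity $\rho_D(ds)\,\crm_0(dx)$, where $\rho_D$ is the Laplace-exponent sum, i.e.\ $\psi_{\rho_D}=D\psi_\rho$. For the generalized gamma specification \eqref{generalized_gamma}, this is again a generalized gamma L\'evy measure. Moreover, since $\Lambda_0$ is proportional to the Lebesgue measure and $\rho_0$ is infinitely active, $\crm_0$ has a.s.\ dense support on $\R^+$.

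We then argue that, for any finite measure $\mu^*$ on $[0,M]$ and any $\epsilon>0$, the event $\{\bar\mu\text{ is }\epsilon\text{-close to }\mu^*\text{ on }[0,M]\}$ has positive probability. To see this, first condition on $\crm_0$ having atoms in each cell of a fine partition of $[0,M]$, which has positive probability. Then, using the marked representation \eqref{hierarchical_crm_sequence} and independence across $\delta$, make the jumps of $\crm_1$ approximate the masses of $\mu^*$ on each cell and make the remaining mass of all $\crm_\delta$ small. Each of these events has positive probability because $\rho$ has full support on $(0,\infty)$ and the total mass of a generalized gamma CRM on a bounded set can be made arbitrarily small with positive probability.

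\textbf{Step 2 (kernel-specific approximation).} Given Step 1, we reproduce the arguments of \cite{deblasi2009} kernel by kernel. In each case we construct $\mu^*$ such that $h^*(t)=\int k(t;x)\,\mu^*(dx)$ approximates $h_0$ uniformly on $[0,M]$, and we control the tail $\int_M^\infty$ of the KL divergence through the moment conditions.

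\emph{(a) Dykstra--Laud.} Take $\mu^*(dx)=\gamma^{-1}\,dh_0(x)$, which is valid because $h_0$ is non-decreasing with $h_0(0)=0$. The KL divergence is bounded by terms involving $\int |\log(h_0/\tilde h)|\,f_0$ and $\int |\tilde H-H_0|\,f_0$. The cumulative-hazard term grows at most like $t^2$ under the approximation, which explains the second-moment assumption.

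\emph{(b) Rectangular kernel.} We use that $\tau^{-1}\int_{t-\tau}^{t}h_0$ approximates a Lipschitz $h_0$ within $O(\tau)$, so we take $\mu^*(dx)=(\gamma\tau)^{-1}h_0(x)\,dx$ with $\tau$ small. This is where positive prior mass of $\tau$ near $0$ is used. Boundedness of $h_0$ gives linear growth of the cumulative hazard, hence the first-moment condition.

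\emph{(c) Ornstein--Uhlenbeck kernel.} Solve $h_0=k*\mu^*$ exactly, which gives $\mu^*(dx)=(2\kappa)^{-1/2}\bigl(h_0'(x)+\kappa h_0(x)\bigr)dx$. This is a positive measure precisely when the stated derivative condition holds (strictly, allowing approximation). Since $h_0(0)=0$, there is no atom at $0$.

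\textbf{Step 3 and main obstacle.} Finally, we combine the pieces: the KL neighbourhood contains the event of Step 1 intersected with a tail-control event, and Schwartz's theorem then yields weak consistency. We expect the main obstacle to be Step 1, namely showing that the hierarchical, and hence non-CRM, law of $\bar\mu$ still satisfies the small-ball positivity used in \cite{deblasi2009}, uniformly enough to handle the tail control. The marking device and conditioning on $\crm_0$ are what we would rely on to make this work.
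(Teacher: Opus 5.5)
\textbf{Main gap: the small-time condition.} Your proposal misses a hypothesis, and this is exactly where the paper's proof does its real work. The results of \cite{deblasi2009} you want to reproduce (their Theorems 4--6) need more than positivity of small-ball probabilities for the hazard. They also assume the small-time condition $\liminf_{t\to 0}\bar\mu((0,t])/t^r=\infty$ a.s.\ for some $r>0$.

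Your own KL bound shows why this is needed. You control $\int f_0|\tilde H-H_0|$ and the tail beyond $M$, but nothing controls $\int_0^a f_0\log(h_0/\tilde h)$ near the origin. For all three kernels $\tilde h(0)=h_0(0)=0$. An event such as $\sup_{t\le M}|\tilde h(t)-h_0(t)|<\epsilon$ therefore says nothing about the ratio $h_0/\tilde h$ as $t\downarrow 0$. If $\bar\mu((0,t])$ vanishes faster than every power, this term need not be small, or even finite.

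The paper closes this in three steps:
\begin{itemize}
\item it computes the Laplace exponent $\psi_0(D\psi(\cdot))$ of $\bar\mu$;
\item it checks that $\int_0^1 s^\varepsilon\rho_{\rm sum}(ds)=\infty$ for $0<\varepsilon<\sigma$;
\item it applies Proposition 47.18 of \cite{sato1999}, which gives a lower function of order $t^{1/(\sigma\sigma_0)}$ up to $\log\log$ factors, so any $r>1/(\sigma\sigma_0)$ works.
\end{itemize}
This is not automatic, because it depends on the small-jump behaviour of the summed L\'evy measure. The argument uses $\sigma,\sigma_0>0$; a gamma subordinator, for instance, is super-polynomially small near $0$ and violates the condition.

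\textbf{The obstacle you single out does not exist.} It rests on a false premise: $\bar\mu$ is a CRM. For disjoint $A_1,\dots,A_m$, conditionally on $\crm_0$ the $\bar\mu(A_i)$ are independent with laws depending only on $\crm_0(A_i)$, and these are themselves independent. Indeed $\E[\exp(-\int f\,d\bar\mu)]=\exp\{-\int_\X\psi_0(D\psi(f(x)))\,\Lambda_0(dx)\}$, the same computation already used for properness of $\tilde S$. So $\bar\mu$ is unconditionally a homogeneous, infinitely active CRM with diffuse jump measure and full-support base measure $\Lambda_0$; the paper cites Theorem 1 of \cite{catalano2025}.

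This puts you directly in the framework of \cite{deblasi2009}. Their small-ball and tail arguments apply as they stand, so your Step 1 and the kernel-by-kernel Step 2 are unnecessary. Their Remark 1 turns $\int\E[\tilde H(t)]f_0(t)\,dt<\infty$ into the stated moment conditions.

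\textbf{The threshold in (c).} Your inversion $\mu^*(dx)=(2\kappa)^{-1/2}(h_0'+\kappa h_0)\,dx$ is correct. However, it is nonnegative iff $-h_0'/h_0\le\kappa$, which coincides with the stated threshold $\kappa\sqrt{2\kappa}$ only when $\kappa=1/2$. The paper imports $\kappa\sqrt{2\kappa}$ from Theorem 6 of \cite{deblasi2009} without re-deriving it. Since $t\mapsto e^{\kappa t}\tilde h(t)$ is nondecreasing for every OU mixture, hazards decaying locally faster than rate $\kappa$ cannot be approximated. Your $\kappa$ therefore looks like the correct threshold, and you should flag the discrepancy rather than claim agreement.
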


	\noindent The class of hazard functions included in the Kullback-Leibler support of $\mathcal{P}$, for which weak consistency holds, depends essentially on the choice of the kernel. The rectangular kernel guarantees consistency for basically any bounded Lipschitz hazard, while the Ornstein-Uhlenbeck kernel covers any differentiable hazard subject to a certain upper bound on the local exponential decay. In contrast, the Dykstra-Laud kernel encompasses any non-decreasing hazard satisfying mild conditions. Remarkably, Theorem~\ref{prop:support} remains valid in presence of independent right--censoring, provided the censoring distribution has full support \citep[see][]{deblasi2009}. An empirical confirmation of this consistency result for the Dykstra-Laud kernel is presented in Section~\ref{supsec:consistency}.
	
	Finally, the hierarchical structure in \eqref{hierarchical_crm}, combined with the conditional probability of cause $\Delta_i$ given the survival time $T_i$ in \eqref{eq:relative_hazards}, implies a discrete uniform prior predictive for $\Delta_i$, that is $\Prob\left(\Delta_i=\delta \mid T_i=t\right) = 1/D$ for $\delta \in \set{1,\ldots,D}$ and any $t > 0$. However, this uniformity can be easily removed by assuming either (i) $\crm_1,\ldots,\crm_D$ conditionally independent but not identically distributed, or (ii) distinct kernels for each event type. Closed--form expressions for the prior predictives $\Prob\left(\Delta_i=\delta \mid T_i=t\right)$ are typically unavailable in these cases, but they can be approximated via numerical integration. The results in the following sections extend naturally to this more general setting, with the only drawback being the added notational complexity.
	
	\section{Latent clustering and prior predictive structures}
	\label{sec:marginal}

	The nonparametric mixture hazards model \eqref{model}--\eqref{directing_measure} induces a random partition, with a nested structure, of latent random elements. This is somewhat reminiscent of random partitions induced by dependent discrete random probability measures, often sequentially represented using the popular \textit{Chinese restaurant franchise} metaphor \citep{teh2006}. Yet, the random partition here is fundamentally different, as it arises from random measures rather than random probability measures: this requires both a distinct analytical treatment and a novel interpretation. 
	To study the distributional properties of the random partition associated to \eqref{model}, we rely on a novel latent structure that facilitates analytical derivations. These latent variables are also pivotal for the posterior characterization in Section~\ref{sec:posterior}.
	The multiplicative intensity likelihood \citep{kalbfleisch2002} for our model is the product of the directing random probability  \eqref{directing_measure} evaluated at the observations, and is expressed as
	\begin{equation} 
		\label{beginning_likelihood}
		\mathcal{L}(\bm{\crm}; \boldsymbol T_{1:n}, \boldsymbol \Delta_{1:n}) 
		= \prod_{i=1}^n \left[ \int_\X k(T_i;x_i) \, \tilde \mu_{\Delta_i}(dx_i) \: \exp \left( - \sum_{\ell=1}^D \int_0^{T_i} \int_\X k(s;y) \, \tilde \mu_\ell (dy)\,ds \right) \right].
	\end{equation}
	Throughout $\bm{x}_{1:n}=(x_1,\ldots,x_n)$ stands for vectors of dimension $n$, for any $n\ge 1$. We assume the observed survival times to be exact to avoid notational burden, although the inclusion of censored data is straightforward; see Section~\ref{sec:applications} and e.g.~\cite{james2005, lijoi2014a}. The likelihood \eqref{beginning_likelihood} can be equivalently reparameterized in terms of $\bm{\crm^e}=(\crm^e_1, \dots, \crm^e_D)$: 
	\begin{multline*}
		\mathcal{L}(\bm{\crm^e} ; \boldsymbol T_{1:n}, \boldsymbol \Delta_{1:n}) 
		\\
		= \prod_{i=1}^n \left[\int_{[0,1] \times \X} k(T_i;x_i) \, \tilde \mu^e_{\Delta_i}(dz_i,dx_i) \: \exp \left( - \sum_{\ell=1}^D \int_0^{T_i} \int_{[0,1] \times \X} k(s;y) \, \tilde \mu^e_\ell (dw,dy)\,ds \right)
		\right].
	\end{multline*}
	Standard approaches for the Bayesian analysis of such type of likelihoods rely on suitable sequences of latent random elements  $\boldsymbol X_{1:n} = (X_1, \dots, X_n)$, which essentially correspond to the (latent) locations sampled from the mixing measures and allow to remove the integration on $\X$. See, e.g., \cite{james2005}, \cite{lijoi2014a} and \cite{camerlenghi2021}.
	In contrast, we resort here to a double augmentation: in addition to $\boldsymbol X_{1:n}$, we introduce $[0,1]$--valued latent r.v.~$\boldsymbol Z_{1:n} = (Z_1, \dots, Z_n)$, leading to the following doubly--augmented likelihood
	\begin{multline*}
		\mathcal{L}(\bm{\crm^e}; \boldsymbol T_{1:n}, \boldsymbol \Delta_{1:n}, \boldsymbol X_{1:n}, \boldsymbol Z_{1:n}) \\
		= \prod_{i=1}^n \left[ k(T_i;X_i) \, \tilde \mu^e_{\Delta_i}(dZ_i,dX_i) \:
		\exp \left( - \sum_{\ell=1}^D \int_0^{T_i} \int_{[0,1] \times \X} k(s;x) \, \tilde \mu^e_\ell (dz,dx)\,ds \right) \right].
	\end{multline*}
	The a.s.~discreteness of $\bm{\crm^e}$ naturally induces a random nested partition structure, which is identified by the tied values displayed by the latent sequences $\boldsymbol X_{1:n}$ and $\boldsymbol Z_{1:n}$. Specifically, the coarser partition is induced by elements in $\boldsymbol X_{1:n}$ and is denoted as
	\begin{equation*}
		\Pi_{\boldsymbol X_{1:n}}=\set{C_{j}^X \colon j=1,\ldots,k}, \qquad \mbox{with } \: C_{j}^X=\set{i \colon X_{i}=X^*_{j}},
	\end{equation*}
	for some $k\in\set{1,\ldots,n}$. Hence,  $n_j=\mbox{card}(C_j^X)$ is the frequency of elements in $\bm{X}_{1:n}$ coinciding with the $j$-th distinct value $X_j^*$, under the constraint $n_1 + \cdots + n_k = n$. The series representation \eqref{hierarchical_crm_sequence} of each $\crm_\delta^e$ entails that these distinct values can be regarded as the latent locations of $\crm_0$ at the root of the hierarchy. Since locations are shared among the $\crm_\delta$'s at the bottom of the hierarchy, the same location can be associated to different competing events. Let $C_{\delta j}^X=\set{i \colon (\Delta_i,X_i)=(\delta,X_j^*)}$, so that $n_{\delta j}=\mbox{card}(C_{\delta j}^X)$ stands for the frequency of observations such that $\Delta_i=\delta$ and $X_i=X_j^*$, with $n_{1j}+\,\cdots\,+n_{Dj}=n_j$. For each $j\in\set{1,\ldots,k}$, one may have $n_{\delta j} \ge 1$ and $n_{\delta'j}\ge 1$, which corresponds to events of type $\delta$ and $\delta'$ sharing the same latent location $X^*_j$. In view of this, $X_i = X_\ell$ does not necessarily imply $\Delta_i = \Delta_\ell$. Therefore, there is no structural relationship between the random latent partition $\set{C_{j}^X \colon j=1,\ldots,k}$ and the observed partition $\set{C_\delta \colon \delta=1,\ldots,D}$ encoded in $\boldsymbol \Delta_{1:n}$, where $C_\delta=\set{i \colon \Delta_i=\delta}$. 
	On the contrary, this relationship is accounted for at the finer level of the partition, induced by elements in $\boldsymbol Z_{1:n}$ and denoted as
	\begin{equation*}
		\Pi_{\bm{Z}_{1:n}}=\set{C_{\delta jh}^Z \colon \delta=1,\ldots,D;\, j=1,\ldots,k;\, h=1,\ldots,r_{\delta j} },
	\end{equation*}
	with $C_{\delta jh}^Z=\set{i\in C_{\delta j} \colon Z_{i}=Z^*_{\delta jh}}$. This means that, for any $\delta$ and $j$ such that $n_{\delta j}\ge 1$, the $n_{\delta j}$ observations for which $\Delta_i = \delta$ and $X_i = X_j^*$ are further partitioned into $r_{\delta j}$ clusters according to the distinct values $Z_{\delta j1}^*, \dots, Z_{\delta jr_{\delta j}}^*$ in $\boldsymbol Z_{1:n}$; clearly, if $q_{\delta jh}=\mbox{card}(C_{\delta jh}^Z)$, one has $q_{\delta j1} + \cdots + q_{\delta jr_{\delta j}} = n_{\delta j}$. The representation in \eqref{hierarchical_crm_sequence} suggests a straightforward interpretation of these distinct $Z_{\delta j1}^*, \dots, Z_{\delta jr_{\delta j}}^*$ as the latent marks from the random measure $\crm^e_\delta$ that are paired with the same location $X_j^*$ from $\crm_0$. Hence, $r_j=r_{1j}+\,\cdots\,+r_{Dj}$ is the total number of different marks across random measures $\crm^e_1, \dots, \crm^e_D$ associated with the same location $X_j^*$, for any latent group $C_j^X$. From these definitions, it is apparent that the partitions $\Pi_{\bm{X}_{1:n}}$ and $\Pi_{\bm{Z}_{1:n}}$ are nested. Specifically, if two observations $T_i$ and $T_\ell$ are tied in their marks, i.e.~$Z_i = Z_\ell$, they must also share the same location, i.e.~$X_i = X_\ell$; the converse is not true. This description of the latent partition 
	allows to express the augmented likelihood as
	\begin{align} \label{likelihood}
		\mathcal{L}(\bm{\crm^e}; \mathcal{F}^*_n)
		& = Q_n(\boldsymbol T_{1:n}, \boldsymbol X_{1:n})\:
		\exp \left( - \sum_{\ell=1}^D \int_{[0,1] \times \X} K_n(x) \, \tilde \mu^e_\ell(dz,dx) \right) \nonumber \\
		& \qquad \qquad \times \prod_{\ell=1}^D \prod_{j=1}^k \prod_{h=1}^{r_{\ell j}} \tilde \mu^e_\ell(dZ^*_{\ell jh},dX^*_j)^{q_{\ell jh}},
	\end{align}
	where, in order to ease notation, we have set $\mathcal{F}^*_n=(\bm{T}_{1:n}, \boldsymbol \Delta_{1:n}, \Pi_{\bm{X}_{1:n}}, \Pi_{\bm{Z}_{1:n}}, \boldsymbol X^*, \bm{Z}^*)$, with $\bm{X}^* = (X^*_1, \ldots, X^*_k)$ and $\bm{Z}^* = (Z_{\delta jh})_{\delta jh}$, and \cut
	\begin{equation} 
		\label{kernel_quantities}
		Q_n(\boldsymbol T_{1:n}, \boldsymbol X_{1:n}) 
		= \prod_{j=1}^k \, \prod_{i\in C_j^X} k(T_i;X^*_j), \qquad
		K_n(x) = K_n(x; \boldsymbol T_{1:n}) 
		= \sum_{i=1}^n \int_0^{T_i} k(s;x)\,ds;
	\end{equation}
	for details on the derivation of \eqref{likelihood} from \eqref{beginning_likelihood}, see Section~\ref{supsec:proofs}.
	The expression \eqref{likelihood} neatly singles out the individual contributions to the likelihood due to the observed survival times and event types, and to the sequences of latent variables. 
	Specifically, the survival times $\boldsymbol T_{1:n}$ impact the likelihood through $Q_n(\boldsymbol T_{1:n}, \boldsymbol X_{1:n})$ and $K_n(x)$. On the other hand, the event types $\boldsymbol \Delta_{1:n}$ impose constraints on the possible configurations of the nested partition structure, which is encoded into the multiplicities $(q_{\delta jh})_{\delta jh}$ at the finer $\bm{Z}$-level, and $(r_{\delta j})_{\delta j}$ at the coarser $\bm{X}$-level.
	
	The joint distribution of the observations $(\bm{T}_{1:n},\bm{\Delta}_{1:n})$ and the latent variables $(\bm{X}_{1:n},\bm{Z}_{1:n})$ is obtained by integrating out $\bm{\crm^e}$ from the likelihood. To this end, let $\psi$ and $\tau$ be the Laplace exponent and cumulants of the random measures $\crm_1, \dots, \crm_D$ at the bottom of the hierarchy, namely
	\begin{equation} \label{cumulants}
		\psi(u) = \int_{\R^+} (1-e^{-us}) \, \rho(ds), \qquad \tau(m; u) = \int_{\R^+} s^m \, e^{-us} \, \rho(ds).
	\end{equation}
	The quantities $\psi_0$ and $\tau_0$ for $\crm_0$ at the hierarchy's root are obtained by replacing $\rho$ with $\rho_0$.
	
	\begin{theorem} 
		\label{prop:marginal}
		The joint marginal distribution of the observations $(\boldsymbol T_{1:n}, \boldsymbol \Delta_{1:n})$ and the latent variables $(\boldsymbol X_{1:n}, \boldsymbol Z_{1:n})$, that is, the marginal distribution of $\mathcal{F}^*_n$, is given by
		\begin{multline*}
			\Prob (\mathcal{F}^*_n) = Q_n(\boldsymbol T_{1:n}, \boldsymbol X_{1:n}) \, \exp \left( - \int_\X \psi_0 \big( D \psi(K_n(x)) \big) \, \Lambda_0(dx) \right) \\
			\times \prod_{j=1}^k \left( \prod_{\ell=1}^D \prod_{h=1}^{r_{\ell j}} \tau(q_{\ell jh}; K_n(X_j^*)) \, H(dZ_{\ell jh}^*) \right) \tau_0 \big( r_j ; D \psi(K_n(X_j^*)) \big) \, \Lambda_0(dX_j^*).
		\end{multline*}
	\end{theorem}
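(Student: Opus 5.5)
We will obtain $\Prob(\mathcal{F}^*_n)$ by integrating the augmented likelihood \eqref{likelihood} over the hCRM in two stages. First we condition on $\crm_0$ and integrate out $\crm^e_1,\dots,\crm^e_D$. Then we integrate out $\crm_0$. At each stage the tool is the standard moment/Laplace-functional identity for CRMs:
\[
\E\Big[e^{-\int f\,d\crm}\prod_i \crm(dy_i)^{m_i}\Big] = e^{-\int \psi_f\,d\Lambda}\prod_i \tau(m_i; f(y_i))\,\Lambda(dy_i),
\]
where $\crm$ is a CRM with intensity $\rho(ds)\,\Lambda(dy)$, the $y_i$ are pairwise distinct, and $\psi_f(y)=\int(1-e^{-sf(y)})\,\rho(ds)$. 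This follows from the Palm/Mecke formula, or equivalently from differentiating the Laplace functional of a Poisson random measure and keeping only the terms in which the infinitesimal sets $dy_i$ carry a single atom.

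\textbf{Stage 1 (bottom level).} Given $\crm_0$, the $\crm^e_\ell$ are i.i.d.\ CRMs on $[0,1]\times\X$ with intensity $\rho(ds)\,H(dz)\,\crm_0(dx)$. The likelihood factorises over $\ell$, and the exponent involves the function $(z,x)\mapsto K_n(x)$. The points $(Z^*_{\ell jh},X^*_j)$ are pairwise distinct for each fixed $\ell$, and this is exactly where the diffuse marks are needed: without them, distinct clusters sharing a location $X^*_j$ could not be separated. The identity above therefore applies to each $\ell$ and gives
\[
\exp\Big(-\int_\X \psi(K_n(x))\,\crm_0(dx)\Big)\prod_{j}\prod_h \tau(q_{\ell jh};K_n(X^*_j))\,H(dZ^*_{\ell jh})\,\crm_0(dX^*_j).
\]
Here the $H$-integral is trivial because $K_n$ does not depend on $z$. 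Taking the product over $\ell$ gives the exponent $-\int_\X D\psi(K_n)\,d\crm_0$ and the factor $\prod_j \crm_0(dX^*_j)^{r_j}$, since location $X^*_j$ appears once for each of the $r_j=\sum_\ell r_{\ell j}$ distinct marks.

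\textbf{Stage 2 (root).} We take the expectation with respect to $\crm_0$, a CRM with intensity $\rho_0(ds)\,\Lambda_0(dx)$, of
\[
e^{-\int D\psi(K_n)\,d\crm_0}\prod_j\crm_0(dX^*_j)^{r_j}.
\]
The $X^*_j$ are distinct by definition, so the same identity applied with $f=D\psi(K_n)$ and exponents $m_j=r_j$ yields
\[
\exp\Big(-\int_\X\psi_0(D\psi(K_n(x)))\,\Lambda_0(dx)\Big)\prod_j\tau_0\big(r_j;D\psi(K_n(X^*_j))\big)\,\Lambda_0(dX^*_j).
\]
Multiplying by the deterministic factor $Q_n$ and the Stage 1 products gives the stated formula.

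\textbf{Main obstacle.} The delicate step is the rigorous justification of the moment identity with powers $\crm(dy)^{m}$ evaluated at random atoms. In particular, one must show that configurations in which two distinct latent labels fall on the same atom of $\crm^e_\ell$ or $\crm_0$ contribute zero. We would handle this by working with disjoint small neighbourhoods $A_i\ni y_i$, computing
\[
\E\Big[e^{-\int f\,d\crm}\prod_i\crm(A_i)^{m_i}\Big]
\]
by differentiating the Laplace functional $\E\big[e^{-\int f\,d\crm-\sum_i\lambda_i\crm(A_i)}\big]$ with respect to the $\lambda_i$, and then letting $A_i\downarrow\{y_i\}$. Diffuseness of $H$ and $\Lambda_0$ ensures that only the leading terms survive in the limit. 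A secondary check is that the expression \eqref{likelihood} is already written in terms of distinct values, which is the content of the derivation from \eqref{beginning_likelihood} referenced in the paper.
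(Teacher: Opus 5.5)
Your proposal is correct and follows the paper's proof essentially step by step. The paper also conditions on $\tilde\mu_0$ and applies its moment--Laplace identity (the lemma in Section S2) to each $\tilde\mu^e_\ell$ with base measure $H(dz)\,\tilde\mu_0(dx)$, which is diffuse thanks to the marks; it then collects $D\psi(K_n)$ and $\tilde\mu_0(dX_j^*)^{r_j}$, and reapplies the identity at the root with $f=D\psi(K_n)$. Your suggested justification, differentiating the Laplace functional on shrinking disjoint neighbourhoods, is exactly how the paper proves that lemma.
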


	\noindent Note that, conditionally on $\Pi_{\bm{Z}_{1:n}}$, the distinct values $\bm{Z}^*$ can be integrated out, since they are iid from the diffuse probability  $H$, and hence have no impact on the latent partition structure. In contrast, the random partition $\Pi_{\bm{X}_{1:n}}$ and the distinct values $\boldsymbol X^*$ are not independent and both impact the marginal distribution. Additionally, the distinct values are tied to the observed survival times through $Q_n(\boldsymbol T_{1:n}, \boldsymbol X_{1:n})$. Hence, marginalizing with respect to $\bm{Z}^*$, Theorem~\ref{prop:marginal} can be restated as \lesspace
	\begin{multline} \label{marginal}
		\Prob (\mathcal{F}_n) = 
		Q_n(\boldsymbol T_{1:n}, \boldsymbol X_{1:n}) \, \exp \left( - \int_\X \psi_0 \big( D \, \psi(K_n(x)) \big) \, \Lambda_0(dx) \right) \\
		\times \prod_{j=1}^k \left( \prod_{\ell =1}^D \prod_{h=1}^{r_{\ell j}} \tau(q_{\ell jh}; K_n(X_j^*)) \right) \tau_0 \big( r_j ; D \, \psi(K_n(X_j^*)) \big) \, \Lambda_0(dX_j^*),
	\end{multline}
	where $\mathcal{F}_n = (\bm{T}_{1:n}, \boldsymbol \Delta_{1:n}, \Pi_{\bm{X}_{1:n}}, \Pi_{\bm{Z}_{1:n}}, \boldsymbol X^*)$ is obtained from $\mathcal{F}^*_n$ by removing the distinct values $\bm{Z}^*$.
	This expression serves as the starting point for deriving the computational results in Section~\ref{sec:gibbs}: indeed, a marginal Gibbs sampling scheme is obtained from the full conditional distributions derived from \eqref{marginal}.
	
	As an important by--product of Theorem~\ref{prop:marginal}, we determine the predictive distribution for the event type associated with the $(n+1)$-th observation, that is the predictive distribution of $\Delta_{n+1}$, given the survival time $T_{n+1}$ and $\mathcal{F}_n$. 
	
	\begin{theorem} \label{prop:predict_competing}
		The predictive distribution of $\Delta_{n+1}$, conditional on $T_{n+1}=t$ and on $\mathcal{F}_n$, has probability mass function equal to
		\begin{multline}
			\label{eq:predict_competing}
			p(\delta \mid t, \mathcal{F}_n)
			\propto \sum_{j=1}^k k(t;X_j^*)\, \sum_{h=1}^{r_{\delta j}}\frac{\tau(q_{\delta jh}+1;K_{n+1}(X_j^*))}
			{\tau(q_{\delta jh};K_{n+1}(X_j^*))} \\
			+ \sum_{j=1}^k k(t;X_j^*)\, \tau(1;K_{n+1}(X_j^*))\,\frac{\tau_0(r_j+1;D\psi(K_{n+1}(X_j^*)))}
			{\tau_0(r_j;D\psi(K_{n+1}(X_j^*)))} \\
			+ \int_\X k(t;x)\,\tau(1;K_{n+1}(x))\,\tau_0(1;D\psi(K_{n+1}(x)))\, \Lambda_0(dx), 
		\end{multline}
		for any $\delta\in\{1,\ldots,D\}$, where $ \displaystyle K_{n+1}(x)=K_n(x)+\int_0^t k(s; x)\,ds$, since $T_{n+1}=t$. 		
	\end{theorem}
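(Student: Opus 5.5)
The plan is to obtain the predictive probability as a ratio of marginal distributions from Theorem~\ref{prop:marginal}. By Bayes' rule, $p(\delta\mid t,\mathcal{F}_n)\propto \sum \Prob(\mathcal{F}_{n+1})$, where $\mathcal{F}_{n+1}$ extends $\mathcal{F}_n$ with $T_{n+1}=t$, $\Delta_{n+1}=\delta$ and the latent pair $(X_{n+1},Z_{n+1})$. The sum runs over all admissible ways of allocating the new observation within the nested partition. The normalising factor, which is $\Prob(\mathcal{F}_n,T_{n+1}=t)$, does not depend on $\delta$ and can therefore be dropped. I will work with the version \eqref{marginal}, in which $\bm Z^*$ has been marginalised. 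The only care needed is that, when the new observation opens a new mark, the factor $H(dZ^*)$ integrates to one.

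Every quantity in $\Prob(\mathcal{F}_{n+1})$ is evaluated at $K_{n+1}(x)=K_n(x)+\int_0^t k(s;x)\,ds$. Note that $\Prob(\mathcal{F}_n,T_{n+1}=t)$, as a function of $\mathcal{F}_n$, also carries $K_{n+1}$ in its exponential term, because $T_{n+1}$ enters through the survival part of the likelihood. So the natural reference quantity is
\[
R(\mathcal{F}_n;t)=Q_n\,\exp\Bigl(-\int_\X\psi_0\bigl(D\psi(K_{n+1}(x))\bigr)\Lambda_0(dx)\Bigr)\prod_j\Bigl(\prod_{\ell,h}\tau(q_{\ell jh};K_{n+1}(X_j^*))\Bigr)\tau_0\bigl(r_j;D\psi(K_{n+1}(X_j^*))\bigr)\Lambda_0(dX_j^*).
\]
This is $\delta$-free. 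I will divide each term of $\Prob(\mathcal{F}_{n+1})$ by it. The common exponential factor cancels, and $Q_{n+1}=Q_n\,k(t;X_{n+1})$.

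Next I enumerate the three mutually exclusive cases for the new observation.

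\textbf{Case (a): existing location and existing mark.} Here $X_{n+1}=X_j^*$ and $Z_{n+1}=Z^*_{\delta jh}$ for some $h\le r_{\delta j}$. Only $q_{\delta jh}$ increases by one, so the ratio is
\[
k(t;X_j^*)\,\frac{\tau(q_{\delta jh}+1;K_{n+1}(X_j^*))}{\tau(q_{\delta jh};K_{n+1}(X_j^*))}.
\]
Summing over $j$ and $h$ gives the first line of \eqref{eq:predict_competing}.

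\textbf{Case (b): existing location and new mark in $\crm^e_\delta$.} Now $r_{\delta j}$, and hence $r_j$, increases by one, and a new factor $\tau(1;K_{n+1}(X_j^*))$ appears. The ratio is
\[
k(t;X_j^*)\,\tau(1;K_{n+1}(X_j^*))\,\frac{\tau_0\bigl(r_j+1;D\psi(K_{n+1}(X_j^*))\bigr)}{\tau_0\bigl(r_j;D\psi(K_{n+1}(X_j^*))\bigr)},
\]
which gives the second line. This term does not depend on $\delta$, but it must still be kept because the normalisation runs over $\delta$.

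\textbf{Case (c): new location.} A new $X^*_{k+1}=x$ appears with $r_{k+1}=1$ and $q=1$. The ratio is
\[
k(t;x)\,\tau(1;K_{n+1}(x))\,\tau_0\bigl(1;D\psi(K_{n+1}(x))\bigr)\Lambda_0(dx),
\]
and integrating over $x$ gives the third line.

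The main subtlety is bookkeeping rather than analysis. I must check that conditioning on $\mathcal{F}_n$ really leaves a $\delta$-free normaliser once all terms are evaluated at $K_{n+1}$ rather than $K_n$. I must also check that the diffuseness of $H$ and $\Lambda_0$ makes the three cases exhaustive: a new mark almost surely differs from existing marks, and a new location almost surely differs from the existing $X_j^*$. Both points follow directly from the structure of \eqref{marginal}. Collecting the three contributions yields \eqref{eq:predict_competing}.
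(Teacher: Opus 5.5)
Your proposal is correct and follows the paper's proof essentially step by step. Both arguments write the predictive as the ratio of the $(n+1)$- and $n$-sample joint marginals from Theorem~\ref{prop:marginal}, split the numerator into the same three allocation cases for $(X_{n+1},Z_{n+1})$, and factor out the common term evaluated at $K_{n+1}$. The only cosmetic difference is that the paper divides by $\Prob(\mathcal{F}_n^*)$, evaluated at $K_n$, and then discards the resulting $\delta$-free ratios, whereas you divide directly by your $\delta$-free reference $R(\mathcal{F}_n;t)$. This is equivalent for a statement up to proportionality.
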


	\noindent The conditional distribution \eqref{eq:predict_competing} is essential to derive the prediction curves $p_n(\delta\mid\cdot\,)$ in Definition~\ref{def:pred_curve}, which provide a measure of how likely each future cause of death $\delta\in\set{1,\ldots,D}$ is as a function of the survival time. Although this derivation is not analytically possible, $p_n(\delta\mid\cdot\,)$ can be easily obtained from the MCMC algorithm in Section~\ref{sec:gibbs}.

	\subsection{Predictive structure of generalized gamma mixtures}
	
	To conclude, we specialize Theorems~\ref{prop:marginal} and~\ref{prop:predict_competing} to the noteworthy subclass obtained by specifying $\bm{\crm}$ as a \emph{generalized gamma} hCRM \eqref{generalized_gamma} and $k(\,\cdot\,; \,\cdot\,)$ as the Dykstra-Laud kernel. In this case, the Laplace exponent and cumulants \eqref{cumulants} equal $\psi(u) =\big((\beta + u)^\sigma - \beta^\sigma\big)/\sigma$ and $\tau(m; u) = (\beta + u)^{\sigma-m}\,\Gamma(m - \sigma)/\Gamma(1-\sigma)$, respectively. Note that $\lim_{\sigma\downarrow 0}	\psi(u)=\log(1+u/\beta)$, while $\psi_0$ and $\tau_0$ are obtained by replacing $\beta$ and $\sigma$ with $\beta_0$ and $\sigma_0$. Moreover, with the Dykstra-Laud kernel $k(t;x) = \gamma \, \indic_{\{t\ge x\}}$, the quantities in \eqref{kernel_quantities} reduce to
	\begin{gather*}
		Q_n(\boldsymbol T_{1:n}, \boldsymbol X_{1:n}) = \prod_{j=1}^k \prod_{i\in C_j^X}\,\gamma \,  \indic_{\{T_i\ge X_j^*\}}=\gamma^n \,\prod_{j=1}^k \indic_{\{\min_{i\in C_j^X} T_i\ge X_j^*\}}, \\
		K_n(x) = \gamma \, \sum_{i=1}^n \max(T_i - x, \, 0).
	\end{gather*}
	Therefore, the joint marginal distribution in \eqref{marginal} reduces to the expression \cut
	\begin{multline*}
		\Prob (\mathcal{F}_n) = \gamma^n \, \prod_{j=1}^k \left( \prod_{\ell=1}^D \prod_{h=1}^{r_{\ell j}} \frac{\Gamma(q_{\ell jh} - \sigma)}{\Gamma(1-\sigma)} \right) \frac{\Gamma(r_j - \sigma_0)}{\Gamma(1-\sigma_0)} \, \indic_{\{\min_{i\in C_j^X} T_i \, \ge X^*_j\}} \: \Lambda_0(dX_j^*) \\
		\times \prod_{j=1}^k \left(\beta_0 + \frac{D}{\sigma}  \Big( (\beta + K_n(X^*_j))^\sigma - \beta^\sigma \Big) \right)^{\sigma_0-r_j} (\beta + K_n(X^*_j))^{r_j \sigma-n_j} \\
		\times \, \exp \left( - \frac{1}{\sigma_0} \int_{\R^+} \left( \left( \beta_0 + \frac{D}{\sigma} \Big( (\beta + K_n(x))^\sigma - \beta^\sigma \Big) \right)^{\sigma_0} - \beta_0^{\sigma_0} \right) \Lambda_0(dx) \right).
	\end{multline*}
	Recall that the prior predictive distribution on $\Delta$ is non--informative, in the sense that $\Delta \mid (T = t)$ has a discrete uniform distribution on $\set{1,\ldots,D}$, for every $t \ge 0$. The predictive distribution for $\Delta_{n+1}$ in \eqref{eq:predict_competing} allows to assess how the data have modified our prior belief; its expression boils down to
	\begin{multline*}
		p(\delta \mid t,\mathcal{F}_n) 
		\propto 
		\sum_{j=1}^k\indic_{\{t\ge X_j^*\}}\,\left\{\frac{n_{\delta j}-r_{\delta j}\sigma}{\beta+K_{n+1}(X_j^*)}
		+\frac{(r_j-\sigma)\,(\beta+K_{n+1}(X_j^*))^{\sigma-1}}
		{\beta_0+\frac{D}{\sigma}((\beta+K_{n+1}(X_j^*))^\sigma-\beta^\sigma)}\right\} \\[4pt]
		+ \int_0^t (\beta+K_{n+1}(x))^{\sigma-1} \left(\beta_0 + \frac{D}{\sigma} \Big( (\beta + K_{n+1}(x))^\sigma - \beta^\sigma \Big) \right)^{\sigma_0-1}\Lambda_0(dx),
	\end{multline*}
	for any $\delta\in\set{1,\ldots,D}$, where notably $K_{n+1}(x)=K_n(x)+\gamma\,\max(t-x,0)$ for any $x$ in $\R^+$.
		
	\section{Posterior analysis}
	\label{sec:posterior}
		
	\subsection{General posterior characterization} 
	\label{sec:post_general}	
	
	The predictive system of Theorem~\ref{prop:predict_competing} provides an indirect glimpse of the posterior of the hCRM model \eqref{model}, though it was obtained from a probabilistic characterization of the underlying nested random partitions. In this section, we tackle the problem directly and derive a posterior characterization of $\bm{\crm}$ and $\crm_0$, conditionally on both the observations $(\bm{T}_{1:n},\bm{\Delta}_{1:n})$ and the latent variables $(\bm{X}_{1:n}, \bm{Z}_{1:n})$. This allows to estimate functionals of interest, such as the cause--specific incidence and subdistribution functions, as well as the survival function, and quantify the associated uncertainty.
	We show that $\crm_1, \dots, \crm_D$ and $\crm_0$ remain CRMs a posteriori, with significant updates, reflecting the information conveyed by the data. Indeed, each updated CRM consists of two independent components: a \textit{non-homogeneous} CRM without fixed points of discontinuity and a finite number of random jumps at fixed locations. Moreover, the hierarchical structure is preserved a posteriori, as the random measures at the bottom of the hierarchy are still conditionally independent given the random measure at the root. Consequently, our model is structurally conjugate in the sense of \cite{lijoi2010}, although the shift from \textit{homogeneous} to \textit{nonhomogeneous} CRMs marks a significant deviation from the prior structure; this effect does not occur with e.g.~normalized random measures, whether in the exchangeable \citep{james2009} or partially exchangeable \citep[e.g.][]{camerlenghi2019} settings.
	Recall that $\mathcal{F}_n=(\bm{T}_{1:n},\bm{\Delta}_{1:n},\Pi_{\bm{X}_{1:n}},\Pi_{\bm{Z}_{1:n}}, \bm{X}^*)$ and denote an updated quantity by the superscript $^+$. 

	\begin{theorem} \label{prop:posterior}
		Given the hCRM model \eqref{model}, then:
		\begin{itemize}
			\item[{\rm (a)}] the posterior distribution of $\crm_0$, given  $\mathcal{F}_n$, coincides with the distribution of
			\begin{equation}
				\label{eq:posterior_mu0}
				\crm_0^+ =\crm_0^*+ \sum_{j=1}^k V_j \,\delta_{X_j^*},
			\end{equation}
			where $\crm_0^*\sim\mbox{\rm CRM}(\nu_0^+)$ with $\nu_0^+(ds,dx) = e^{ - D \psi(K_n(x))\,s} \, \rho_0(ds)\,\Lambda_0(dx)$, and $V_1,\ldots,V_k$ are mutually independent r.v.'s with densities $f_{V_j}(s)\,ds \propto s^{r_j}\,\exp\{ - D \psi(K_n(X_j^*))\,s\} \, \rho_0(ds)$; 
			\item[{\rm (b)}] the posterior distribution of $\bm{\crm}$, given $\mathcal{F}_n$ and $\crm_0^+$ in \eqref{eq:posterior_mu0}, equals the distribution of a vector of random measures, whose $\delta$-th component is
			\begin{equation}
			\label{eq:posterior_mu_d}
				\crm_\delta^+=\crm_\delta^* + \sum_{j=1}^k \sum_{h=1}^{r_{\delta j}} S_{\delta jh} \,\delta_{X_j^*},
			\end{equation}
			where $\crm_\delta^* \mid \crm_0^+ \stackrel{\mbox{\rm \scriptsize iid}}{\sim} \mbox{\rm CRM}(\tilde\nu^+)$ with $\tilde\nu^+(ds,dx)= e^{-K_n(x)\,s} \, \rho(ds)\,\crm_0^+(dx)$ and $(S_{\delta j h})_{jh}$ are mutually independent r.v.'s with densities $f_{S_{\delta jh}}(s)\,ds \propto s^{q_{\delta jh}}\,e^{-K_n(X_j^*)\,s} \rho(ds)$. \end{itemize}		
	\end{theorem}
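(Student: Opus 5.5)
The plan is to identify the posterior laws through their Laplace functionals. Throughout we work with the marked measures $\bm{\crm^e}$, so that the augmented likelihood \eqref{likelihood} factorises over distinct atoms. The posterior of $\bm{\crm^e}$ given $\mathcal{F}_n$ is proportional to the prior times \eqref{likelihood}. For a test function $g\ge 0$ on $\X$ and functions $f_\delta\ge 0$ on $[0,1]\times\X$, the quantity to compute is
\begin{equation*}
\E\Big[e^{-\crm_0(g)-\sum_\delta \crm^e_\delta(f_\delta)} \mid \mathcal{F}_n\Big]=\frac{\E\big[e^{-\crm_0(g)-\sum_\delta \crm^e_\delta(f_\delta+K_n)}\prod_{\ell,j,h}\crm^e_\ell(dZ^*_{\ell jh},dX^*_j)^{q_{\ell jh}}\big]}{\E\big[e^{-\sum_\delta \crm^e_\delta(K_n)}\prod_{\ell,j,h}\crm^e_\ell(dZ^*_{\ell jh},dX^*_j)^{q_{\ell jh}}\big]}.
\end{equation*}
The denominator is exactly $\Prob(\mathcal{F}^*_n)$ computed in Theorem~\ref{prop:marginal}. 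The numerator has the same structure with $K_n$ replaced by $K_n+f_\delta$ inside the bottom-level exponents and with an extra $e^{-\crm_0(g)}$ at the root. We therefore redo the argument behind Theorem~\ref{prop:marginal} with these modified exponents and take the ratio.

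\textbf{Step 1: condition on $\crm_0$ and work at the bottom level.} Given $\crm_0$, the $\crm^e_\delta$ are independent CRMs with intensity $\rho(ds)H(dz)\crm_0(dx)$. For a single CRM, the expectation of $e^{-\crm^e_\delta(f_\delta+K_n)}\prod_h \crm^e_\delta(dZ^*_{\delta jh},dX^*_j)^{q_{\delta jh}}$ is computed by the standard Palm/Mecke argument for Poisson random measures. Since the marks $Z^*$ are distinct and diffuse, each factor picks a different atom of $\crm^e_\delta$. This gives
\begin{equation*}
e^{-\int\psi_{f_\delta}\,d\crm_0}\prod_{j,h}\int s^{q_{\delta jh}}e^{-s(K_n+f_\delta)(Z^*_{\delta jh},X^*_j)}\rho(ds)\,H(dZ^*_{\delta jh})\,\crm_0(\{X^*_j\}),
\end{equation*}
where $\psi_{f_\delta}(x)=\int\!\!\int(1-e^{-s(K_n(x)+f_\delta(z,x))})\rho(ds)H(dz)$. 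The ratio with the $f_\delta=0$ version then has two factors. The first is $\exp\{-\int\!\!\int\!\!\int(1-e^{-sf_\delta})e^{-sK_n}\rho(ds)H(dz)\crm_0(dx)\}$, which is the Laplace functional of $\mathrm{CRM}(\tilde\nu^+)$. The second is a product over $(j,h)$ of Laplace transforms of independent jumps with densities $\propto s^{q_{\delta jh}}e^{-K_n(X_j^*)s}\rho(ds)$ placed at $X_j^*$. This proves (b), conditionally on $\crm_0$, once we know that $\crm_0$ has atoms at each $X_j^*$. That holds a posteriori because the likelihood contains the factor $\crm_0(\{X_j^*\})^{r_j}$. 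The conditional law depends on $\crm_0$ only through its realisation, so it is the stated law given $\crm_0^+$.

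\textbf{Step 2: integrate out the bottom level and work at the root.} After integrating the $\crm^e_\delta$ with $f_\delta=0$, the dependence on $\crm_0$ is through $e^{-\crm_0(g+D\psi(K_n))}\prod_j\crm_0(dX_j^*)^{r_j}$. Here the $D$ factor arises because the $D$ bottom measures are iid given $\crm_0$, each contributing $\psi(K_n)$, and $r_j=\sum_\ell r_{\ell j}$ counts the bottom atoms sharing location $X_j^*$. Applying the same Poisson/Palm computation to $\crm_0$, now with distinct diffuse locations $X_j^*$, and taking the ratio with $g=0$ yields two pieces. One is the Laplace functional of a CRM with intensity $e^{-D\psi(K_n(x))s}\rho_0(ds)\Lambda_0(dx)$. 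The other is a product of transforms of independent $V_j$ with densities $\propto s^{r_j}e^{-D\psi(K_n(X_j^*))s}\rho_0(ds)$. This gives (a). Combining the two steps via the tower property gives the joint statement: the bottom measures remain conditionally independent given $\crm_0^+$. Marginalising over the marks recovers $\crm_\delta^+$ as in \eqref{eq:posterior_mu_d}.

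\textbf{Main obstacle.} The delicate step is making the ``$\crm(dZ^*,dX^*)^{q}$'' heuristic rigorous, that is, justifying the disintegration used in the Palm computation. The approach would be to replace each infinitesimal by $\crm(A_\epsilon)^q$ for shrinking neighbourhoods $A_\epsilon$, expand the moments with the multivariate Campbell/Mecke formula, and show that only the configuration in which all $q$ factors share one atom survives at order $H(A_\epsilon)\Lambda_0(B_\epsilon)$. A further subtlety appears at the bottom level. There the base measure $\crm_0$ is atomic, so different marks at the same $X_j^*$ are separated only by the diffuse $H$, which is precisely why the augmentation of Section~\ref{sec:marked_process} is needed. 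We would follow the limiting argument in the proof of Theorem~\ref{prop:marginal}, which the paper already provides in Section~\ref{supsec:proofs}.
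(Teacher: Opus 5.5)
Your proposal is correct and follows essentially the same route as the paper. Both arguments use Bayes' formula for the conditional Laplace functional, condition on $\crm_0$ to apply the key identity at the bottom level with the augmented base measure $H(dz)\,\crm_0(dx)$, apply it again at the root with exponent $g + D\psi(K_n)$, and read off a tilted CRM plus independent jumps at the fixed locations $X_j^*$. The differences are cosmetic: you handle root and bottom levels through a single joint Laplace functional, and you suggest justifying the infinitesimal identity by a Campbell/Mecke argument, whereas the paper differentiates the Laplace functional in its preliminary lemma.
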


	\noindent Theorem~\ref{prop:posterior} shows that \textit{a posteriori} the hierarchical structure is preserved, though the CRMs at the hierarchy's bottom  are no longer conditionally identical in distribution. Indeed, $\crm_1^+, \dots, \crm_D^+$ are conditionally independent CRMs given $\crm_0^+$, but the component with fixed locations on the right-hand side of \eqref{eq:posterior_mu_d} depends on $\delta$, meaning conditional identity in distribution no longer holds. Nevertheless, $\bm{\crm}^*=(\crm_1^*,\ldots,\crm_D^*)$ remains a hCRM in the sense of Definition~\ref{def:hCRM}. Additionally, the result sheds light on the impact of the nested partition structure on the posterior distribution of hCRMs. At the hierarchy's root,
	the distribution of each jump $V_j$ depends on the multiplicity $r_j=r_{1j}+\,\cdots\,+r_{Dj}$, i.e.~the number of distinct marks across $\crm_1^e, \ldots, \crm_D^e$ associated with the latent location $X^*_j$. At the hierarchy's bottom, the distribution of each jump $S_{\delta jh}$ depends on the number of observations $q_{\delta jh}$ associated with cause $\delta$, location $X_j^*$ and latent mark $Z^*_{\delta jh}$. The interpretation is clear: (i) the posterior distribution of each $\crm_\delta$ depends on the (finer) partition $\Pi_{\bm{Z}_{1:n}}$ of the observations for which $\Delta_i = \delta$, i.e.~on the association of the observations with the marks; (ii) the posterior distribution of the root measure $\crm_0$ depends on how the finer partition $\Pi_{\bm{Z}_{1:n}}$ is nested within the coarser partition $\Pi_{\bm{X}_{1:n}}$, i.e.~on the association of the marks with the locations.
	Moreover, the components of the posterior with random locations, $\crm^*_1, \dots, \crm^*_D$ and $\crm^*_0$, are \textit{non-homogeneous}, as evidenced by the structure of $\tilde\nu^+$ and $\nu_0^+$. Indeed, the updates of their L\'evy intensities: (i) involve non-homogeneous exponential tiltings of the prior homogeneous L\'evy intensities; (ii)  depend solely on the observed survival times $\boldsymbol T_{1:n}$ via $x \mapsto K_n(x)=K_n(x; \boldsymbol T_{1:n})$ in \eqref{kernel_quantities}, and remain unaffected by the latent partition structure. Notably, at both levels of the hierarchy, the distributions of the jumps depend on the $\bm{Z}$-level partition and on the fixed latent location $X^*_j$ through $K_n(X^*_j)$, for $j \in \set{1,\dots,k}$.

	\subsection{Posterior characterization of generalized gamma mixtures} 
	\label{sec:post_GG}

	We specialize the general characterization to the subclass of \emph{generalized gamma} hCRMs \eqref{generalized_gamma} discussed in Section~\ref{sec:marginal}: the prior to posterior update becomes straightforward. Specifically, $\crm^*_1, \dots, \crm^*_D$ and $\crm^*_0$ remain (extended) generalized gamma CRMs, with the same $\sigma$ and $\sigma_0$ and updated rate parameters. The latter become functions on the latent space $\X$, due to nonhomogeneity \emph{a posteriori}, and depend only on $\bm{T}_{1:n}$ through $K_n$. Indeed, $\beta^+(x) = \beta + K_n(x)$ and $\beta_0^+(x) = \beta_0 + D \{(\beta + K_n(x))^\sigma - \beta^\sigma\}/\sigma$. 
	Furthermore, for each fixed latent location $X^*_j$, the jumps $(S_{\delta jh})_{\delta h}$ and $V_j$ are gamma distributed
	\begin{equation*}
		S_{\delta jh} 
		\sim \text{Gamma} \big(q_{\delta jh} - \sigma, \, \beta^+(X_j^*) \big), \qquad V_j 
		\sim \text{Gamma} \big( r_j - \sigma_0, \, \beta_0^+(X^*_j) \big).
	\end{equation*}
	The generalized gamma hCRM can thus be regarded as the (conditionally) conjugate prior when used as mixing measure for the kernel mixtures defining cause--specific hazards in \eqref{model}. Moreover, with the Dykstra-Laud kernel, $\displaystyle K_n(x) = \gamma \, \sum_{i=1}^n \max(T_i - x, \, 0)$ is a non-increasing function and $K_n(x)=0$ for any $x\ge \max_i T_i$. This, combined with $u\mapsto \psi(u)$ being non--decreasing, implies that the effect of the exponential tilting update in the posterior L\'evy intensities $\tilde\nu^+$ and $\nu_0^+$ is larger for smaller values of $x$ and becomes negligible for larger values of $x$. In other words, for any $\varepsilon>0$, the function $\displaystyle x\mapsto \rho((\varepsilon,+\infty ))-\int_\varepsilon^{+\infty}e^{-K_n(x)s}\rho(ds)$ is non-increasing and actually equals $0$ if $x\ge \max_i T_i$. Hence, for large enough $x$, the prior and posterior Lévy intensities coincide. The same argument applies to $\rho_0$ at the hierarchy's root.
	Moreover, each of the fixed latent locations $X^*_1, \dots, X^*_k$ is necessarily smaller than the largest observed survival time, due to the kernel product term $Q_n(\boldsymbol T_{1:n}, \boldsymbol X_{1:n})$. Therefore, the observations do not provide any information beyond the largest survival time $\max_i T_i$, where the prior and posterior distributions of the random measures coincide. Similar considerations apply to other kernels mentioned in Section~\ref{sec:mixture_hazard}. See also \cite{deblasi2009} for related discussions. 

	\subsection{Conditional functionals estimation given the latent variables}
	\label{sec:functionals}

	Theorem~\ref{prop:posterior} represents the key ingredient for the estimation of important functionals of $\boldsymbol{\crm}$ such as the survival function \eqref{survival} and the cause--specific incidence functions \eqref{directing_measure}. Their posterior estimates, given the observations and the latent variables, are obtained by marginalization with respect to the posterior random measures. To this end, we conveniently single out the jump part of the posterior L\'evy intensities $\tilde\nu^+$ and $\nu_0^+$ as $\rho^+(ds \,\vert\, x) = \exp(-K_n(x)\,s) \, \rho(ds)$ and $\rho_0^+(ds \,\vert\, x) = \exp(- D \psi(K_n(x))\,s)\, \rho_0(ds)$, with corresponding Laplace exponents and cumulants given by
	\begin{equation} \label{eq:cumulants_posterior}
		\psi_x^+(u) = \int_{\R^+} (1-e^{-us})\,\rho^+(ds \,\vert\, x), \qquad \tau_x^+(m; u) = \int_{\R^+} s^m\,e^{-us}\, \rho^+(ds \,\vert\, x);
	\end{equation}
	the quantities $\psi_{0,x}^+$ and $\tau_{0,x}^+$ are obtained replacing $\rho^+$ with $\rho^+_0$ in \eqref{eq:cumulants_posterior}.
	
	First, consider the survival function $\tilde S(t)$ in \eqref{survival}. Since $\tilde S$ is the exponential of a linear functional of $\crm_1,\ldots,\crm_D$, its posterior estimate, under squared error loss, is an immediate consequence of Theorem~\ref{prop:posterior}. 
	
	\begin{proposition} \label{prop:posterior_survival}
		The posterior estimate, given $\mathcal{F}_n$, of the survival function $\tilde S(t)$ at $t\ge0$ is
		\begin{multline*}
			\E \left[ \tilde S(t) \mid \mathcal{F}_n \right] = \exp \left( - \int_\X \psi_{0,x}^+ \big( D \psi_x^+(K_1(x)) \big) \, \Lambda_0(dx) \right) \\
			\times \prod_{j=1}^k \left( \prod_{\ell=1}^D \prod_{h=1}^{r_{\ell j}} \frac{\tau_{X_j^*}^+(q_{\ell jh}; K_1(X_j^*))}{\tau_{X_j^*}^+(q_{\ell jh}; 0)} \right) \frac{\tau_{0,X_j^*}^+\big(r_j; D \psi_{X_j^*}^+(K_1(X_j^*))\big)}{\tau_{0,X_j^*}^+(r_j; 0)},
		\end{multline*}
		where $\displaystyle K_1(x)=K_1(x;t) = \int_0^t k(s;x)\,ds$, for any $t \ge 0$ and $x$ in $\X$.
	\end{proposition}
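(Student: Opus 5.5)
The approach is to write $\tilde S(t)=\exp\{-\sum_{\ell=1}^D\int_\X K_1(x)\,\crm_\ell(dx)\}$, with $K_1(x)=\int_0^t k(s;x)\,ds$, and to compute its conditional expectation with the posterior representation of Theorem~\ref{prop:posterior}. The computation runs in two stages that mirror the hierarchy. First we condition on $\crm_0^+$ and integrate out $\crm_1^+,\dots,\crm_D^+$. Then we integrate out $\crm_0^+=\crm_0^*+\sum_j V_j\delta_{X_j^*}$.

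\emph{Bottom level.} By Theorem~\ref{prop:posterior}(b), given $\crm_0^+$ the measures $\crm_\delta^+$ are conditionally independent. Each decomposes as the independent sum of $\crm_\delta^*\sim\mbox{CRM}(\tilde\nu^+)$ and the fixed jumps $S_{\delta jh}$. The L\'evy--Khintchine formula for $\crm_\delta^*$ gives
\[
\E\bigl[e^{-\int K_1\,d\crm_\delta^*}\mid\crm_0^+\bigr]=\exp\Bigl(-\int_\X\psi_x^+(K_1(x))\,\crm_0^+(dx)\Bigr),
\]
since the jump part of $\tilde\nu^+$ at $x$ is $\rho^+(ds\,\vert\,x)$. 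For a fixed jump, the density $\propto s^{q}e^{-K_n(X_j^*)s}\rho(ds)$ has normalizing constant $\tau^+_{X_j^*}(q;0)$. Hence $\E[e^{-K_1(X_j^*)S_{\delta jh}}]=\tau^+_{X_j^*}(q_{\delta jh};K_1(X_j^*))/\tau^+_{X_j^*}(q_{\delta jh};0)$. Taking the product over $\delta$ yields the displayed product of $\tau^+$ ratios, times
\[
\exp\Bigl(-\int_\X D\,\psi_x^+(K_1(x))\,\crm_0^+(dx)\Bigr).
\]

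\emph{Root level.} This remaining factor is the Laplace functional of $\crm_0^+$ evaluated at $g(x)=D\psi^+_x(K_1(x))$. By Theorem~\ref{prop:posterior}(a), $\crm_0^*$ and the $V_j$ are independent. The Lévy--Khintchine formula with intensity $\nu_0^+$, whose jump part is $\rho_0^+(ds\,\vert\,x)$, gives $\exp(-\int_\X\psi^+_{0,x}(g(x))\,\Lambda_0(dx))$. Each $V_j$, with density $\propto s^{r_j}\rho_0^+(ds\,\vert\,X_j^*)$, contributes $\tau^+_{0,X_j^*}(r_j;g(X_j^*))/\tau^+_{0,X_j^*}(r_j;0)$. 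Multiplying the two stages gives the stated formula.

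The main obstacle is not conceptual but a matter of justification. We must apply the L\'evy--Khintchine formula to the \emph{non-homogeneous} CRMs $\crm^*_\delta$, whose intensity is random (through $\crm_0^+$) and depends on $x$. We also need finiteness of the normalizing constants $\tau^+(q;0)$ and $\tau^+_0(r;0)$, which holds because $q,r\ge1$ and the tilting parameters are positive. Finally, the conditional expectation must be exchanged with the tower property over $\crm_0^+$; this is legitimate since all integrands lie in $[0,1]$. A small point to check is that the $H(dZ)$ marks are irrelevant here, because $\tilde S$ depends only on the unmarked measures.
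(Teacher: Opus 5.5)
Your proposal is correct and follows the paper's proof essentially step for step. You condition on the root measure and apply Theorem~\ref{prop:posterior}(b), using the Laplace functional of the tilted CRMs $\crm_\delta^*$ and the normalized fixed-jump densities to obtain $\exp\bigl(-\int_\X D\psi_x^+(K_1(x))\,\crm_0^+(dx)\bigr)$ times the $\tau^+$ ratios, and you then integrate over $\crm_0^+=\crm_0^*+\sum_j V_j\delta_{X_j^*}$ via Theorem~\ref{prop:posterior}(a), exactly as the paper does.
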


	\noindent Moreover, we obtain a posterior estimate of the cause--specific incidence function $\tilde f_\delta(t)$ in \eqref{directing_measure}. Although its dependence on $\crm_1, \ldots, \crm_D$ is complex, marginalization remains feasible.

	\begin{proposition} \label{prop:posterior_incidence}
		For any $t\ge0$ and $\delta\in \set{1,\ldots,D}$, the posterior estimate of $\tilde p(dt, \delta) = \tilde f_\delta(t)\,dt$ is
		\begin{multline*}
			\E \left[ \, \tilde p(dt, \delta) \mid \mathcal{F}_n \right] = \E \left[ \tilde S(t) \mid \mathcal{F}_n \right] \bigg\{ \sum_{j=1}^k k(t;X_j^*) \, \sum_{h=1}^{r_{\delta j}} \, \frac{\tau_{X_j^*}^+(q_{\delta jh}+1; K_1(X_j^*))}{\tau_{X_j^*}^+(q_{\delta jh}; K_1(X_j^*))} \\
			+ \sum_{j=1}^k k(t;X_j^*) \, \tau_{X_j^*}^+(1; K_1(X_j^*)) \, \frac{\tau_{0,X_j^*}^+( r_j+1; D \psi_{X_j^*}^+(K_1(X_j^*)))}{\tau_{0,X_j^*}^+( r_j; D \psi_{X_j^*}^+(K_1(X_j^*)))} \\
			+ \int_\X k(t;x) \,\tau_x^+(1; K_1(x)) \,\tau_{0,x}^+(1; D \psi_x^+(K_1(x))) \,\Lambda_0(dx) \bigg\} \,dt.
		\end{multline*}
	\end{proposition}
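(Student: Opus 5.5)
The strategy is to express $\tilde f_\delta(t)\,dt$ as a product of $\tilde S(t)$ and the linear functional $\int_\X k(t;x)\,\crm^+_\delta(dx)$, and then to reduce its posterior expectation to Proposition~\ref{prop:posterior_survival} through a change of measure. Write
\begin{equation*}
\tilde f_\delta(t) = \int_\X k(t;x)\,\crm_\delta(dx)\, \exp\Big(-\sum_{\ell=1}^D \int_\X K_1(x)\,\crm_\ell(dx)\Big),
\end{equation*}
with $K_1(x)=\int_0^t k(s;x)\,ds$. By Theorem~\ref{prop:posterior}, the posterior law of $\bm{\crm}$ given $\mathcal{F}_n$ and $\crm_0^+$ is that of conditionally independent measures $\crm_\ell^+ = \crm_\ell^* + \sum_{j,h} S_{\ell jh}\,\delta_{X_j^*}$. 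Condition first on $\crm_0^+$. For $\ell\neq\delta$ the factor $\exp(-\int K_1\,d\crm_\ell^+)$ is independent of everything else, so it factorizes. For $\ell=\delta$ we must compute $\E[\int k(t;x)\,\crm_\delta^+(dx)\,e^{-\int K_1 d\crm_\delta^+}]$, which equals $-\frac{d}{d\epsilon}$ at $\epsilon=0$ of the Laplace functional evaluated at $K_1+\epsilon\,k(t;\cdot)$. An equivalent and cleaner route is the Mecke/Campbell formula for the CRM part together with a direct computation for the fixed jumps.

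\textbf{Fixed jumps.} For each fixed jump $S_{\delta jh}$ with density proportional to $s^{q}e^{-K_n(X_j^*)s}\rho(ds)$, we get
\begin{equation*}
\E[S e^{-K_1 S}] = \frac{\tau^+_{X_j^*}(q+1;K_1)}{\tau^+_{X_j^*}(q;0)} = \E[e^{-K_1 S}]\,\frac{\tau^+_{X_j^*}(q+1;K_1)}{\tau^+_{X_j^*}(q;K_1)}.
\end{equation*}
Multiplied by $k(t;X_j^*)$ and summed, this yields the first line of the bracket, times the factor that enters $\E[\tilde S(t)\mid\cdot]$.

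\textbf{CRM part.} For $\crm_\delta^*$, Mecke's formula gives
\begin{equation*}
\E\Big[\int k(t;x)\crm^*_\delta(dx)\,e^{-\int K_1 d\crm^*_\delta}\Big] = \E\big[e^{-\int K_1 d\crm^*_\delta}\big]\int k(t;x)\,\tau_x^+(1;K_1(x))\,\crm_0^+(dx).
\end{equation*}
Collecting terms, conditionally on $\crm_0^+$ we obtain
\begin{equation*}
\E[\tilde S(t)\mid\crm_0^+,\mathcal{F}_n]\Big\{\text{line 1}+\int_\X k(t;x)\,\tau_x^+(1;K_1(x))\,\crm_0^+(dx)\Big\},
\end{equation*}
where $\E[\tilde S(t)\mid\crm_0^+,\mathcal{F}_n]=\exp(-D\int\psi_x^+(K_1(x))\,\crm_0^+(dx))$ times the fixed-jump ratios.

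\textbf{Integrating out the root.} We next integrate over $\crm_0^+=\crm_0^*+\sum_j V_j\delta_{X_j^*}$, setting $g(x)=D\psi_x^+(K_1(x))$. The term without $\crm_0^+$ reproduces $\E[\tilde S(t)\mid\mathcal{F}_n]$, via the computation already carried out in Proposition~\ref{prop:posterior_survival}. The term linear in $\crm_0^+$ splits into two pieces.
\begin{itemize}
\item \emph{Atoms $V_j$:} $\E[V_je^{-g V_j}] = \E[e^{-gV_j}]\,\tau^+_{0,X_j^*}(r_j+1;g)/\tau^+_{0,X_j^*}(r_j;g)$, which gives the second line.
\item \emph{Diffuse part $\crm_0^*$:} applying Mecke's formula again with intensity $\rho_0^+(ds\mid x)\Lambda_0(dx)$ gives $\E[e^{-\int g\,d\crm_0^*}]\int k(t;x)\,\tau_x^+(1;K_1(x))\,\tau^+_{0,x}(1;g(x))\,\Lambda_0(dx)$, which is the third line.
\end{itemize}
Factoring out $\E[\tilde S(t)\mid\mathcal{F}_n]$ completes the argument.

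The main obstacle is justifying the exchange of expectation and integral (Mecke/Campbell) for the product of a linear functional and an exponential functional at both levels. The cleanest way is to state once that, for a CRM with intensity $\rho(ds\mid x)\,\alpha(dx)$ and non-negative $f,g$,
\begin{equation*}
\E\Big[\int f\,d\crm\; e^{-\int g\,d\crm}\Big] = \E\big[e^{-\int g\,d\crm}\big]\int f(x)\,\tau_x(1;g(x))\,\alpha(dx),
\end{equation*}
obtained by differentiating the Laplace functional in direction $f$, and to justify it by monotone convergence since all integrands are non-negative. We would apply this identity twice, once at each level of the hierarchy.
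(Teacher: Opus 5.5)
Your proposal is correct and follows essentially the same route as the paper. Both condition on the root, factor the $\ell\neq\delta$ terms by conditional independence, handle the bottom fixed jumps through $\E[S e^{-K_1 S}]$ and the random-location part through a first-moment (Mecke-type) identity, and then integrate out $\tilde\mu_0^+$ by splitting it into the diffuse part and the atoms $V_j$; this is exactly the paper's $I_1,I_2,I_3$ decomposition. Your Mecke/Laplace-derivative identity is the single-atom, multiplicity-one case of the paper's key identity \eqref{key_identity}. It has the small bonus of not needing a diffuse base measure, which is convenient at the bottom level where $\tilde\mu_0^+$ is atomic.
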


	\begin{remark} \label{remark}
		Proposition~\ref{prop:posterior_incidence} also allows to evaluate the predictive distribution for the event type $\Delta_{n+1}$ associated with the additional survival time $T_{n+1}$ given $(T_{n+1},\mathcal{F}_n)$, which unsurprisingly coincides with $p(\delta \mid t,\mathcal{F}_n)$ in Theorem~\ref{prop:predict_competing}. Details are provided in Section~\ref{supsec:proofs}.
	\end{remark} 

	\noindent Propositions~\ref{prop:posterior_survival} and~\ref{prop:posterior_incidence} provide explicit posterior estimates, yet conditional on the latent variables $(\Pi_{\bm{X}_{1:n}},\Pi_{\bm{Z}_{1:n}}, \bm{X}^*)$. To obtain posterior estimates depending only on the observations $(\bm{T}_{1:n},\bm{\Delta}_{1:n})$, one needs to integrate these expressions with respect to the posterior distribution of these latent variables. This marginalization is performed via the Gibbs sampling scheme detailed in the next section.   

	\section{Simulation algorithms for posterior inference}
	\label{sec:gibbs}

	The latent partition structure identified in Section~\ref{sec:marginal} is the key device to characterize marginal, predictive and posterior quantities, which are expressed conditionally on $(\Pi_{\bm{X}_{1:n}}, \Pi_{\bm{Z}_{1:n}}, \bm{X}^*)$. 
	To obtain unconditional quantities, we employ a Gibbs sampling scheme that generates these latent variables conditionally on the observations. Specifically, we derive a marginal algorithm, which leverages the explicit expression of the law of $\mathcal{F}_n^*$ obtained in Theorem~\ref{prop:marginal} by integrating out the underlying hCRM. We also present a sampling procedure for generating trajectories of $\crm_0^+$ and $\bm{\crm}^+$ from their distributions in Theorem~\ref{prop:posterior}, conditionally on the partition structure. These algorithms are tailored to the \emph{generalized gamma} hCRM in Section~\ref{supsec:sampling}.

	\subsection{Marginal sampling algorithm}
	\label{sec:marginalMCMC}

	The marginal likelihood in Theorem~\ref{prop:marginal} is the starting point for devising a Gibbs sampler that generates the posterior latent variables $(X_i,Z_i)$, for $i \in \set{1, \dots, n}$. Marginal structures are typically characterized in the nonparametric literature via sequential urn schemes. In the exchangeable case, these are known as generalized P\'olya urn schemes or Chinese restaurant processes, derived for the Dirichlet process in \cite{blackwell1973}, the Pitman-Yor process in \cite{pitman1996} and normalized CRMs in \cite{james2009}. In the partially exchangeable case, with a hierarchical dependence structure, they are known as Chinese restaurant franchise processes, and were introduced by \cite{teh2006} for the hierarchical Dirichlet process and by \cite{camerlenghi2019} for hierarchical normalized CRMs.
	
	Although we also aim at an urn--type sequential description of the marginal structure, its derivation requires a fundamentally different approach. Indeed, since the underlying model does not build upon normalized (hierarchical) CRMs, we should rely on different tools as well as on the essential augmentation via the latent marks $\bm{Z}_{1:n}$. The full conditionals correspond to the predictive distributions deduced from Theorem~\ref{prop:marginal} and the marginal of $\mathcal{F}_n$ in \eqref{marginal}. 
	Specifically, the full conditional generating $(X_i,Z_i)$ can be described in terms of the standard notation where $q_{\delta jh}^{-i}$, $r_{\delta j}^{-i}$,  $r_j^{-i}$ and $k^{-i}$ have the same frequency interpretation as in Theorem~\ref{prop:marginal} after removing $(X_i,Z_i)$. 
	Similarly, we set $\mathcal{F}_n^{-i} = (\bm{T}_{1:n}, \bm{\Delta}_{1:n}, \bm{X}_{1:n}^{-i},\bm{Z}_{1:n}^{-i})$ with the $i$-th component removed from $\bm{X}_{1:n}$ and $\bm{Z}_{1:n}$. Adopting this notation, the predictive distribution of $(X_i,Z_i)$ is identified by the following cases:

	\begin{itemize}
		\item[(1)] $X_{i}$ and $Z_{i}$ coincide with conditioning  latent variables, say $X_{i}= X_j^*$ and $Z_{i}=Z_{\delta jh}^*$, for some $j$ and $h$ and such that $\Delta_i = \delta$, with probability
		\begin{equation*}
			\Prob \big( X_{i}= X_j^*, \, Z_{i} = Z_{\delta jh}^* \mid \mathcal{F}_n^{-i} \big) \propto k(T_i; X_j^*) \, \frac{\tau(q_{\delta jh}^{-i}+1; K_{n}(X_j^*))}{\tau(q_{\delta jh}^{-i}; K_{n}(X_j^*))};
		\end{equation*}
		\item[(2)] $X_{i}$ equals one of the tied values in $\boldsymbol X_{1:n}^{-i}$, say $X_{i} = X_j^*$, while $Z_{i}$ takes on a new value, not included in $\boldsymbol Z_{1:n}^{-i}$, with probability
		\begin{equation*}
			\Prob \big( X_{i} = X_j^*, \, Z_{i} = \text{`new'} \mid \mathcal{F}_n^{-i} \big) \propto k(T_{i}; X_j^*) \,\tau(1; K_{n}(X_j^*)) \, \frac{\tau_0\big( r_j^{-i}+1; D \psi(K_{n}(X_j^*)) \big)}{\tau_0 \big( r_j^{-i}; D \psi(K_{n}(X_j^*)) \big)};
		\end{equation*}
		\item[(3)] $X_{i}$ and $Z_{i}$ take on new values, i.e.~not included in $\boldsymbol X_{1:n}^{-i}$ and $\boldsymbol Z_{1:n}^{-i}$, respectively, with probability
		\begin{equation*}
			\Prob \big( X_{i} = \text{`new'}, \, Z_{i} = \text{`new'} \mid \mathcal{F}_n^{-i} \big) \propto \int_\X k(T_{i}; x) \, \tau(1; K_{n}(x))\, \tau_0 \big( 1; D \psi(K_{n}(x)) \big) \, \Lambda_0(dx).
		\end{equation*}
	\end{itemize}

	\noindent Recall from Section~\ref{sec:marginal} that the new value possibly assigned to the latent mark $Z_{i}$ does not affect the predictive distributions and, hence, the full conditionals: the only relevant information in $\bm{Z}_{1:n}$
	is the induced partition structure. In contrast, the new value possibly assigned to the latent $X_{i}$ in case (3) directly impacts the allocation probabilities in the Gibbs sampling scheme, and is sampled from the diffuse probability measure proportional to $k(T_{i}; x) \, \tau(1; K_n(x))\, \tau_0 \big( 1; D \psi(K_n(x)) \big) \, \Lambda_0(dx)$; with the kernel choices considered in Section~\ref{sec:mixture_hazard}, this reduces to a univariate density on a bounded interval.
	Each iteration of the algorithm yields an update of the latent nested partition structure $(\Pi_{\bm{X}_{1:n}}, \Pi_{\bm{Z}_{1:n}})$. A further \emph{acceleration step}, which is known to improve mixing \citep{escobar1998, maceachern1998}, consists in resampling independently each latent location $X^*_j$ for $j \in \set{1, \dots, k}$, given the observations and latent partition structure, proportionally to 
	\begin{equation*}
		\bigg( \prod_{i \in C_j^X} k(T_i; x) \bigg) \left( \prod_{\ell=1}^D \prod_{h=1}^{r_{\ell j}} \tau(q_{\ell jh}; K_n(x)) \right) \tau_0 \big( r_j ; D \psi(K_n(x)) \big) \, \Lambda_0(dx).
	\end{equation*}
	Finally, hyperpriors are usually specified for the parameters of the Lévy intensities $\tilde\nu$ and $\nu_0$ of the underlying hCRM \eqref{hierarchical_crm} and the kernel $k(\,\cdot\,; \,\cdot\,)$. 
	In particular, if $\Lambda_0(dx) = \theta\,dx$ is proportional to the Lebesgue measure, a gamma prior on $\theta$ is conjugate to the marginal distribution \eqref{marginal}. Indeed, if $\theta\sim\text{Gamma}(a,b)$, its full conditional depends only on the observed survival times $\boldsymbol T_{1:n}$ and the number $k$ of distinct latent locations, being $${\rm Gamma}\left( a + k, \ b + \int_\X \psi_0 ( D \psi(K_n(x))) \, dx \right).$$
	Posterior estimates of the survival function, cause--specific incidence functions and prediction curves are then readily obtained. Specifically, at each iteration, (conditional) posterior estimates are computed by plugging the updated latent variables, resampled from their posterior distribution, into the expressions of Propositions~\ref{prop:posterior_survival} and~\ref{prop:posterior_incidence} and Theorem~\ref{prop:predict_competing}; thus, unconditioned posterior estimates are obtained as Monte Carlo averages.

	\subsection{Posterior conditional sampling of random measures}
	\label{sec:conditionalMCMC}
	
	Marginal algorithms cannot be used for uncertainty quantification around posterior estimates of functionals involving the random measures, as marginalizing over $\crm_1,\ldots,\crm_D$ averages out the associated uncertainty. Indeed, the conditional estimates computed at each iteration as in Section~\ref{sec:functionals} are not a sample from the posterior of the corresponding functionals, but rather of their conditional expectations, given the latent variables. 
	To address this issue, one needs to generate samples from the actual posterior of the random measures: this properly accounts for posterior uncertainty and enables the computation of reliable credible bands. Specifically, at each iteration of the marginal algorithm, we sample independently from the posterior hCRMs, given the observations and latent variables. Based on these samples, valid posterior draws of functionals can be computed from their definitions \eqref{survival} and \eqref{directing_measure}.
	
	The practical implementation of this method requires sampling from the posterior hCRMs in Theorem~\ref{prop:posterior}. The Ferguson-Klass algorithm \citep{ferguson1972} is the most popular approach for generating realizations of CRMs, as it ensures their jumps are a.s.~decreasing, allowing to control the associated approximation error. See \cite{barrios2013} and references therein, and \cite{bruck2025} for a novel approach to sampling based on generative neural networks. For non-homogeneous CRMs, \cite{wolpert1998} proposed a simpler alternative, which however does not generate a.s.~decreasing jumps. Nevertheless, when non--homogeneity is induced by exponential tilting of a homogeneous prior, as in our case, the sequence of jumps can still be upper--bounded by a decreasing sequence obtained from the prior \citep{camerlenghi2021}. This theoretical guarantee on the truncation error underpins our sampling method.

	Consider the non-homogeneous random measure $\crm^+_0$ at the root of the posterior hierarchy (Theorem~\ref{prop:posterior}), and perform the following steps: 
	\begin{itemize}
		\item[(i$_0$)] generate the sequence of random locations $(X^{(0)}_h)_{h \ge 1}$ independently from $\Lambda_0/\Lambda_0(\X)$;
		\item[(ii$_0$)] generate the sequence $(N^{(0)}_h)_{h \ge 1}$ of jump times of a unit--rate Poisson process, namely $N^{(0)}_0 = 0$ and, for $h \ge 1$, the increments $N^{(0)}_h - N^{(0)}_{h-1}$ are i.i.d.~exponentially distributed with unit mean;
		\item[(iii$_0$)] generate the sequence of random jumps $(V^{(0)}_h)_{h \ge 1}$ by solving the equations
		\begin{equation*}
			N^{(0)}_h = \Lambda_0(\X) \int_{V^{(0)}_h}^\infty \exp \left(- D \psi \big( K_n\big(X^{(0)}_h\big) \big)\,s \right) \rho_0(ds), \qquad h \ge 1.
		\end{equation*}
	\end{itemize}
	Since one cannot generate infinite sequences in (i$_0$)--(iii$_0$), the algorithm is stopped as soon as
		$\displaystyle N^{(0)}_{H_0} > \Lambda_0(\X) \int_{\varepsilon}^\infty \rho_0(ds)$,
	for some $H_0 \ge 1$. This guarantees each of the discarded jumps $V^{(0)}_{H_0+1}, V^{(0)}_{H_0+2}, \dots$ to be smaller than a fixed threshold $\varepsilon > 0$. 
	Moreover, if $\Lambda_0$ is an infinite measure, the sampling procedure must be restricted to a suitable subset $\mathbb{S} \subseteq \X$ such that $\Lambda_0(\mathbb{S}) < \infty$; for the kernels in Section~\ref{sec:mixture_hazard}, with $\X = \R^+$, we consider $\mathbb{S} = [\,0,T_{\rm max}\,]$ where $T_{\rm max}$ is larger than the largest observed survival time. An approximate realization of the root random measure $\crm_0$ from its posterior distribution is thus 
	\begin{equation} \label{approx_root}
		\crm_0^+ \approx \sum_{j=1}^k V_j \,\delta_{X_j^*}+ \sum_{h=1}^{H_0} V^{(0)}_h \, \delta_{X^{(0)}_h}, 
	\end{equation}
	where the jumps $V_1, \dots, V_k$ at the fixed locations $X^*_1, \dots, X^*_k$ are sampled from the corresponding densities in Theorem~\ref{prop:posterior}. Note that, since $\Lambda_0$ is diffuse, the random locations $X^{(0)}_1, \dots, X^{(0)}_{H_0}$ are distinct a.s., and also differ from the fixed locations $X^*_1, \dots, X^*_k$. Similarly, at the bottom of the posterior hierarchy, for each non-homogeneous $\crm^+_\delta$ with $\delta \in \set{1,\ldots,D}$:
	\begin{itemize}
		\item[(i$_\delta$)] generate $(X^{(\delta)}_h)_{h \ge 1}$ independently from $\crm_0^+/\crm_0^+(\X)$; \item[(ii$_\delta$)] generate the sequence  $(N^{(\delta)}_h)_{h \ge 1}$
		of jump times of a unit--rate Poisson process as in ii$_0$);
		\item[(iii$_\delta$)] generate $(S^{(\delta)}_h)_{h \ge 1}$ by solving the equations
		\begin{equation*}
			N^{(\delta)}_h = \crm_0^+(\X) \int_{S^{(\delta)}_h}^\infty e^{ - K_n(X^{(\delta)}_h)\,s} \rho(ds), \qquad h \ge 1,
		\end{equation*}
	\end{itemize}
	The procedure above is stopped as soon as
		$\displaystyle N^{(\delta)}_{H_\delta} > \crm_0^+(\X) \int_{\varepsilon}^\infty \rho(ds)$,
	for some $H_\delta \ge 1$ and some fixed $\varepsilon > 0$. Clearly, in (i$_\delta$)--(iii$_{\delta}$) we replace $\crm_0^+$ with its finite approximation on the right--hand--side of \eqref{approx_root}. Hence, an approximate realization of $\crm_\delta$ from its posterior distribution is given by
	\begin{equation} \label{approx_lower}
		\crm_\delta^+ \approx \sum_{j=1}^k \sum_{h=1}^{r_{\delta j}} S_{\delta jh} \,\delta_{X_j^*}+ \sum_{h=1}^{H_\delta} S^{(\delta)}_h \, \delta_{X^{(\delta)}_h},
	\end{equation}
	where the jumps at fixed locations are sampled as specified in Theorem~\ref{prop:posterior}. 
	
	\begin{remark}
		The truncation of the infinite sequence of jumps in \eqref{approx_lower}, and the conditioning on a truncated realization of $\crm_0^+$, entail two levels of approximation. Moreover, the random locations $X^{(\delta)}_1, \dots, X^{(\delta)}_{H_\delta}$ display ties with positive probability, and take values among the finite collection of locations $X^{(0)}_1, \dots, X^{(0)}_{H_0}$ and $X^*_1, \dots, X^*_k$ characterizing the discrete measure approximating $\crm_0^+$. An alternative approach for sampling realizations of each $\crm^+_\delta$ consists in sampling its $k+H_0$ cumulative random jumps associated to the locations inherited from the approximation in \eqref{approx_root}. This exact sampling procedure avoids the approximation at the root of the hierarchy, and is explored for normalized random measures in \cite{lijoi2020} and more recently in \cite{catalano2025}. 
	\end{remark}

	\section{Applications to synthetic and real datasets} \label{sec:simulation_study}
	
	We consider synthetic datasets to assess the performance of our Bayesian nonparametric model, compared to both frequentist and Bayesian alternatives. This evaluation underscores how the borrowing of information across hazard rates, induced by the hierarchical structure, enhances the accuracy of posterior estimates. 
	Moreover, we present an application to a real dataset obtained from the bone marrow transplant registry of the European Blood and Marrow Transplant (EBMT) Group. 
	Section~\ref{supsec:applications} provides further details on the simulated datasets, diagnostic plots, and additional figures.
	Another application to a dataset of melanoma patients is reported in Section~\ref{supsec:melanoma}. Finally, we compare our method to that of \cite{sparapani2020} on both synthetic and real datasets in Section~\ref{supsec:sparapani}.
	
	\subsection{A simulation study with three competing risks}
	\label{sec:simulation_three_risks}
	
	First, we consider a simulation scenario with $D = 3$ competing sources of risk. The simulated data are generated according to the latent failure times approach (see Section~\ref{sec:intro}). In particular, 
	\begin{equation} \label{data_generation}
		\begin{gathered}
			Y_{i,1} \stackrel{\mbox{\scriptsize iid}}{\sim} \text{Weibull} \big(\xi_{1} = 1.2, \,\lambda_{1}=1\big), \qquad
			Y_{i,2} \stackrel{\mbox{\scriptsize iid}}{\sim} \text{Weibull}\big(\xi_{2} = 1.6,\,\lambda_{2}=1\big), \\
			Y_{i,3} \stackrel{\mbox{\scriptsize iid}}{\sim} \text{Weibull}\big(\xi_{3} = 2.4,\,\lambda_{3}=1\big),
		\end{gathered}	
	\end{equation}
	with $i=1,\ldots,n$, where $\xi_{\delta}$ and $\lambda_{\delta}$ are shape and scale parameters, for $\delta\in\set{1,2,3}$. The observed survival time is $T_i=\min\{Y_{i,1},Y_{i,2},Y_{i,3}\}$ and $\Delta_i = \delta$ if and only if $T_i = Y_{i,\delta}$. The sample size is $n = 300$. Note that the number of subjects incurring in each competing event is a r.v.~itself, and the parameters in \eqref{data_generation} are chosen to yield a fairly balanced allocation of the observations to the different event types.
	
	We rely on the \emph{generalized gamma} hCRM prior with Dykstra-Laud kernel, and implement both algorithms outlined in Section~\ref{sec:gibbs}. The hyperparameters of our model are $\beta = \beta_0 = 1.0$ and $\sigma = \sigma_0 = 0.25$, while $\Lambda_0(dx) = \theta\,dx$ for $\theta > 0$ is proportional to the Lebesgue measure. Moreover, exponential hyperpriors with large means, reflecting a non--informative specification, are assigned to both $\theta$ and the kernel parameter $\gamma$. 
	The MCMC procedure is run for $25,000$ iterations, with a burn-in of $5,000$; after thinning, we collect $2,000$ samples from the latent partition structure to compute posterior estimates with the marginal algorithm.  
	For the quantification of posterior uncertainty, we generate $10$ independent realizations from the posterior of the hCRMs, for each sample from the latent structure. Diagnostic tools suggest convergence of the Markov chains, which show a fairly good mixing; indeed, for the marginal algorithm, the average effective sample size for posterior survival function estimates, evaluated on a grid of points, is about $1,393$. 
	
	\begin{figure}
		\centering
		\includegraphics[width=0.6\textwidth]{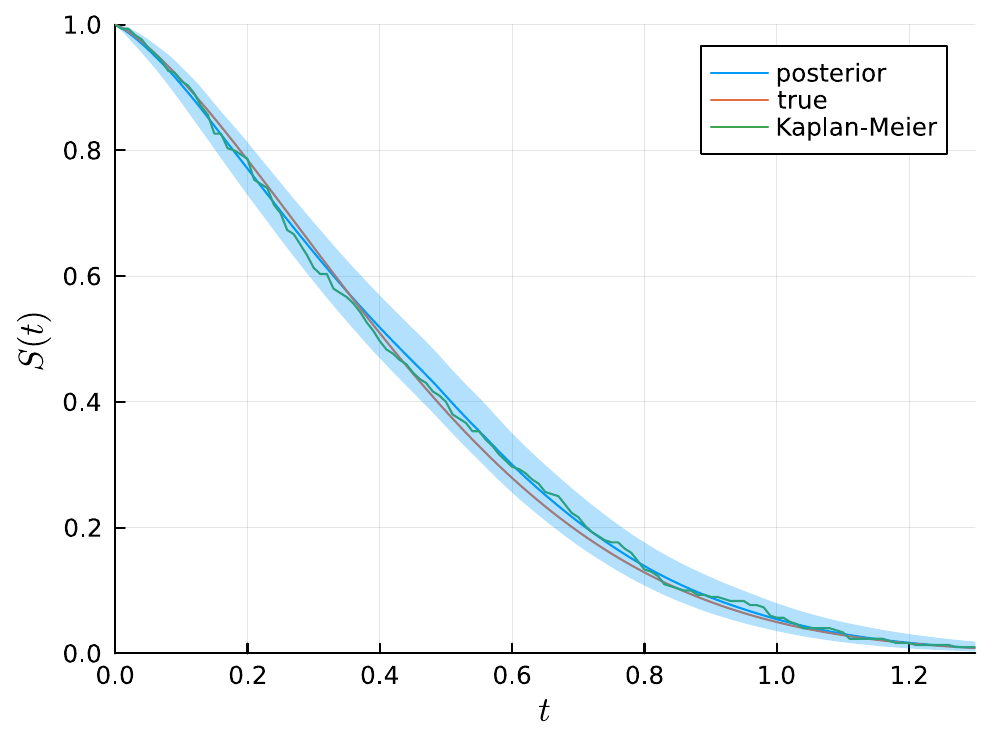}
		\captionsetup{width=0.85\textwidth,font=small}
		\caption{Survival function for a synthetic dataset with $n=300$ data points: posterior estimates obtained via the marginal algorithm (blue), true curve (orange), and Kaplan-Meier estimate (green); the pointwise $0.95$ posterior credible band is also displayed.} 
		\label{fig:survival}
	\end{figure}

	Figure~\ref{fig:survival} compares the true survival function implied by \eqref{data_generation}, its Kaplan-Meier estimate and our Bayesian posterior estimate, obtained via the marginal algorithm. The latter successfully recovers the ground truth: its Kolmogorov distance from the true curve ($0.0242$) is substantially smaller than that of the Kaplan-Meier estimate ($0.0330$). Furthermore, the pointwise $0.95$ credible band entirely covers the true function. 
	Figure~\ref{fig:incidence} compares the ground truth with the posterior estimates of the cause--specific incidence functions, subdistributions and prediction curves. The proposed model accurately recovers the true curves. Indeed, the crossing time (around $t = 0.6$) of the cause--specific incidence functions is correctly detected, with pointwise credible bands slightly overlapping for times smaller than the crossing. Moreover, the posterior estimates of the subdistribution functions align with the frequentist Aalen-Johansen nonparametric estimates, and the corresponding pointwise credible bands entirely contain the true subdistribution functions. 
	Finally, the prediction curves slightly depart from the true proportions for large values of $t$, as the observations do not inform prediction curves beyond the largest observed survival time; refer to the discussion in Section~\ref{sec:post_GG}.
	
	\begin{figure}[t]
		\centering
		\noindent\makebox[\textwidth]{
			\includegraphics[width=0.5\textwidth]{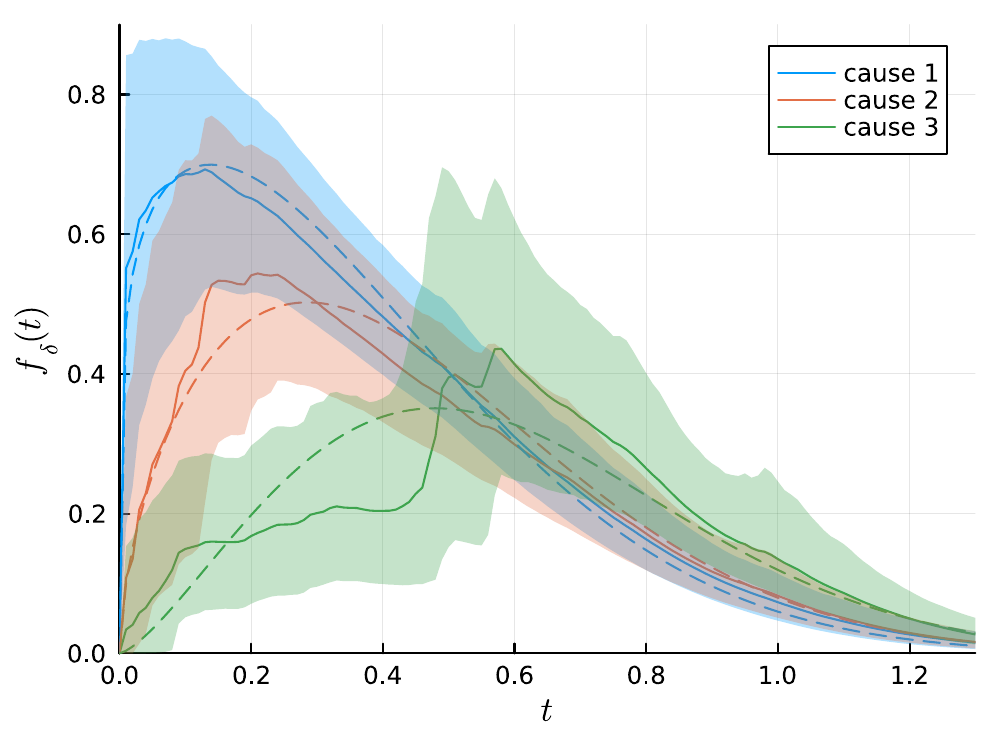}
			\includegraphics[width=0.5\textwidth]{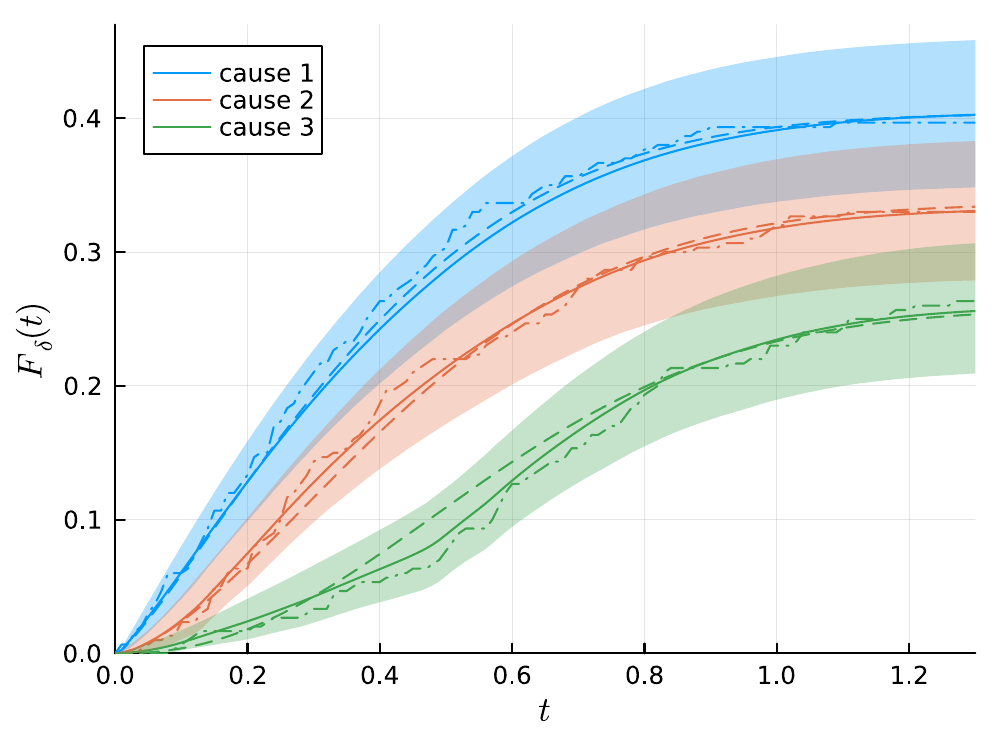}
		}
		\includegraphics[width=0.5\textwidth]{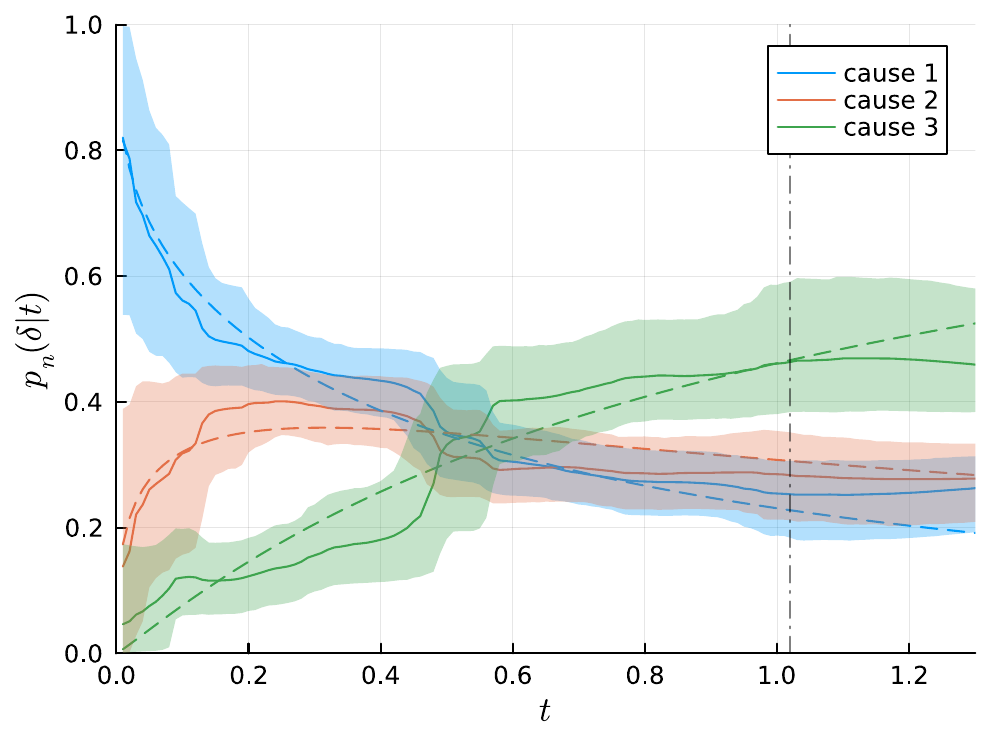}
		\captionsetup{width=0.85\textwidth,font=small}
		\caption{Cause--specific incidence functions (left), subdistributions (middle) and prediction curves (right); true curves (dashed) and posterior estimates (solid), with corresponding pointwise $0.95$ credible bands; estimates of subdistributions are also compared with the Aalen-Johansen estimate (dash-dotted).}
		\label{fig:incidence}
	\end{figure}

	\subsection{Dependent vs.~independent priors}
	\label{sec:simulation_independent}
	
	The proposed model assumes dependence among hazard rates associated to different sources of risk. This is one of its main strengths, as it allows for sharing of information between hazards. We illustrate this phenomenon by means of a simulation study, and contrast the performance of our model with a Bayesian nonparametric alternative having independent cause--specific hazard rates. The latter can be regarded as a special case of hierarchical prior distribution over the vector of random measures in \eqref{model}, namely $\boldsymbol{\tilde \mu} = (\tilde \mu_1, \dots, \tilde \mu_D) \mid \crm_0 \, \overset{\text{\scriptsize iid}}{\sim} \, \text{CRM}(\nu)$, where $\nu(ds,dx) = \rho(ds) \, \Lambda_0(dx)$ is a fixed homogeneous L\'evy intensity. 
	Therefore, $\crm_1,\ldots,\crm_D$ are also unconditionally i.i.d., since $\nu$ does not depend on $\crm_0$. Such prior specification implies that the posterior estimate of the survival function for $T$ coincides with the product of the estimates of $\displaystyle \exp\left(-\int_0^t \int_\X k(s,x)\,\crm_\delta(dx)\, ds\right)$, for each $\delta \in \set{1,\ldots,D}$. These can be interpreted as the survival functions for each event type, in the hypothesis of absence of competing risks, i.e., considering the occurrences of competing events as censored observations.
	
	Consider a data--generating model involving $D = 3$ competing and independent sources of risk. For each event type $\delta \in \set{1,2,3}$, the latent survival time $Y_{i,\delta}$ is sampled from a mixture of a cause--specific and a common distribution, with equal weights; see Section~\ref{supsec:applications} for details. The numerical results are obtained by averaging over $100$ simulated datasets, each consisting of $n = 100$ observations, generated according to the potential survival times approach. The hyperparameters specification coincides with the one adopted in Section~\ref{sec:simulation_three_risks}.
	
	\begin{table}
		\centering
		\begin{tabular}{llcccccc}
			\hline
			&	& \multicolumn{3}{c}{incidence function} & \multicolumn{3}{c}{subdistribution} \\ 
			\cline{3-8} 
			&	& $\delta=1$	& $\delta=2$ & $\delta=3$ & $\delta=1$	& $\delta=2$ & $\delta=3$ \\ 
			\hline
			\multirow{2}{*}{hierarchical} 
			& marginal 		& \textbf{.0952} & \textbf{.0833} & .0963 & \textbf{.1274} & \textbf{.1179} & .1280	\\
			& conditional  	& .0964 & .0836 & \textbf{.0956} & .1293 & .1185 & \textbf{.1277} \\ 
			\hline
			\multirow{2}{*}{independent}
			& marginal 		& .1081 & .0932 & .0974 & .1385 & .1291 & .1339 \\
			& conditional  	& .1085 & .0934 & .0971 & .1392 & .1298 & .1338 \\ 
			\hline
			frequentist	& 	& 	& 	& 	& .1810 & .1976 & .2060 \\
			\hline
		\end{tabular}
		\captionsetup{font=small}
		\caption{Comparison of rescaled total variation distances between estimated and true incidence functions, and rescaled Kolmogorov distances between estimated and true subdistribution functions, for the hierarchical and the independent model, using marginal and conditional estimation approaches. Distances are averaged over $100$ simulated datasets.}
		\label{table:errors}
	\end{table}
	
	Table~\ref{table:errors} provides a comparison of the performance of the proposed model with the \emph{independent} model. Bayesian estimates are computed either via the marginal algorithm in Section~\ref{sec:marginalMCMC} or averaging over the posterior samples generated via the conditional sampling approach in Section~\ref{sec:conditionalMCMC}. In particular, the average errors in estimating the cause--specific incidence function and subdistribution for each event type are reported for both models and both estimation strategies. In addition, the average estimation error for the Aalen-Johansen estimator of the subdistribution functions is also displayed. The comparison is based on total variation and Kolmogorov distances, which are rescaled according to the true probabilities of occurrence of each event type. Specifically, we consider the measures of error
	\begin{equation*}
		e^{\text{TV}}_\delta = \frac{1}{2 \pi_\delta} \int_{\R^+} \vert \hat f_\delta(t) - f_\delta(t) \vert \, dt, \qquad e^{\text{K}}_\delta = \frac{1}{\pi_\delta} \sup_{t \in \R^+} \vert \hat F_\delta(t) - F_\delta(t) \vert, \qquad \delta = 1, \dots, D,
	\end{equation*}
	where $\hat f_\delta$ and $f_\delta$ are the estimated and true incidence functions, $\hat F_\delta$ and $F_\delta$ are the estimated and true subdistribution functions, and $\pi_\delta$ is the true probability of occurrence for event type $\delta$, for any $\delta$. This rescaling of standard distances is motivated by the need to obtain error measurements that are comparable across different event types. Indeed, the rescaled versions of the incidence and subdistribution functions correspond to proper density and distribution functions, respectively.
	Our hierarchical model systematically outperforms the independent model in the estimation of both incidence functions and subdistributions. Furthermore, the average number of common latent locations in the hierarchical model is $17.79$, whereas the alternative model employs on average $35.46$ latent locations among the independent random measures. This significant difference (roughly double) suggests that a substantially larger number of components is required to compensate for the lack of information borrowing in the independent model. Nonetheless, a deeper probabilistic understanding of the role of information borrowing in the competing-risks setting will require further investigation.

	\subsection{Applications to clinical datasets} 
	\label{sec:applications}

	We apply the proposed Bayesian nonparametric model to a clinical dataset from the bone marrow transplant registry of the European Blood and Marrow Transplant (EBMT) Group, publicly available in the \texttt{crrSC} package in R. Data from this registry have previously been used to illustrate methods for competing risks \citep{zhou2012}, and this specific dataset was recently used in \cite{schmitt2023} to compare different approaches to clustered data. Refer to Section~\ref{supsec:melanoma} for an additional application on a dataset of patients affected by stage I melanoma.

	Real--world datasets require a slight extension of our model to accommodate both right--censored observations and categorical predictors. For right--censored data, it is enough to modify the likelihood in \eqref{beginning_likelihood} by removing censored observations from the factor that multiplies the exponential term. 
	The distributional results in Sections~\ref{sec:marginal} and~\ref{sec:posterior} and algorithms in Section~\ref{sec:gibbs} are easily adapted to the presence of right--censored observations once the multiplicative intensity likelihood in \eqref{beginning_likelihood} is replaced by \lesspace
	\begin{equation*}
		\mathcal{L}(\bm{\crm}; \boldsymbol T_{1:n}, \boldsymbol \Delta_{1:n}) 
		= \exp \left( - \sum_{\ell=1}^D \sum_{i=1}^n \int_0^{T_i} \int_\X k(s;x) \, \tilde \mu_\ell(dx)\,ds \right) \: \prod_{i \notin \mathscr{C}} \: \int_\X k(T_i;x) \, \tilde \mu_{\Delta_i}(dx),
	\end{equation*}
	where $\mathscr{C} \subseteq \set{1,\ldots,n}$ denotes the set of censored observations.
	This implies that the additional latent r.v.~$\boldsymbol X_{1:n}$ and $\boldsymbol Z_{1:n}$ are introduced exclusively for non--censored observations, which are those accommodated within the latent partition. In contrast, censored observations contribute to the augmented likelihood only through $\displaystyle K_n(x)=K_n(x; \boldsymbol T_{1:n})=\sum_{i=1}^n \int_0^{T_i}k(s;x)\, ds$.
	
	The inclusion of predictors is based on a Cox regression, or \emph{proportional hazards}, model, which defines a semiparametric prior on the hazard rates. Specifically, cause--specific hazard rates \eqref{mixture_hazard} are extended to \begin{equation*}
		\tilde h_\delta(t; \zeta) = \exp\left(\eta^T \zeta \right) \int_\X k(t;x) \, \crm_\delta(dx), \qquad \delta = 1, \dots, D,
	\end{equation*}
	where $\eta$ are the regression coefficients and $\zeta$ denotes the categorical predictors. The extension of distributional results and sampling algorithms requires minimal efforts, namely the re--definition of the quantities in \eqref{kernel_quantities} as $\displaystyle Q(\boldsymbol T_{1:n}, \boldsymbol X_{1:n}, \boldsymbol \zeta_{1:n}) = \prod_{j=1}^k \, \prod_{i \in C_j^X} \exp\left(\eta^T \zeta_i\right) \, k(T_i;X^*_j)$ and
	\begin{equation*}
		K_n(x) = K_n(x; \boldsymbol T_{1:n}, \boldsymbol \zeta_{1:n}) = \sum_{i=1}^n \exp\left(\eta^T \zeta_i\right) \int_0^{T_i} k(s;x)\,ds,
	\end{equation*}
	where $\boldsymbol \zeta_{1:n} = (\zeta_1, \dots, \zeta_n)$ is the collection of observed predictors. Our application involves a single binary predictor $\zeta_i \in \set{0,1}$, leading to a single regression coefficient $\eta \in \R$, for which we specify a non--informative centered Gaussian prior with large variance. Moreover, since the increasing hazard rates assumption from the Dykstra-Laud kernel appears too restrictive, we resort to the Ornstein-Uhlenbeck kernel (see Section~\ref{sec:mixture_hazard}) with a non-informative hyperprior on the rate parameter.
	
	\begin{figure}[t]
		\centering
		\noindent\makebox[\textwidth]{
			\includegraphics[width=0.5\textwidth]{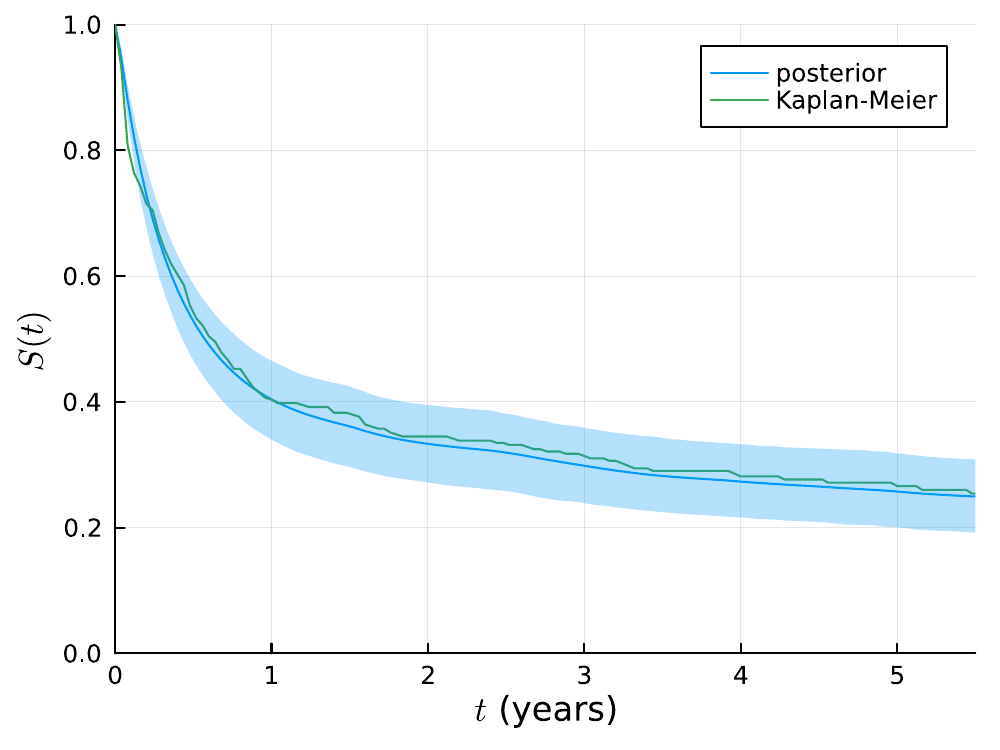}
			\includegraphics[width=0.5\textwidth]{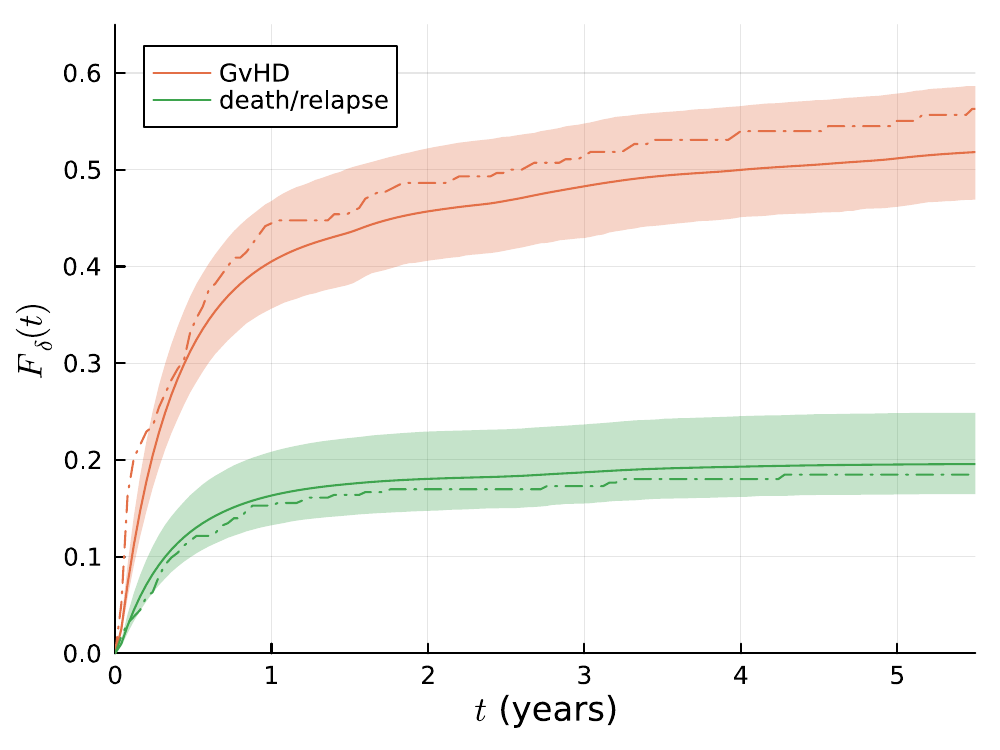}
		}
		\includegraphics[width=0.5\textwidth]{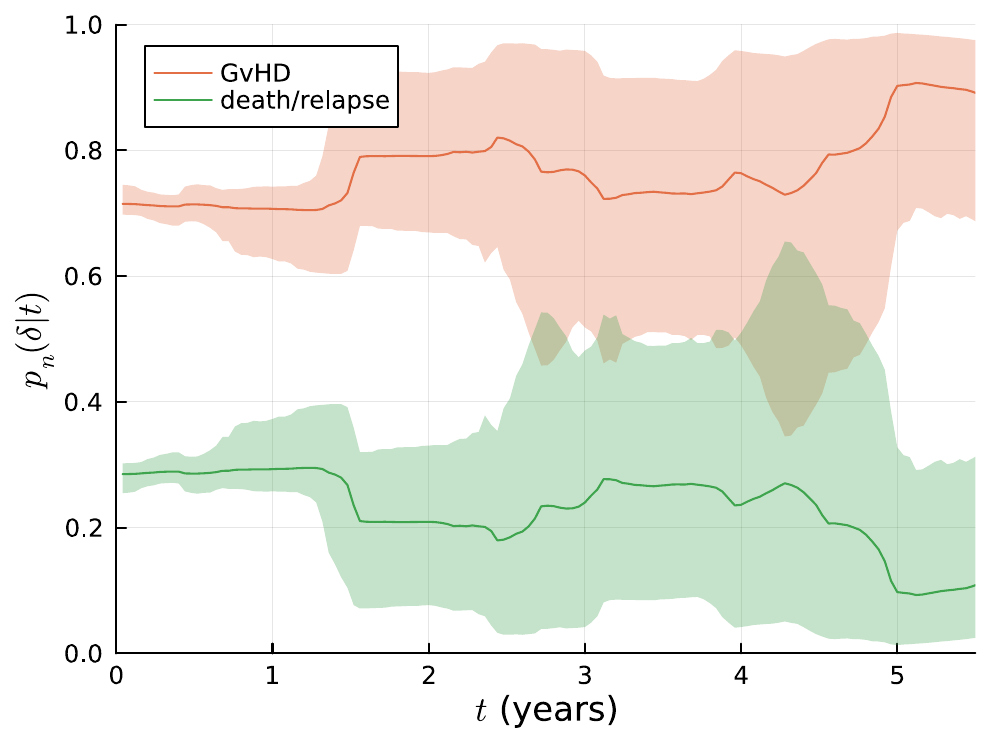}
		\captionsetup{width=0.85\textwidth,font=small}
		\caption{EBMT dataset: posterior estimates of the survival function (left), the subdistributions for the primary (GvHD) and competing (death/relapse) event types (middle), and prediction curves (right); the first two estimates are compared with the frequentist counterparts. Curves are related to patients who experienced graft from bone marrow cells.}
		\label{fig:transplantation}
	\end{figure}

	The EBMT dataset includes data for $400$ patients diagnosed with acute myeloid leukemia, who underwent a bone marrow transplantation in $153$ different hospitals. The primary event of interest is the occurrence of either acute or chronic \emph{Graft-versus-Host-Disease} (GvHD), with death or relapse without GvHD as competing events; survival times are expressed in years from the graft. Of the patients, $194$ ($48.5$\%) observed GvHD, $74$ ($18.5$\%) experienced a competing event and the rest ($33.0$\%) were censored. The median follow-up time is $0.50$ years, and the maximum observed follow-up time for the primary event of interest is $7.30$ years.  The main categorical predictor is the stem cells source for transplantation: bone marrow grafts were performed on $222$ patients ($55.5$\%), while $178$ ($44.5$\%) patients received peripheral blood stem cells. The posterior estimates of the survival function, subdistributions for the primary and competing sources of risk, and prediction curves are displayed in Figure~\ref{fig:transplantation}, along with pointwise $0.95$ credible bands, for patients who experienced graft from bone marrow cells; the corresponding Kaplan-Meier and Aalen-Johansen estimators are also shown for comparison. The posterior estimate of the regression parameter $\eta$ is $-0.079$, with $0.95$ credible interval $[-0.325, 0.169]$, indicating no significant difference between the stem cell sources. These findings align with \cite{zhou2012}, who reported an estimate of $-0.051$ ($p$-value: $0.35$) using frequentist techniques based on the Fine-Gray proportional hazards model \citep{finegray1999} on a larger dataset. Finally, the posterior distribution of the kernel rate parameter $\kappa$ provides strong evidence against the increasing hazard rate assumption. 

	\subsection*{Acknowledgements}
	
	The authors are grateful to the Associate Editor and two anonymous referees for their constructive and insightful suggestions, which led to major improvements to the paper. The research was largely conducted while Claudio Del Sole was a PhD student at Bocconi University (Milan, Italy).

	\subsection*{Funding}
	
	The authors were partially supported by the European Union - Next Generation EU PRIN-PNRR (project 2022CLTYP4).
	
	\medskip
	
	\singlespacing
	\bibliographystyle{chicago}
	\bibliography{arXiv_biblio}
	
	\clearpage
	\onehalfspacing
	
	\appendix
	
	\renewcommand{\thesection}{S\arabic{section}}
	\setcounter{section}{0}
	\renewcommand{\thesubsection}{S\arabic{section}.\arabic{subsection}}
	\setcounter{subsection}{0}
	\renewcommand{\theequation}{S\arabic{equation}}
	\setcounter{equation}{0}
	\renewcommand{\thefigure}{S\arabic{figure}}
	\setcounter{figure}{0}
	\renewcommand{\thetable}{S\arabic{table}}
	\setcounter{table}{0}
	
	\begin{center}
		\emph{\LARGE Supplementary Material to:} \\[\baselineskip]
			{\LARGE Principled Estimation and Prediction with} \\[.5\baselineskip]
			{\LARGE Competing Risks: a Bayesian Nonparametric Approach}
	\end{center}

	\bigskip
	
	\begin{abstract}
		\noindent This supplement contains a discussion of an alternative modeling approach based on latent failure times in Section~\ref{supsec:latent_times}, background material and a key technical result in Section~\ref{supsec:crm}, and the proofs of the main statements in Section~\ref{supsec:proofs}. Section~\ref{supsec:sampling} is devoted to additional technical details on the model and algorithms. Complements to the applications presented in the main manuscript, as well as further illustrations with synthetic and real data, including a comparison with an alternative tree--based method, are provided in Sections~\ref{supsec:applications}--\ref{supsec:sparapani}.
		Numbered elements (i.e., theorems, equations, figures, and so on) specific to this Supplementary Material, and not included in the main manuscript, are labeled with an `S' prefix. References to numbered items without the `S' prefix refer to the main manuscript.
	\end{abstract}
	
	\clearpage
	\section{Latent failure times approach}
	\label{supsec:latent_times}
	
	Section~\ref{sec:competing_risks} of the manuscript mentions an alternative approach to competing risks, which is briefly outlined here. This approach relies on the introduction of latent time--to--event variables $Y_{i,1}, \dots, Y_{i,D}$, for each observation $i = 1, \dots, n$. The observed survival times are then defined as $T_i=\min\{Y_{i,1},\ldots,Y_{i,D}\}$. In a Bayesian nonparametric framework, this setting can be accommodated by assuming partial exchangeability of the array $\set{(Y_{i,\delta})_{i\ge 1} \colon \delta=1,\ldots,D}$. This is a very intuitive distributional invariance property that extends the notion of exchangeability. Specifically, let  $\pi_1,\ldots,\pi_D$ denote finite permutations on the integers, i.e.~for any $\delta$ there exists $I_\delta\in\mathds{N}$ such that $\pi_\delta(i)=i$ for any $i\ge I_\delta$. Hence, partial exchangeability corresponds to the following distributional equivalence:
	\begin{equation*}
		\set{(Y_{i,\delta})_{i\ge 1} \,\colon\, \delta=1,\ldots,D}
		\,\stackrel{\mbox{\scriptsize d}}{=}\,
		\set{(Y_{\pi_\delta(i),\delta})_{i\ge 1} \,\colon\, \delta=1,\ldots,D}.
	\end{equation*}
	This implies that the latent failure times are: (i) exchangeable when they share the same cause $\delta$; (ii) conditionally independent across different causes. In view of a representation theorem for partially exchangeable arrays, there exists a vector of random probability measures $\bm{\tilde p} = (\tilde p_1,\ldots,\tilde p_D)$ on $\R^+$ such that
	\begin{equation*}
		\label{eq:part_exchange}
		(Y_{i_1,\delta_1},\ldots,Y_{i_k,\delta_k}) \mid \bm{\tilde p} \: \sim \:
		\tilde p_{\delta_1}\times\,\cdots\,\times\,\tilde p_{\delta_k},
	\end{equation*} 
	for any $k\ge 1$, $\delta_1,\ldots,\delta_k\in\set{1,\ldots,D}$ and $i_1,\ldots,i_k\in\mathds{N}$. Remarkably, the dependence among components $\tilde p_1,\ldots,\tilde p_D$ implies unconditional dependence of the $Y_{i,\delta}$'s, both within each cause--specific group and across different causes. This differs from standard approaches that assume independent latent time-to-event for the different sources of risk.
	
	\clearpage
	\section{Completely random measures}
	\label{supsec:crm}
	
	The model proposed in \eqref{model} for analyzing competing risks data relies on completely random measures, originally introduced in \cite{kingman1967}, as main building blocks. Therefore, before presenting the proofs of the main results, we outline key properties of this class of random measures.
	
	\begin{definition}
		{\rm A random element $\tilde \mu$, defined on $\X$ and taking values in the space of  boundedly finite measures, is a \emph{completely random measure} (CRM) if, for any collection of pairwise disjoint sets $A_1, \dots, A_n$, the random variables $\crm(A_1), \dots, \crm(A_n)$ are mutually independent. }
	\end{definition}
	
	\noindent We refer the reader to \cite{daley2007} and \cite{kingman1993} for exhaustive accounts. Throughout, we consider CRMs without deterministic component and fixed points of discontinuity. These are almost surely (a.s.) discrete and can be represented as 
	\begin{equation} \label{jumps_atoms}
		\crm = \sum_{h \ge 1} S_h \, \delta_{X^*_h},
	\end{equation}
	where $(S_h)_{h \ge 1}$ is a sequence of positive random jumps and $(X^*_h)_{h \ge 1}$ a sequence of random locations. Remarkably, the pairs $(S_h,X_h^*)_{h\ge 1}$ are the points of a Poisson random measure $\tilde{N}$ on $\R^+ \times \X$ having mean intensity $\nu$, which implies that, for any $A$ such that $\nu(A)<\infty$, the integer--valued r.v.~$\tilde{N}(A)=\mbox{card}(A\cap\set{(S_h,X_h^*) \colon h\ge 1})$ has Poisson distribution with mean $\nu(A)$. 
	In fact, the discreteness of their realizations stems from their characterization as linear functionals of Poisson random measures
	\begin{equation*}
		\crm(dx) = \int_{\R^+} s \, \tilde N(ds,dx),
	\end{equation*}
	where $\tilde N$ is a Poisson random measure on the product space $\R^+ \times \X$. 
	The measure $\nu$ is also called \textit{Lévy intensity} of $\crm$ and it uniquely identifies $\crm$, which motivates the notation $\crm \sim \text{CRM}(\nu)$. This can be also seen from the Laplace functional transform of $\crm$, which coincides, for any non-negative bounded $f \colon \X \mapsto \R^+$, with
	\begin{equation} \label{laplace_transform}
		\E \left[ \exp \left( - \int_\X f(x) \, \crm(dx) \right) \right] = \exp \left( - \int_{\R^+ \times \X} (1 - e^{-s\,f(x)}) \, \nu(ds,dx) \right).
	\end{equation}
	For any bounded set $A \subset \X$, the measure $\nu$ must satisfy the integrability condition
	\begin{equation*}
		\label{eq:integrab_nu}
		\displaystyle \int_{\R^+ \times A} \min(s,1) \, \nu(ds,dx) < \infty.
	\end{equation*}
	Moreover, we assume $\nu(\R^+ \times A) = \infty$, which corresponds to $\tilde \mu$ having an infinite number of random jumps on bounded sets and thus guarantees that $\tilde \mu$ is strictly positive a.s.~\citep{regazzini2003}. Such an assumption is common in Bayesian nonparametric modeling and is known as \emph{infinite activity} property. The L\'evy intensity can be conveniently decomposed as
	\begin{equation*}
		\nu(ds,dx) = \rho(ds \, \vert \, x) \, \alpha(dx),
	\end{equation*}
	where $\rho$ is a transition kernel characterizing the random jumps $S_h$ and $\alpha$ is a $\sigma$--finite measure characterizing the random locations $X_h^*$. If the jumps are independent of the locations, we have a \emph{homogeneous} CRM with $\nu(ds,dx) = \rho(ds) \, \alpha(dx)$. For instance, the popular \emph{gamma} CRM arises when $\rho(ds) = s^{-1} \, e^{-\beta \, s} \, ds$, for $ \beta > 0$; from \eqref{laplace_transform}, it follows that $\crm(A)$ has gamma distribution  with shape parameter $\alpha(A)$ and scale parameter $\beta$. Some recent extensions of CRMs and of their uses in Bayesian nonparametrics can be found in \cite{passeggeri2023,passeggeri2024}.
	
	\subsection{Preliminary technical lemma}
	
	We provide a technical result about completely random measures (CRMs), which constitutes the backbone for the following proofs. The main tool used for its derivation is the Laplace functional of CRMs, in the spirit of the general approach of \cite{james2009}.
	Consider a homogeneous CRM $\crm$ with Lévy intensity $\nu(ds, dx) = \rho(ds) \, \alpha(dx)$, where $\alpha$ is a $\sigma$--finite diffuse measure on $\X$. The Laplace exponent and cumulants of $\crm$ are, respectively, defined as
	\begin{equation*}
		\psi(u) = \int_{\R^+}\left(1-e^{-us}\right) \rho(ds), \qquad 
		\tau(m; u) = \int_{\R^+} s^m e^{-us} \rho(ds).
	\end{equation*}
	
	\begin{lemma}
		For any non-negative measurable function $f \colon \X \mapsto \R^+$, and distinct values $x_1^*, \ldots, x_k^* \in \mathbb{X}$, with multiplicities $n_1, \ldots, n_k \ge 1$, one has
		\begin{multline} \label{key_identity_formal}
			\qquad \lim _{\varepsilon \rightarrow 0} \ \frac{ \displaystyle \E \left[ \exp \left(-\int_{\mathbb{X}} f(x) \tilde{\mu}(d x)\right) \prod_{j=1}^k \tilde{\mu}(B_{\varepsilon}(x_j^*))^{n_j} \right]}{\displaystyle \prod_{j=1}^k \alpha (B_{\varepsilon}(x_j^*))} \\
			= \exp \left( - \int_\X \psi(f(x)) \, \alpha(dx) \right) \, \prod_{j=1}^k \tau(n_j; f(x^*_j)), \qquad
		\end{multline}
		where $B_\varepsilon(\bar x) = \set{x \in \X \colon d(x, \bar x) < \varepsilon}$ denotes the $\varepsilon$-ball centered at $\bar x \in \X$.
	\end{lemma}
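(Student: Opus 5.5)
The plan is to exploit the Poisson representation $\crm(dx)=\int_{\R^+} s\,\tilde N(ds,dx)$ and reduce the expectation to a Palm-type (Mecke) computation. For small $\varepsilon$ the balls $B_\varepsilon(x_1^*),\dots,B_\varepsilon(x_k^*)$ are pairwise disjoint, since the $x_j^*$ are distinct. Complete randomness then factorizes the expectation into independent pieces. One piece is $\exp(-\int_{A_\varepsilon^c} f\,d\crm)$, where $A_\varepsilon=\bigcup_j B_\varepsilon(x_j^*)$. The other pieces are, for each $j$, $\E[e^{-\int_{B_\varepsilon(x_j^*)} f\,d\crm}\,\crm(B_\varepsilon(x_j^*))^{n_j}]$. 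By \eqref{laplace_transform}, the first factor equals $\exp(-\int_{A_\varepsilon^c}\psi(f(x))\,\alpha(dx))$. Provided $\int_\X\psi(f)\,d\alpha<\infty$ (otherwise both sides vanish and are treated separately), this converges to $\exp(-\int_\X\psi(f)\,d\alpha)$, because $\alpha$ is diffuse and so $\alpha(A_\varepsilon)\to0$.

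For each ball $B=B_\varepsilon(x_j^*)$, I will compute $\E[e^{-\int_B f\,d\crm}\crm(B)^{n}]$ by differentiating a Laplace transform. Introduce $\lambda\ge0$ and note that
\[
\E\Bigl[e^{-\int_B (f+\lambda)\,d\crm}\Bigr]=\exp\Bigl(-\int_B\psi(f(x)+\lambda)\,\alpha(dx)\Bigr)=:e^{-\Phi(\lambda)},
\]
and that $(-1)^n\partial_\lambda^n$ at $\lambda=0$ yields the desired moment. Faà di Bruno gives
\[
(-1)^n\partial^n_\lambda e^{-\Phi}\big|_{0}=e^{-\Phi(0)}\sum_{\pi}\prod_{b\in\pi}\int_B\tau(|b|;f(x))\,\alpha(dx),
\]
where the sum runs over set partitions $\pi$ of $\{1,\dots,n\}$. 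Here I use $(-1)^{m+1}\partial^m_\lambda\psi(u+\lambda)=\tau(m;u+\lambda)$, obtained by differentiating under the integral (dominated convergence justifies this for $u>0$, or for $u\ge0$ whenever the relevant moments are finite). Equivalently, one can apply the multivariate Mecke formula to $\crm(B)^n=\sum_{h_1,\dots,h_n}S_{h_1}\cdots S_{h_n}\mathds 1\{\cdots\}$ grouped by coincidences of indices; this gives the same partition sum.

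Dividing by $\alpha(B)$ and letting $\varepsilon\to0$, the single-block partition contributes $\alpha(B)^{-1}\int_B\tau(n;f(x))\,\alpha(dx)\to\tau(n;f(x_j^*))$, provided $x\mapsto\tau(n;f(x))$ is continuous at $x_j^*$. A partition with $m\ge2$ blocks contributes $O(\alpha(B)^{m-1})\to0$, since each block integral is bounded by $\alpha(B)\sup_B\tau(|b|;f)$. The prefactor $e^{-\Phi(0)}\to1$. Taking the product over $j$ and combining with the first factor gives \eqref{key_identity_formal}.

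The main obstacle is the regularity needed at the limiting step. The statement says only that $f$ is measurable, so I would interpret the limit in the Lebesgue-differentiation sense ($\alpha$-a.e. $x_j^*$), or assume $f$ continuous. The latter holds in every application, since $f=K_n$ is continuous for the kernels considered. I also need local boundedness of $\tau(m;f(\cdot))$ near $x_j^*$ to kill the multi-block terms. When $f(x_j^*)>0$ this follows from $\tau(m;u)<\infty$ for $u>0$; the degenerate case $f(x_j^*)=0$ requires finite moments of $\rho$ and would be handled by monotone convergence, with both sides possibly infinite.
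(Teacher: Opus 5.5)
Your proposal is correct and follows essentially the same route as the paper: factorize over the disjoint balls by complete randomness, apply the Laplace functional on the complement, and obtain each $\E\bigl[e^{-\int_B f\,d\crm}\,\crm(B)^{n_j}\bigr]$ by differentiating $\exp\bigl(-\int_B\psi(f+u)\,d\alpha\bigr)$ in an auxiliary shift $u$, identifying the derivatives of $\psi$ with the cumulants $\tau$. Your explicit Fa\`a di Bruno bookkeeping (partitions with $m\ge 2$ blocks contribute $O(\alpha(B)^m)$) and your continuity and local-boundedness caveats make precise two things the paper leaves implicit: its $o(\alpha(B_\varepsilon(x_j^*)))$ remainder, and the regularity of $f$ needed in the final step $\alpha(B_\varepsilon(x_j^*))^{-1}\int_{B_\varepsilon(x_j^*)}\tau(n_j;f(x))\,\alpha(dx)\to\tau(n_j;f(x_j^*))$.
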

	
	\noindent Note that the relationship \eqref{key_identity_formal} can be rewritten less formally but more conveniently as
	\begin{multline} \label{key_identity}
		\qquad \E \left[ \exp \left( - \int_\X f(x) \, \tilde \mu(dx) \right) \prod_{j=1}^k \tilde \mu(dx^*_j)^{n_j}  \right] \\[8pt]
		= \exp \left( - \int_\X \psi(f(x)) \, \alpha(dx) \right) \prod_{j=1}^k \tau(n_j; f(x^*_j)) \, \alpha(dx^*_j). \qquad
	\end{multline}
	
	\noindent \textit{Proof.}
	Consider $\varepsilon > 0$ sufficiently small such that the $\varepsilon$-balls $B_{\varepsilon}(x_1^*), \dots, B_{\varepsilon}(x_k^*)$ are pairwise disjoint, and define the complement $\X^* = \X \setminus \left( \cup_{j=1}^k B_{\varepsilon}(x_j^*) \right)$. By the independence of the evaluations of a CRM on pairwise disjoint sets, one has
	\begin{multline*}
		\E \left[ \exp \left(-\int_{\mathbb{X}} f(x) \tilde{\mu}(d x) \right) \prod_{j=1}^k \tilde{\mu}(B_{\varepsilon}(x_j^*))^{n_j} \right] \\[8pt]
		= \E \left[ \exp \left(-\int_{\X^*} f(x) \tilde{\mu}(d x)\right) \right] \, \prod_{j=1}^k \E \left[ \exp \left(- \int_{B_{\varepsilon}(x_j^*)} f(x) \tilde{\mu}(dx) \right) \tilde{\mu}(B_{\varepsilon}(x_j^*))^{n_j} \right]. 
	\end{multline*}
	The first factor is the Laplace transform of $\tilde\mu$ restricted to $\X^*$ and evaluated at $f$. Therefore
	\begin{align*}
		\E \left[ \exp \left( - \int_{\X^*} f(x) \tilde{\mu}(dx) \right) \right] & = \exp \left( - \int_{\X^*} \left( 1-e^{-f(x)\,s} \right) \rho(ds)\,\alpha(dx) \right) \\[8pt]
		& = \exp \left( - \int_{\X^*} \psi(f(x))\,\alpha(dx) \right).
	\end{align*}
	By the monotone convergence theorem, in the limit as $\varepsilon \to 0$, we obtain
	\begin{align*}
		\lim_{\varepsilon \to 0} \, \E \left[ \exp \left( - \int_{\X^*} f(x) \tilde{\mu}(dx) \right) \right] & = \lim_{\varepsilon \to 0} \, \exp \left( - \int_{\X^*} \psi(f(x))\,\alpha(dx) \right) \\[8pt]
		& = \exp \left( - \int_{\X} \psi(f(x))\,\alpha(dx) \right).
	\end{align*}
	On the other hand, for $j = 1, \dots, k$, by the monotone convergence theorem, we also get
	\begin{align*}
		M(\varepsilon,x_j^*;n_j)
		& =\E \left[ \exp \left(- \int_{B_{\varepsilon}(x_j^*)} f(x) \tilde{\mu}(dx) \right) \tilde{\mu}(B_{\varepsilon}(x_j^*))^{n_j} \right] \\[8pt]
		& = \lim_{u \to 0} \, \E \left[ \exp \left( - \int_{B_{\varepsilon}(x_j^*)} f(x) \tilde{\mu}(dx) - u \, \tilde{\mu}(B_{\varepsilon}(x_j^*)) \right) \tilde{\mu}(B_{\varepsilon}(x_j^*))^{n_j} \right].
	\end{align*}
	Given that, for $n_j \ge 0$,
	\begin{equation*}
		\frac{\mathrm{d}^{n_j}}{\mathrm{d}u^{n_j}} \, \exp \left( - u \, \tilde{\mu}(B_{\varepsilon}(x_j^*)) \right) = (-1)^{n_j} \, \tilde{\mu}(B_{\varepsilon}(x_j^*))^{n_j} \, \exp \left( - u \, \tilde{\mu}(B_{\varepsilon}(x_j^*)) \right),
	\end{equation*}
	the expression above becomes 
	\begin{align*}
		M(\varepsilon,x_j^*;n_j) & = (-1)^{n_j} \, \lim_{u \to 0} \, \E \left[ \frac{\mathrm{d}^{n_j}}{\mathrm{d}u^{n_j}} \, \exp \left( - \int_{B_{\varepsilon}(x_j^*)} f(x) \tilde{\mu}(dx) - u \, \tilde{\mu}(B_{\varepsilon}(x_j^*)) \right) \right] \\[8pt]
		& = (-1)^{n_j} \, \lim_{u \to 0} \,  \frac{\mathrm{d}^{n_j}}{\mathrm{d}u^{n_j}} \, \E \left[ \exp \left( - \int_{B_{\varepsilon}(x_j^*)} (f(x)+u) \tilde{\mu}(dx) \right) \right].
	\end{align*}
	Again, the expectation coincides with the Laplace transform of $\tilde \mu$ restricted to $B_{\varepsilon}(x_j^*)$ and evaluated at $f+u\,1$, where $x\mapsto 1(x)=1$ is the identity function. Hence we have
	\begin{equation*}
		M(\varepsilon,x_j^*;n_j) = (-1)^{n_j} \, \lim_{u \to 0} \,  \frac{\mathrm{d}^{n_j}}{\mathrm{d}u^{n_j}} \, \exp \left( - \int_{B_{\varepsilon}(x_j^*)} \psi(f(x)+u)\, \alpha(dx) \right).
	\end{equation*}
	Since $\alpha$ is assumed to be finite and diffuse, the measure of the $\varepsilon$-ball $B_{\varepsilon}(x_j^*)$ vanishes for $\varepsilon \to 0$, and the expression above can be rewritten as
	\begin{multline*}
		M(\varepsilon,x_j^*;n_j) = (-1)^{n_j+1} \, \lim_{u \to 0} \, \int_{B_{\varepsilon}(x_j^*)} \frac{\mathrm{d}^{n_j}}{\mathrm{d}u^{n_j}} \, \psi(f(x)+u)\, \alpha(dx) \\
		\times  \, \exp \left( - \int_{B_{\varepsilon}(x_j^*)} \psi(f(x)+u)\, \alpha(dx) \right) + \, o \left(\alpha(B_{\varepsilon}(x_j^*))\right).
	\end{multline*}	
	By definition of Laplace exponent of $\tilde \mu$, its derivatives coincide with the cumulants of $\tilde \mu$,
	\begin{align*}
		\frac{\mathrm{d}^m}{\mathrm{d}u^m} \, \psi(u) & = \frac{\mathrm{d}^m}{\mathrm{d}u^m} \int_{\R^+}\left(1-e^{-us}\right) \rho(ds) = \int_{\R^+} \frac{\mathrm{d}^m}{\mathrm{d}u^m} \left(1-e^{-us}\right) \rho(ds) \\[8pt]
		& = (-1)^{m+1} \int_{\R^+} s^m e^{-us} \, \rho(ds) = (-1)^{m+1} \, \tau(m; u).
	\end{align*}
	Therefore, the quantity above becomes
	\begin{align*}
		M(\varepsilon,x_j^*;n_j)	
		& = \lim_{u \to 0} \, \int_{B_{\varepsilon}(x_j^*)} \tau(n_j; f(x)+u)\, \alpha(dx) \\
		& \qquad \qquad \times\, \exp \left( - \int_{B_{\varepsilon}(x_j^*)} \psi(f(x)+u)\, \alpha(dx) \right) + \, o \left(\alpha(B_{\varepsilon}(x_j^*))\right) \\[8pt]
		& = \int_{B_{\varepsilon}(x_j^*)} \tau(n_j; f(x))\, \alpha(dx)  \, \exp \left( - \int_{B_{\varepsilon}(x_j^*)} \psi(f(x))\, \alpha(dx) \right) + \, o \left(\alpha(B_{\varepsilon}(x_j^*))\right).
	\end{align*}
	In conclusion, taking the limit for $\varepsilon \to 0$, by the monotone convergence theorem, we get
	\begin{multline*}
		\lim_{\varepsilon \to 0} \, \frac{1}{\alpha(B_{\varepsilon}(x_j^*))} \, \E \left[ \exp \left(- \int_{B_{\varepsilon}(x_j^*)} f(x) \tilde{\mu}(dx) \right) \tilde{\mu}(B_{\varepsilon}(x_j^*))^{n_j} \right] \\[8pt]
		\begin{aligned}
			& = \lim_{\varepsilon \to 0} \, \, \frac{1}{\alpha(B_{\varepsilon}(x_j^*))} \, \int_{B_{\varepsilon}(x_j^*)} \tau(n_j; f(x))\, \alpha(dx) \, \exp \left( - \int_{B_{\varepsilon}(x_j^*)} \psi(f(x))\, \alpha(dx) \right) \\[8pt]
			& = \lim_{\varepsilon \to 0} \, \, \frac{1}{\alpha(B_{\varepsilon}(x_j^*))} \, \int_{B_{\varepsilon}(x_j^*)} \tau(n_j; f(x))\, \alpha(dx) = \tau(n_j; f(x_j^*)).
		\end{aligned}
	\end{multline*}
	\hfill \qed
	
	\clearpage
	\section{Proofs}
	\label{supsec:proofs}
	
	\paragraph{Proof of Proposition~\ref{prop:survival}}
	
	As a first step, we establish a necessary and sufficient condition that guarantees a proper survival function, a result of independent interest. Let $\psi$ and $\psi_0$ denote the Laplace exponents of $\crm_1, \dots, \crm_D$ and $\crm_0$, respectively:
	\begin{equation*}
		\psi(u) = \int_{\R^+} (1-e^{-us}) \, \rho(ds), \qquad \psi_0(u) = \int_{\R^+} (1-e^{-us}) \, \rho_0(ds).
	\end{equation*}
	
	\begin{proposition}
		The survival function $\tilde S(t)$ in \eqref{survival} is a proper survival function, that is $\lim _{t \to \infty} \tilde S(t) = 0$ a.s., if and only if, for any $\lambda > 0$,
		\begin{equation} \label{kernel_condition}
			\lim _{t \to \infty} \int_\X \psi_0\left( D \psi\left( \lambda \int_0^t k(s; x) \, ds \right) \right) \Lambda_0(dx) = \infty.
		\end{equation}
	\end{proposition}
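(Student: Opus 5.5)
Write $A(t)=\sum_{\ell=1}^D\int_\X K_1(x;t)\,\crm_\ell(dx)$ with $K_1(x;t)=\int_0^t k(s;x)\,ds$, so that $\tilde S(t)=e^{-A(t)}$. Since $k\ge0$, $t\mapsto A(t)$ is non-decreasing and $A(\infty):=\lim_t A(t)$ exists in $[0,\infty]$. Properness is therefore equivalent to $A(\infty)=\infty$ a.s., which in turn is equivalent to $\E[e^{-\lambda A(\infty)}]=0$ for some (equivalently every) $\lambda>0$. Indeed $e^{-\lambda A(\infty)}\ge0$ vanishes exactly on $\{A(\infty)=\infty\}$. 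By monotone convergence,
\[
\E[e^{-\lambda A(\infty)}]=\lim_{t\to\infty}\E[e^{-\lambda A(t)}],
\]
so the whole statement reduces to computing the Laplace transform of $A(t)$ and showing it tends to $0$ exactly when \eqref{kernel_condition} holds.

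To compute it, I condition on $\crm_0$ and use the hierarchical structure \eqref{hierarchical_crm_sequence}. Given $\crm_0$, the measures $\crm_1,\dots,\crm_D$ are i.i.d.\ CRMs with intensity $\rho(ds)\,\crm_0(dx)$. The Laplace functional \eqref{laplace_transform} with $f=\lambda K_1(\cdot;t)$ then gives
\[
\E\left[e^{-\lambda A(t)}\mid\crm_0\right]=\exp\left(-D\int_\X\psi(\lambda K_1(x;t))\,\crm_0(dx)\right).
\]
Applying \eqref{laplace_transform} once more to $\crm_0\sim\mathrm{CRM}(\rho_0\otimes\Lambda_0)$, now with the non-negative function $x\mapsto D\psi(\lambda K_1(x;t))$, yields
\[
\E[e^{-\lambda A(t)}]=\exp\left(-\int_\X\psi_0\big(D\psi(\lambda K_1(x;t))\big)\Lambda_0(dx)\right).
\]
The only technical point is that \eqref{laplace_transform} was stated for bounded $f$. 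For general non-negative measurable $f$ I will extend it by truncating $f\wedge M$ and passing to the limit by monotone convergence on both sides; this is also what allows the value $+\infty$ in the exponent.

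It follows that $\lim_t\E[e^{-\lambda A(t)}]=0$ if and only if the integral in \eqref{kernel_condition} diverges. That integral is non-decreasing in $t$, because $K_1$, $\psi$ and $\psi_0$ are all non-decreasing, so its limit exists. Hence for each fixed $\lambda$: if \eqref{kernel_condition} holds at $\lambda$, then $\E[e^{-\lambda A(\infty)}]=0$ and $\tilde S$ is proper. Conversely, if $\tilde S$ is proper then $A(\infty)=\infty$ a.s., so the expectation vanishes for every $\lambda>0$ and \eqref{kernel_condition} holds for all $\lambda$. This establishes the equivalence with the ``for any $\lambda>0$'' formulation.

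The step I expect to need the most care is not this computation but the justifications behind it. First, the two monotone limit exchanges (in $t$, and in the truncation $M$). Second, the possibility that $\int_\X\psi(\lambda K_1)\,d\crm_0=\infty$ on a non-null event; there the conditional Laplace transform is $0$, consistently with the formula. Both are handled by monotone convergence together with the convention $e^{-\infty}=0$.
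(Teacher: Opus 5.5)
Your proposal is correct and follows the paper's proof. You compute $\E[e^{-\lambda \tilde H(t)}]$ by applying the Laplace functional first conditionally on $\crm_0$ and then to $\crm_0$ itself, and then let $t\to\infty$. For that last step the paper cites the L\'evy continuity theorem, whereas your monotonicity-plus-monotone-convergence argument, together with the truncation extension to unbounded $f$, is a more direct and fully rigorous justification of the same step.
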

	
	\noindent \textit{Proof.}
	The survival function is defined as
	\begin{equation*}
		\tilde S(t) = \exp \left( - \sum_{\ell=1}^D \int_0^t \int_\X k(s;x) \, \tilde \mu_\ell(dx)\,ds \right).
	\end{equation*}
	Therefore, the condition $\lim_{t \to \infty} \tilde S(t) = 0$ a.s.~is equivalent to
	\begin{equation*}
		\lim_{t \to \infty} \, \sum_{\ell=1}^D \int_0^t \int_\X k(s;x) \, \tilde \mu_\ell(dx)\,ds = \lim_{t \to \infty} \tilde H(t) = \infty, \qquad \text{a.s.},
	\end{equation*}
	where $\tilde H(t)$ denotes the cumulative hazard for all sources of risk. The Laplace transform of $\tilde H(t)$ can be easily computed using \eqref{laplace_transform} recursively. Indeed, conditionally on the random measure $\crm_0$ at the root of the hierarchical prior, the random measures $\crm_1, \dots, \crm_D$ are independent CRMs on $\X$, characterized by the homogeneous Lévy intensity $\tilde \nu(ds,dx) = \rho(ds) \, \crm_0(dx)$. For any $\lambda \ge 0$,
	\begin{align*}
		\E \left[ \exp \left( - \lambda \tilde H(t) \right) \mid \crm_0 \right] 
		& = \E \left[ \exp \left( - \lambda \sum_{\ell=1}^D \int_0^t \int_\X k(s;x) \, \tilde \mu_\ell(dx)\,ds \right) \bigm\vert \crm_0 \right] \\[8pt]
		& = \prod_{\ell=1}^D \E \left[ \exp \left( - \lambda \int_\X \int_0^t  k(s;x) \,ds \, \tilde \mu_\ell(dx) \right) \bigm\vert \crm_0 \right] \\[8pt]
		& = \exp \left( - D \int_\X \psi\left( \lambda \int_0^t k(s;x) \,ds \right) \crm_0(dx) \right),
	\end{align*}
	where $\psi$ is the Laplace exponent of $\crm_1, \dots, \crm_D$, which are identically distributed.
	The random measure $\crm_0$ at the root of the hierarchical prior is itself a CRM with homogeneous Lévy intensity $\nu_0(ds,dx) = \rho_0(ds) \, \Lambda_0(dx)$. Therefore, the Laplace transform of $\tilde H(t)$ is
	\begin{align*}
		\E \left[ \exp \left( - \lambda \tilde H(t) \right) \right] 
		& = \E \left[ \exp \left( - D\, \int_\X \psi\left( \lambda \int_0^t k(s;x) \,ds \right) \tilde \mu_0(dx) \right) \right] \\[8pt]
		& = \exp \left( - \int_\X \psi_0 \left( D \psi\left( \lambda \int_0^t k(s;x) \,ds \right) \right) \Lambda_0(dx) \right),
	\end{align*}
	where $\psi_0$ is the Laplace exponent of $\crm_0$.
	The Laplace transform of an a.s.~infinite random variable is zero for every $\lambda > 0$. Moreover, the distribution of the limit of a sequence of non-negative random variables is characterized by the pointwise limit of their Laplace transforms, by the Lévy continuity theorem. Consequently, $\lim_{t \to \infty} \tilde H(t) = \infty$ if and only if, for any $\lambda > 0$,
	\begin{equation*}
		\lim_{t \to \infty} \E \left[ \exp \left( - \lambda \tilde H(t) \right) \right] = \exp \left( - \lim_{t \to \infty} \int_\X \psi_0 \left( D \psi\left( \lambda \int_0^t k(s;x) \,ds \right) \right) \Lambda_0(dx) \right) = 0.
	\end{equation*} 
	\hfill \qed
	
	\noindent \textit{Proof of Proposition 3.}
	Remarkably, the Laplace exponents $\psi$ and $\psi_0$ are strictly positive and increasing functions on $(0,\infty)$. Moreover,
	\begin{equation*}
		\lim_{u \to \infty} \psi(u) = \lim_{u \to \infty} \int_{\R^+} (1-e^{-us}) \, \rho(ds) = \int_{\R^+} \rho(ds),
	\end{equation*}
	and similarly $\displaystyle \lim_{u \to \infty} \psi_0(u) = \int_{\R^+} \rho_0(ds)$. Therefore, $\lim_{u \to \infty} \psi(u) = \lim_{u \to \infty} \psi_0(u) = \infty$ if and only if $\crm_1,\ldots,\crm_D$ and $\crm_0$ are infinitely active CRMs; see Section~\ref{supsec:crm} for the definition and implications of infinite activity.
	To conclude the proof, we show that either one of the conditions contained in the statement imply \eqref{kernel_condition}:
	\begin{itemize}
		\item[(i)] Assume $\displaystyle \lim_{t\to\infty} \int_0^t k(s;x)\,ds = \infty$ for every $x \in \X$. Then, for any $\lambda > 0$,
		\begin{multline*}
			\lim _{t \to \infty} \int_\X \psi_0\left( D \psi\left( \lambda \int_0^t k(s; x) \, ds \right) \right) \Lambda_0(dx) \\[8pt]
			= \int_\X \lim _{u \to \infty} \psi_0\left( D \psi\left( \lambda u \right) \right) \Lambda_0(dx) = \lim _{u \to \infty} \psi_0\left( D \psi\left( u \right) \right) \cdot \Lambda_0(\X).
		\end{multline*}
		Since $\Lambda_0(\X) > 0$, the quantity above is infinite if and only if $ \psi_0\left( D \psi\left( u \right) \right) \to \infty$, i.e.~if and only if $\crm_1,\ldots,\crm_D$ and $\crm_0$ are infinitely active, 
		\item[(ii)] Assume $\displaystyle \lim_{t\to\infty} \int_0^t k(s;x)\,ds \ge C$ for every $x \in \X$ and some $C > 0$. Then, for any $\lambda > 0$,
		\begin{multline*}
			\lim _{t \to \infty} \int_\X \psi_0\left( D \psi\left( \lambda \int_0^t k(s; x) \, ds \right) \right) \Lambda_0(dx) \\[8pt]
			\ge \int_\X \psi_0\left( D \psi\left( \lambda C \right) \right) \Lambda_0(dx) = \psi_0\left( D \psi\left( \lambda C \right) \right) \cdot \Lambda_0(\X),
		\end{multline*}
		which is infinite if and only if $\Lambda_0(\X) = \infty$, since $\psi_0\left( D \psi\left( \lambda \, C \right) \right) > 0$.
	\end{itemize}
	\hfill \qed
	
	\paragraph{Proof of Theorem~\ref{prop:support}}
	The proof leverages results from Section 3 in \cite{deblasi2009}, where the authors focus on single--cause mixture hazards
	\begin{equation} \label{kernel_mixture}
		\tilde h(t) = \int_\X k(t;x) \, \crm(dx),
	\end{equation}
	with mixing measure $\crm$ a CRM as defined in Section~\ref{supsec:crm} and characterized by the Lévy intensity $\nu(ds,dx) = \rho(ds \, \vert \, x) \, \alpha(dx)$. They also assume that $\rho$ and $\alpha$ are diffuse measures and that the support of $\alpha$ is the entire space $\X$. Finally, their Proposition 3 contains a crucial condition on the small--time behaviour of $\crm$, namely that there exists $r > 0$ such that
	\begin{equation} \label{condition_crm}
		\liminf_{t \to 0} \:\frac{\crm((0,t])}{t^r} = \infty \qquad \text{a.s.}
	\end{equation}
	The random mixture hazard \eqref{kernel_mixture} induces a prior $\mathcal{P}$ on the space of densities for the survival time, i.e.~densities on $\mathbb{R}^+$.
	\cite{deblasi2009} provide an explicit characterization of the classes of densities at which $\mathcal{P}$ is weakly consistent, for several kernel choices. Their proof technique consists in translating the well--known sufficient condition for weak consistency, expressed in terms of Kullback--Leibler neighborhoods in the space of densities, to a condition on uniform neighborhoods in the space of hazards.
	Hereafter, $\tilde H(t) = \displaystyle \int_0^t \tilde h(s)\,ds$ denotes the cumulative hazard, so that $\tilde f(t)=\tilde h(t) \exp(-\tilde H(t))$ is the associated density function. Moreover, $f_0$ is the ``true'' data--generating density on $\mathbb{R}^+$, and $h_0$ its corresponding hazard function. For convenience, we restate the specific consistency results that are relevant to our purposes.
	
	\begin{theorem}[\citet{deblasi2009}, Theorem 4]
		Let $\tilde h$ be a mixture hazard \eqref{kernel_mixture} with Dykstra-Laud kernel $k(t;x) = \mathds{1}_{\set{t \ge x}}$ and assume $\crm$ satisfies \eqref{condition_crm}.
		Then, $\mathcal{P}$ is weakly consistent at each $f_0$ such that $\displaystyle \int_0^\infty \E[\tilde H(t)] \, f_0(t) \,dt < \infty$, with $h_0(0) = 0$ and $h_0(t)$ strictly positive and non--decreasing for any $t > 0$.
	\end{theorem}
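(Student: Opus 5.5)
The plan follows the Schwartz-theorem route of \cite{deblasi2009}: show that every $f_0$ satisfying the assumptions lies in the Kullback--Leibler support of $\mathcal{P}$, and then invoke Schwartz's theorem. For $\tilde f=\tilde h\,e^{-\tilde H}$ and $f_0=h_0 e^{-H_0}$ the divergence splits as
\begin{equation*}
\int_0^\infty f_0\log\frac{f_0}{\tilde f}\,dt=\int_0^\infty f_0(t)\log\frac{h_0(t)}{\tilde h(t)}\,dt+\int_0^\infty f_0(t)\big(\tilde H(t)-H_0(t)\big)\,dt .
\end{equation*}
So it suffices to exhibit, for each $\eta>0$, an event of positive prior probability on which both terms are below $\eta$.

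The key structural observation concerns the Dykstra--Laud kernel. A non-decreasing $h_0$ with $h_0(0)=0$ is exactly $h_0(t)=\int_{[0,t]}\mu_0(dx)$ with $\mu_0=dh_0$. Hence $h_0$ is itself a mixture hazard, and $\tilde h(t)=\tilde\mu([0,t])$ is the distribution function of $\tilde\mu$. I would split $\mathbb{R}^+$ into $(0,\delta]$, $(\delta,T]$ and $(T,\infty)$, with $\delta$ small and $T$ large.
\begin{itemize}
\item[(a)] \emph{Interval $(\delta,T]$.} Use the full support of $\alpha$ and the fact that $\rho$ is diffuse. For a finite partition of $(0,T]$, the CRM puts positive probability on having increments simultaneously close to those of $\mu_0$. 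This gives $\sup_{t\le T}|\tilde h(t)-h_0(t)|<\epsilon$ with positive probability. Since $h_0\ge h_0(\delta)>0$ on $[\delta,T]$, the log-ratio there is uniformly small, and so is $\int_0^T|\tilde H-H_0|\,f_0$.
\item[(b)] \emph{Interval $(0,\delta]$.} Here both hazards vanish at the origin. By \eqref{condition_crm}, almost surely $\tilde h(t)=\tilde\mu((0,t])\ge t^r$ for all small $t$. This gives $\log(h_0/\tilde h)\le\log(h_0(\delta)\,t^{-r})$, which is $f_0$-integrable near $0$ because $\int_0 \log(1/t)\,dt<\infty$ and $f_0$ is bounded near $0$. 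Choosing $\delta$ small then makes this contribution smaller than $\eta/3$. Since \eqref{condition_crm} is an almost sure statement, intersecting with it costs no prior mass.
\item[(c)] \emph{Tail $(T,\infty)$.} The term $\int_T^\infty f_0\tilde H$ is handled by the moment condition. Because $\int_0^\infty \E[\tilde H(t)]f_0(t)\,dt<\infty$, the tail expectation can be made small by taking $T$ large, and Markov's inequality makes the tail term small with probability close to one. The term $\int_T^\infty f_0\log(h_0/\tilde h)$ is controlled from above because $\tilde h$ is non-decreasing: $\tilde h(t)\ge\tilde h(T)\approx h_0(T)$. The moment condition, together with monotonicity of $h_0$, keeps the remaining $\int_T^\infty f_0\log h_0$ finite and small for large $T$.
\end{itemize}

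The main obstacle is making the events in (a)--(c) occur jointly with positive probability. The tail event is not independent of the behaviour on $[0,T]$, since $\tilde H(t)$ for $t>T$ depends on all of $\tilde\mu((0,t])$. I would first rely on complete randomness, so that $\tilde\mu$ restricted to $(0,T]$ and to $(T,\infty)$ are independent. Next I would write
\begin{equation*}
\tilde H(t)=\int_0^t \tilde\mu((0,T])\,ds+\int_T^t\tilde\mu((T,s])\,ds .
\end{equation*}
On the event of (a), the first piece is bounded by $(h_0(T)+\epsilon)\,t$, and this is integrable against $f_0$ because $f_0$ has a finite first moment. The second piece depends only on $\tilde\mu$ restricted to $(T,\infty)$, and I would control it through its independent Markov bound. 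The joint probability then factorises into a positive number times a number close to one.

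Finally, to apply this statement to Theorem~\ref{prop:support}(a), I still need to check two things for the root-level (marginal) measure $\sum_\ell\tilde\mu_\ell$ of the generalized gamma hCRM. First, it must satisfy \eqref{condition_crm}, which infinite activity together with $\Lambda_0\propto$ Lebesgue should give. Second, $\E[\tilde H(t)]$ should be quadratic in $t$, which is where the assumption $\int t^2 f_0<\infty$ enters.
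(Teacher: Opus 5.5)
The paper does not prove this statement; it quotes it from \cite{deblasi2009}. Your route is the technique the paper attributes to them: Kullback--Leibler support obtained from uniform neighbourhoods of hazards, then Schwartz's theorem. The decisive problem is in step (c), and it cannot be fixed by a sharper estimate.

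\textbf{Step (c) fails.} You claim that the moment condition and monotonicity of $h_0$ make $\int_T^\infty f_0\log h_0$ finite and small. They do not. Take $h_0(t)=t$ on $[0,1)$ and $h_0\equiv e^{a_k}$ on $[k,k+1)$ for $k\ge1$. Let $a_k$ increase so fast that $a_k\ge e^{H_0(k)}$ and $\sum_k e^{-H_0(k)}\,\mathbb{E}[\tilde H(k+2)]<\infty$; this can be done recursively, since $H_0(k)=\tfrac12+\sum_{j<k}e^{a_j}$.
\begin{itemize}
\item All hypotheses hold, including $\int_0^\infty \mathbb{E}[\tilde H(t)]f_0(t)\,dt<\infty$.
\item The interval $[k,k+1)$ carries $f_0$-mass at least $(1-e^{-e})e^{-H_0(k)}$. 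Hence $\int_k^{k+1}f_0\log h_0\ge 1-e^{-e}$ for every $k$, and $\int_0^\infty f_0\log h_0=+\infty$.
\item Since $\tilde h$ is non-decreasing, $\log^+\tilde h(t)\le\tilde h(t)\le\tilde H(t+1)$. Also $\mathbb{E}\int_0^\infty f_0(t)\tilde H(t+1)\,dt<\infty$, so $\int f_0\log^+\tilde h<\infty$ almost surely.
\end{itemize}
Because $\int f_0H_0=1$, your own decomposition then gives $\mathrm{KL}(f_0,\tilde f)=+\infty$ almost surely. So $f_0$ is not in the Kullback--Leibler support, and Schwartz's theorem cannot be applied at this $f_0$.

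The route therefore needs an extra hypothesis, such as $\int_0^\infty f_0\log^+h_0<\infty$. This is equivalent to $\int f_0\log^+f_0<\infty$, since $\log h_0=\log f_0+H_0$ and $\int f_0H_0=1$. Under it, your bound $\int_T^\infty f_0\log(h_0/\tilde h)\le\int_T^\infty f_0(t)\log\{h_0(t)/h_0(T)\}\,dt+S_0(T)\log\{h_0(T)/(h_0(T)-\epsilon)\}$ does tend to zero.

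\textbf{Step (b) needs an argument for joint positivity.} Condition \eqref{condition_crm} only gives $\tilde h(t)\ge t^r$ for $t<t_0(\omega)$, with $t_0$ random. The event $\{t_0\ge\delta\}$ that you actually use has probability below one. Meanwhile the precision required in (a) is tied to $h_0(\delta)$, so the probability of the event in (a) may vanish as $\delta\downarrow0$. Hence ``intersecting costs no prior mass'' is wrong. One way to argue it is to use independence of $\tilde\mu$ on $(0,\delta]$ and on $(\delta,T]$. You would then show that $\{\tilde\mu((0,t])\ge t^r\ \forall t\le\delta\}\cap\{\tilde\mu((0,\delta])\in I\}$ has positive probability for the relevant window $I$.

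\textbf{Step (a) fails when $h_0$ jumps.} The statement allows jumps in $h_0$, and then the sup-norm event has probability zero. Since $\tilde\mu$ has no fixed atoms, a jump of size $J$ at $t^*\in(0,T)$ forces $\sup_{t\le T}|\tilde h(t)-h_0(t)|\ge J/2$ almost surely. You only need a one-sided bound $\tilde h\ge e^{-\eta}h_0$ on $[\delta,T]$, plus $f_0$-integrated control of $\tilde H-H_0$. Both can be had by letting $\tilde\mu$ place its mass slightly before each jump. What makes (a) work is that each $\tilde\mu(I)$ has support $[0,\infty)$. That follows from infinite activity, not from diffuseness of $\rho$.

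\textbf{Smaller points.}
\begin{itemize}
\item That $(h_0(T)+\epsilon)\int_T^\infty tf_0$ is finite does not make it small, because $h_0(T)$ grows with $T$. Keep the $-H_0$ term and use $\int_T^t h_0\ge (t-T)h_0(T)$ to cancel it.
\item Infinite activity alone does not give \eqref{condition_crm}. The paper obtains it from the explicit generalized gamma Laplace exponent via Proposition~47.18 of \cite{sato1999}.
\end{itemize}
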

	
	\noindent An analogous result holds for the slight variation of the Dykstra-Laud kernel $k(t;x) = \gamma \,\mathds{1}_{\set{t \ge x}}$ considered in our paper, since the rescaled measure $\gamma\, \crm(dx)$ is again a CRM satisfying \eqref{condition_crm}.
	
	\begin{theorem}[\cite{deblasi2009}, Theorem 5] \label{th:rect}
		Let $\tilde h$ be a mixture hazard \eqref{kernel_mixture} with rectangular kernel $k(t;x) = \mathds{1}_{\set{\vert t-x\vert \le \tau}}$. Assume that the bandwidth $\tau$ is random and its prior assigns positive probability to $[0, L]$ for some $L > 0$. Then, $\mathcal{P}$ is weakly consistent at each $f_0$ such that $\displaystyle \int_0^\infty \max \left(\E[\tilde H(t)], t \right) f_0(t) \,dt < \infty$, with $h_0$ strictly positive, bounded and Lipschitz.
	\end{theorem}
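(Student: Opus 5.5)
The statement is quoted from \cite{deblasi2009}, so the plan below reconstructs their argument rather than adding anything new. By Schwartz's theorem it suffices to show that $f_0$ lies in the Kullback--Leibler support of $\mathcal{P}$, that is, $\mathcal{P}\big(\{f \colon \mathrm{KL}(f_0,f)<\eta\}\big)>0$ for every $\eta>0$. Writing $f=h\,e^{-H}$ and $f_0=h_0\,e^{-H_0}$, we have
\begin{equation*}
	\mathrm{KL}(f_0,f)=\int_0^\infty f_0(t)\log\frac{h_0(t)}{h(t)}\,dt+\int_0^\infty f_0(t)\,\big(H(t)-H_0(t)\big)\,dt .
\end{equation*}
The strategy is to convert KL-smallness into a condition on hazards. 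I would fix $T$ large, so that the $f_0$-mass beyond $T$ and the tail integrals of $f_0$ against $t$ and against $\E[\tilde H(t)]$ are small; this is exactly where the integrability assumption $\int_0^\infty \max(\E[\tilde H(t)],t)\,f_0(t)\,dt<\infty$ is used. I would then show that a uniform neighbourhood $\sup_{t\le T}|h(t)-h_0(t)|<\epsilon$, intersected with a tail event of positive probability, lies inside the KL ball.

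On $[0,T]$ the argument is as follows. Since $h_0$ is strictly positive and bounded, uniform closeness gives a lower bound on $h$ on $[0,T]$, and hence $|\log(h_0/h)|\lesssim\epsilon$ there. Also $|H-H_0|\le\epsilon t$, which is integrable against $f_0$ because $f_0$ has a finite first moment. The part of the second integral beyond $T$ is controlled through $\E[\tilde H(t)]$ restricted to the support event, combined with the tail bound on $f_0$; the restriction to an event can only increase the expectation by a factor of $1/\Prob(\text{event})$, which is harmless once $T$ is chosen after fixing the neighbourhood. The log term beyond $T$ needs no separate treatment: it combines with the tail mass via $\int_T^\infty f_0\log(f_0/f)$, and in the hazard form it is handled by the same integrability.

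The main step is positivity of the prior mass of $\{\sup_{t\le T}|\tilde h(t)-h_0(t)|<\epsilon\}$. Take
\begin{equation*}
	h_\tau(t)=\int \mathds{1}_{\{|t-x|\le\tau\}}\,\frac{h_0(x)}{2\tau}\,dx .
\end{equation*}
Because $h_0$ is Lipschitz with constant $L_0$, we get $\sup_t|h_\tau(t)-h_0(t)|\le L_0\tau$. This uses the assumption that the prior on $\tau$ charges small values (near $0$, the window is one-sided, but Lipschitz continuity still gives a uniform bound after a slight rescaling). Next, partition $[0,T+\tau]$ into finitely many small intervals $A_i$ and approximate $h_\tau$ uniformly by the rectangular-kernel mixture of the discrete measure that assigns mass $\int_{A_i}h_0/(2\tau)$ to each $A_i$. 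Since $\alpha$ has full support, the CRM $\crm$ satisfies $\Prob\big(|\crm(A_i)-m_i|<\epsilon',\ i=1,\dots,N\big)>0$ for any target masses $m_i$, by independence over disjoint sets and infinite activity. Finally, the map $\crm\mapsto\tilde h$ on $[0,T]$ is Lipschitz in these finitely many masses, up to boundary effects of the intervals, which are small when the partition is fine.

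I expect the main obstacle to be this last step, in two respects: the simultaneous control of the bandwidth $\tau$, the partition mesh and the boundary errors of the indicators; and ensuring that $\tilde h$ stays bounded below on $[0,T]$ so that the log term in the KL expansion is controlled.
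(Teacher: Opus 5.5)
The paper does not prove this statement. It quotes it from \cite{deblasi2009} and describes their method as translating the Kullback--Leibler condition of Schwartz's theorem into prior positivity of uniform neighbourhoods of hazards on bounded intervals, which is the outline you follow. Within that outline, however, your handling of the tail of the Kullback--Leibler integral has a genuine gap.

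\emph{The tail log term.} The term $\int_T^\infty f_0(t)\log\{h_0(t)/\tilde h(t)\}\,dt$ is unbounded above wherever $\tilde h$ is small. It is not controlled by the tail mass of $f_0$, since nothing bounds $f_0\log(f_0/f)$ from above in terms of $f_0$ alone. Nor is it controlled by the hypothesis $\int_0^\infty\max(\E[\tilde H(t)],t)\,f_0(t)\,dt<\infty$, which only restrains \emph{large} values of $\tilde H$ and so handles only $\int_T^\infty f_0\,(\tilde H-H_0)$. A uniform neighbourhood on $[0,T]$ says nothing about $\tilde h(t)=\tilde\mu([t-\tau,t+\tau])$ for $t>T+\tau$. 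The rectangular kernel also offers no deterministic substitute, such as the monotonicity of $\tilde h$ under the Dykstra--Laud kernel or the bound $\tilde h(t)\ge\tilde h(T)\,e^{-\kappa(t-T)}$ under the Ornstein--Uhlenbeck kernel. You therefore need an explicit extra event, e.g.\ $\{\int_S^\infty f_0(t)\,|\log\tilde h(t)|\,dt<\eta\}$, and a proof that it has positive probability jointly with the uniform neighbourhood. That requires control of \emph{small} values of $\tilde\mu$ on windows far out, for instance $\E|\log\tilde\mu(I)|<\infty$ for intervals of length $2\tau$ plus stationarity when $\alpha$ is proportional to Lebesgue measure. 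Your plan never supplies this ingredient.

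\emph{Combining the events.} The radius $\epsilon$ of $A_{T,\epsilon}=\{\sup_{t\le T}|\tilde h(t)-h_0(t)|<\epsilon\}$ must be chosen after $T$, because it has to be small relative to $\min_{[0,T]}h_0$. Meanwhile $T$ must be large for the tail bounds. So $\Prob(A_{T,\epsilon})$ can be far smaller than $\E[\int_T^\infty f_0\tilde H]$, and the bound $\E[\,\cdot\mid A_{T,\epsilon}]\le\E[\,\cdot\,]/\Prob(A_{T,\epsilon})$ yields nothing, whichever order you pick. What works is complete randomness. The restrictions of $\tilde\mu$ to $[0,T+\tau]$ and to $(T+\tau,\infty)$ are independent, and $A_{T,\epsilon}$ depends only on the first. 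Both tail terms depend only on the second, except on a strip of width $O(\tau)$ beyond $T$. On that strip the first piece contributes at most $2\tau\,\tilde h(T)$ to $\tilde H$, which is bounded on $A_{T,\epsilon}$, and the log term there needs separate events such as $\tilde\mu(I)\ge c$ on short fixed intervals. The joint probability then factorizes and is positive.

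\emph{Smaller points in the approximation step.}
With $\X=\R^+$ and a symmetric window, $h_\tau(0)\approx h_0(0)/2$, and a slight rescaling of the convolution does not repair this. Instead, use masses $h_0(2j\tau)$ spread over short intervals around $2j\tau$, with the first inside $[0,\tau)$; this gives uniform error at most $2L_0\tau$.

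Since $\tau$ is random, the small-ball estimate must hold for all bandwidths near a support point of its prior.

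Your reading that this prior charges arbitrarily small bandwidths is the one the argument needs. Taken literally, ``for some $L>0$'' would not suffice.
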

	
	\noindent As with the previous case, an adaptation of the proof of Theorem~\ref{th:rect} can be used to establish an analogous result for the rectangular kernel $k(t;x) = \gamma \,\mathds{1}_{\set{ 0 \le t-x \le \tau}}$, which is the one we consider here. Indeed, the claim rests on the following arguments: (i) the condition on the random bandwidth remains valid if $\tau$ is replaced by $2\tau$; (ii) the rescaled and shifted measure $\gamma \,\crm(dx-\tau)$ is a CRM on $(\tau,\infty)$; (iii) the behavior on $(0,\tau)$ coincides with that of the Dykstra-Laud kernel. Therefore, Theorem~\ref{th:rect} extends to our setting provided $h_0(0) = 0$ and $\crm$ satisfies \eqref{condition_crm}.
	
	\begin{theorem}[\cite{deblasi2009}, Theorem 6]
		Let $\tilde h$ be a mixture hazard \eqref{kernel_mixture} with Ornstein-Uhlenbeck kernel $k(t; x) = \sqrt{2 \kappa} \, \exp(- \kappa(t-x)) \,\mathds{1}_{\set{t \ge x}}$ and assume $\crm$ satisfies \eqref{condition_crm}.
		Then, $\mathcal{P}$ is weakly consistent at each $f_0$ such that $\displaystyle \int_0^\infty \max \left(\E[\tilde H(t)], t \right) f_0(t) \,dt < \infty$, with $h_0(0) = 0$ and $h_0(t)$ satisfying the following properties for any $t > 0$:
		\begin{itemize}
			\item[\rm (a)] $h_0(t)$ is strictly positive and differentiable;
			\item[\rm (b)] for any $t > 0$ for which $h_0'(t) < 0$, the local exponential decay rate $-h_0'(t)/h_0(t)$ is smaller than $\kappa \sqrt{2\kappa}$.
		\end{itemize}
	\end{theorem}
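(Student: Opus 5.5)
Since this result is quoted from \cite{deblasi2009}, I would follow their route. By Schwartz's theorem it suffices to show that $f_0$ lies in the Kullback--Leibler support of $\mathcal P$. The plan is to reduce this to a statement about hazards and then to a statement about the mixing CRM. Write $f=h\,e^{-H}$ and $f_0=h_0\,e^{-H_0}$. Then
\begin{equation*}
\int f_0 \log(f_0/f) = \int f_0(t)\log\bigl(h_0(t)/h(t)\bigr)\,dt + \int f_0(t)\bigl(H(t)-H_0(t)\bigr)\,dt .
\end{equation*}
The first step is to fix $\epsilon>0$ and choose $0<a<T$. The goal is to show that the Kullback--Leibler divergence is below $\epsilon$ on an event of positive prior probability, on which the following hold: (i) $\sup_{a\le t\le T}|h(t)-h_0(t)|$ is small; (ii) $h(t)/h_0(t)$ is bounded below on $(0,a]$; (iii) the tail contributions over $(T,\infty)$ are small. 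For (iii) I would use the integrability $\int \max(\E[\tilde H(t)],t)\,f_0(t)\,dt<\infty$ together with Markov's inequality. This makes $\int_T^\infty f_0(H+H_0)$ small with high prior probability once $T$ is large, and it also controls the tail of the $\log h$ term, since $h_0$ is bounded there in the relevant sense.

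The second step is the structural one, and it is where assumption (b) enters. Because the Ornstein--Uhlenbeck kernel solves $\partial_t k = -\kappa k$ for $t>x$, a mixture $h(t)=\sqrt{2\kappa}\int_0^t e^{-\kappa(t-x)}\mu(dx)$ satisfies $h'+\kappa h=\sqrt{2\kappa}\,\mu$. I would therefore represent $h_0$ exactly as an Ornstein--Uhlenbeck mixture with mixing density
\begin{equation*}
m_0(x) = \frac{h_0'(x)+\kappa h_0(x)}{\sqrt{2\kappa}} .
\end{equation*}
The initial condition $h_0(0)=0$ makes the representation exact. The decay bound in (b) is what forces $m_0>0$ wherever $h_0'<0$, so that $m_0$ is a legitimate mixing measure (up to the paper's normalisation of $\kappa$). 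Next I would approximate $m_0\,dx$ on $[0,T]$ by a measure charged by the CRM. The CRM has full weak support, because $\alpha$ has full support and $\rho$ is diffuse. Since the kernel is bounded and equicontinuous in $t$ away from the diagonal, weak closeness of $\mu$ to $m_0\,dx$ on $[0,T]$, together with closeness of total masses, yields uniform closeness of $h$ to $h_0$ on $[a,T]$. This gives (i) with positive probability. Independence of the CRM on disjoint sets lets me intersect this event with the events in (ii) and (iii).

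I expect the main obstacle to be (ii), the behaviour near $t=0$, where both $h$ and $h_0$ vanish and $\log(h_0/h)$ can blow up. Here I would use the small-time condition \eqref{condition_crm}: $\mu((0,t])/t^r\to\infty$ almost surely. This gives $h(t)\ge \sqrt{2\kappa}\,e^{-\kappa t}\mu((0,t])\gg t^r$ near $0$. Meanwhile $h_0(t)=O(t)$ by differentiability and $h_0(0)=0$, so $\int_0^a f_0\log(h_0/h)^+$ is small for small $a$. Finally, I would check that conditioning on this almost sure tail event does not destroy the positive probability of the weak-neighbourhood event on $[a,T]$, again by the independence of the CRM's restrictions to $(0,a]$ and $[a,T]$.
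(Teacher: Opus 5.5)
The paper does not prove this statement; it imports it from \cite{deblasi2009}. Your outer architecture is the right one: Schwartz's theorem, splitting the Kullback--Leibler divergence into a $\log(h_0/h)$ term and an $H-H_0$ term, the small-time condition \eqref{condition_crm} at the origin, the integrability hypothesis in the tail, and uniform approximation of $h_0$ on compacts.

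The genuine gap is in your structural step. With the kernel normalised exactly as stated, you correctly obtain $m_0=(h_0'+\kappa h_0)/\sqrt{2\kappa}$. So $m_0\ge 0$ is equivalent to $-h_0'(t)/h_0(t)\le\kappa$. Hypothesis (b) only gives $-h_0'/h_0<\kappa\sqrt{2\kappa}$. This implies the needed bound when $\kappa\le 1/2$, but not when $\kappa>1/2$. The kernel is fixed, so there is no ``normalisation of $\kappa$'' left to absorb the difference.

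Nor can a cleverer approximation rescue the claim for $\kappa>1/2$. For every prior draw, $e^{\kappa t}\tilde h(t)=\sqrt{2\kappa}\int_{[0,t]}e^{\kappa x}\,\crm(dx)$ is non-decreasing. Hence $G(t)=\int_0^t e^{\kappa u}\tilde h(u)\,du=e^{\kappa t}\tilde H(t)-\kappa\int_0^t e^{\kappa u}\tilde H(u)\,du$ is convex. Suppose distributions in the prior support converge weakly to $f_0$. Then their cumulative hazards converge pointwise to $H_0$, since $S_0>0$. By dominated convergence, the corresponding $G$'s converge pointwise to $G_0(t)=\int_0^t e^{\kappa u}h_0(u)\,du$, which is therefore convex. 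Because $G_0'(t)=e^{\kappa t}h_0(t)$ is continuous, this forces $h_0'+\kappa h_0\ge 0$.

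So for $\kappa>1/2$, take an $f_0$ whose hazard decays at a rate in $(\kappa,\kappa\sqrt{2\kappa})$ on some interval. It satisfies (a)--(b), yet it lies outside the weak support of $\mathcal P$. The posterior then gives zero mass to a weak neighbourhood of $f_0$ and cannot be consistent there. What your argument actually proves is the statement with (b) replaced by $-h_0'(t)/h_0(t)\le\kappa$. The constant $\kappa\sqrt{2\kappa}$ is inconsistent with the kernel as normalised, and you should have flagged it rather than waved it through.

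A secondary, repairable point concerns the independence you invoke. It does not apply to the events as you define them: $\tilde h$ on $[a,T]$ depends on $\crm|_{(0,a]}$, and $\tilde H(t)$ for $t>T$ depends on $\crm|_{[0,T]}$. Also, a Markov bound on the tail cannot be intersected with the approximation event merely because $T$ is large, since that event's probability may itself vanish as $T$ grows. Define the events through disjoint restrictions instead:
\begin{itemize}
\item Require $\crm((0,a])$ to be small jointly with the small-time event; both have probability close to one for small $a$.
\item Build the approximation on $[a,T]$ from $\crm|_{(a,T]}$ only.
\item For $t>T$, bound the contribution of $\crm|_{[0,T]}$ to $\tilde H(t)$ by $\tilde H(T)+\tilde h(T)/\kappa$, which is controlled on the approximation event.
\item Apply Markov's inequality only to the contribution of $\crm|_{(T,\infty)}$, which is independent of the rest.
\end{itemize}
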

	
	\noindent Finally, as stated in Remark 1 of \cite{deblasi2009}, if $\alpha$ is (proportional to) the Lebesgue measure, the integrability conditions on the cumulative hazard $\tilde H(t)$ in the theorems above become, respectively, $\displaystyle \int_0^\infty t^2 f_0(t)\,dt < \infty$ for the Dykstra-Laud kernel and $\displaystyle \int_0^\infty t f_0(t)\,dt < \infty$ for the rectangular and Ornstein-Uhlenbeck kernels.
	
	To take advantage of these results in our setting, we must show that the hazard rate for the survival time $T$ from all sources of risk is a mixture hazard as in \eqref{kernel_mixture}. Indeed, by construction of our model (see \eqref{survival}), the hazard rate from all sources of risk is
	\begin{equation*}
		\sum_{\ell=1}^D \int_\X k(s;x) \, \tilde \mu_\ell(dx)\,ds = \int_\X k(s;x) \, \crm_{\rm sum}(dx) \,ds,
	\end{equation*}
	where $\displaystyle \crm_{\rm sum} = \sum_{\ell=1}^D \crm_\ell$ is the mixing random measure. Conditionally on $\crm_0$ at the root of the hierarchy, the measure $\crm_{\rm sum}$ is the sum of independent CRMs, and thus is itself a CRM. Moreover, a hierarchical structure of CRMs is a CRM even unconditionally on the hierarchy's root, as stated in Theorem 1 of \cite{catalano2025}. In particular, its Laplace exponent is $\psi_{\rm sum}(u) = \psi_0 \left(D\psi(u)\right)$, where $\psi$ and $\psi_0$ are the Laplace exponents of each $\crm_\ell$ and $\crm_0$, respectively. Note that $\crm_{\rm sum}$ is infinitely active if the random measures at both levels of the hierarchy are infinitely active \cite[][Lemma 2]{catalano2025}. Hence, the hazard rate from all sources of risk is indeed of the form \eqref{kernel_mixture}.
	
	To conclude the proof, let $\boldsymbol{\crm}$ be a generalized gamma hCRM and let $\Lambda_0$ be proportional to the Lebesgue measure. This specification guarantees that the Lévy measure of $\crm_{\rm sum}$ is diffuse, and its base measure $\Lambda_0$ has full support. Moreover, let $\rho_{\rm sum}$ be the jump component of its Lévy measure, and note that
	\begin{equation*}
		\int_0^1 s^\varepsilon \rho_{\rm sum}(ds) = \int_0^1 \int_0^1 s^\varepsilon \rho_{\rm sum}(ds) \,\Lambda_0(dx) = D \int_0^1 s^\varepsilon \rho(ds) \int_0^\infty s \, \rho_0(ds) \int_0^1 \Lambda_0(dx),
	\end{equation*}
	which is infinite for every $0 < \varepsilon < \sigma$.
	Therefore, by Proposition 47.18 in \cite{sato1999}, the small--time behaviour of $\crm_{\rm sum}$ is such that $\liminf_{t \to 0} \:\crm_{\rm sum}((0,t]) / G(t) = C$ a.s.~for some constant $0 < C < \infty$, where
	\begin{equation*}
		G(t) = \frac{\log\log(t^{-1})}{\psi_{\rm sum}^{-1}(t^{-1}\log\log(t^{-1}))}.
	\end{equation*}
	For the generalized gamma hCRM, when $t \to 0$, one obtains
	\begin{equation*}
		G(t) \approx \frac{D^{1/\sigma}}{(\sigma \sigma_0^{1/\sigma_0})^{1/\sigma}} \, t^{1/\sigma\sigma_0} \left(\log\log(t^{-1})\right)^{1-1/\sigma\sigma_0},
	\end{equation*}
	which implies that condition \eqref{condition_crm} is satisfied by taking $r > 1/\sigma\sigma_0$. 
	\hfill \qed
	
	\paragraph{Detailed derivation of \eqref{likelihood} from \eqref{beginning_likelihood}}
	
	The multiplicative intensity likelihood for the competing risks model \eqref{model} is given in \eqref{beginning_likelihood},
	\begin{equation*}
		\mathcal{L}(\bm{\crm}; \boldsymbol T_{1:n}, \boldsymbol \Delta_{1:n}) = \prod_{i=1}^n \left[ \int_\X k(T_i;x_i) \, \tilde \mu_{\Delta_i}(dx_i) \: \exp \left( - \sum_{\ell=1}^D \int_0^{T_i} \int_\X k(s;y) \, \tilde \mu_\ell (dy)\,ds \right) \right].
	\end{equation*}
	In Section~\ref{sec:marginal}, we rewrite this likelihood in terms of the extended random measures $\bm{\crm^e}=(\crm^e_1, \dots, \crm^e_D)$. Moreover, we further disintegrate with respect to such random measures by introducing two sequences of latent variables, namely $\boldsymbol X_{1:n} = (X_1, \dots, X_n)$ and $\boldsymbol Z_{1:n} = (Z_1, \dots, Z_n)$. This leads to the doubly--augmented likelihood \cut
	\begin{multline*}
		\mathcal{L}(\bm{\crm^e}; \boldsymbol T_{1:n}, \boldsymbol \Delta_{1:n}, \boldsymbol X_{1:n}, \boldsymbol Z_{1:n}) \\[8pt]
		= \prod_{i=1}^n \left[ k(T_i;X_i) \, \tilde \mu^e_{\Delta_i}(dZ_i,dX_i) \:
		\exp \left( - \sum_{\ell=1}^D \int_0^{T_i} \int_{[0,1] \times \X} k(s;x) \, \tilde \mu^e_\ell (dz,dx)\,ds \right) \right].
	\end{multline*}
	The derivation of \eqref{likelihood} from the expression above reflects the nested partition structure induced by the tied values in the sequences $\boldsymbol X_{1:n}$ and $\boldsymbol Z_{1:n}$. Specifically, the coarser level of the partition is denoted by $\Pi_{\boldsymbol X_{1:n}} = \set{C_{j}^X \colon j=1,\ldots,k}$, for some $k\in\set{1,\ldots,n}$, where each $C_{j}^X=\set{i \colon X_{i}=X^*_{j}}$ contains the observations associated to location $X^*_j$.
	Therefore, the augmented likelihood becomes
	\begin{align*}
		\mathcal{L}(\bm{\crm^e}; \boldsymbol T_{1:n}, \boldsymbol \Delta_{1:n}, \boldsymbol X_{1:n}, \boldsymbol Z_{1:n}) 
		& = \prod_{j=1}^k \prod_{i \in C_j^X} \left[ k(T_i;X^*_j) \, \tilde \mu^e_{\Delta_i}(dZ_i,dX^*_j) \right] \\
		& \qquad \times \exp \left( - \sum_{i=1}^n \sum_{\ell=1}^D \int_0^{T_i} \int_{[0,1] \times \X} k(s;x) \, \tilde \mu^e_\ell (dz,dx)\,ds \right),
	\end{align*}
	where the product over $i$ has been decomposed as: (i) a product over $j$, i.e.~over the elements of the partition, and (ii) a nested product over the observations in each group $C_j^X$. Remarkably, the observations in $C_j^X$ do not necessarily share the same value of $\Delta$: indeed, $i,\ell\in C_j^X$ does not imply $\Delta_i = \Delta_\ell$. The observed partition encoded in $\boldsymbol \Delta_{1:n}$ is taken into account by considering, for each $j \in \set{1, \dots, k}$, the nested partition $\set{C_{\delta j}^X \colon \delta = 1, \dots, D}$, where $C_{\delta j}^X = \set{i \in C_j^X \colon \Delta_i = \delta}$. The product above is then rewritten as
	\begin{equation*}
		\prod_{j=1}^k \prod_{i \in C_j^X} \left[ k(T_i;X^*_j) \, \tilde \mu^e_{\Delta_i}(dZ_i,dX^*_j) \right] 
		= \left[ \prod_{j=1}^k \prod_{i \in C_j^X} k(T_i;X^*_j) \right] \left[ \prod_{j=1}^k \prod_{\delta=1}^D \prod_{i \in C_{\delta j}^X} \mu^e_\delta(dZ_i,dX^*_j) \right].
	\end{equation*}
	Finally, at the finer level of the partition, for each $j \in \set{1, \dots, k}$ and $\delta \in \set{1, \dots, D}$, the observations in $C_{\delta j}^X$ are further partitioned as $\set{C_{\delta jh}^Z \colon h=1,\ldots,r_{\delta j} }$, for some $r_{\delta j} > 0$, where each $C_{\delta jh}^Z=\set{i\in C_{\delta j}^X \colon Z_{i}=Z^*_{\delta jh}}$ contains the observations associated to latent mark $Z^*_{\delta jh}$ from the random measure $\crm^e_\delta$ paired with location $X^*_j$. Indeed, $i \in C_{\delta jh}^Z$ implies $X_i = X^*_j$ and $\Delta_i = \delta$. Therefore, the nested product becomes
	\begin{align*}
		\prod_{j=1}^k \prod_{\delta=1}^D \prod_{i \in C_{\delta j}^X} \mu^e_\delta(dZ_i,dX^*_j)
		& = \prod_{j=1}^k \prod_{\delta=1}^D \prod_{h=1}^{r_{\delta j}} \prod_{i \in C_{\delta j h}^Z} \mu^e_\delta(dZ^*_{\delta jh},dX^*_j) \\[8pt]
		& = \prod_{j=1}^k \prod_{\delta=1}^D \prod_{h=1}^{r_{\delta j}}\mu^e_\delta(dZ^*_{\delta jh},dX^*_j)^{q_{\delta jh}},
	\end{align*}
	where $q_{\delta jh}=\mbox{card}(C_{\delta jh}^Z)$ is the number of observations in $C_{\delta jh}^Z$. As a result, the augmented likelihood is expressed as
	\begin{align*}
		\mathcal{L}(\bm{\crm^e}; \boldsymbol T_{1:n}, \boldsymbol \Delta_{1:n}, \boldsymbol X_{1:n}, \boldsymbol Z_{1:n}) 
		& = \left[ \prod_{j=1}^k \prod_{i \in C_j^X} k(T_i;X^*_j) \right] \left[ \prod_{j=1}^k \prod_{\delta=1}^D \prod_{h=1}^{r_{\delta j}}\mu^e_\delta(dZ^*_{\delta jh},dX^*_j)^{q_{\delta jh}} \right] \\
		& \qquad \times \exp \left( - \sum_{\ell=1}^D \int_{[0,1] \times \X} \sum_{i=1}^n \int_0^{T_i} k(s;x) \,ds \: \tilde \mu^e_\ell (dz,dx) \right),
	\end{align*}
	which coincides with \eqref{likelihood} by substituting the quantities in \eqref{kernel_quantities}. \hfill \qed
	
	\paragraph{Proof of Theorem~\ref{prop:marginal}}
	
	Consider the augmented likelihood function in \eqref{likelihood}, \vspace{-6pt}
	\begin{multline*}
		\mathcal{L}(\boldsymbol{\crm}^e; \mathcal{F}_n^*) \\
		= Q_n(\boldsymbol T_{1:n}, \boldsymbol X_{1:n}) \, \exp \left( - \sum_{\ell=1}^D \int_{[0,1] \times \X} K_n(x) \, \tilde \mu^e_\ell(dz,dx) \right) \prod_{\ell=1}^D \prod_{j=1}^k \prod_{h=1}^{r_{\ell j}} \tilde \mu^e_\ell(dZ^*_{\ell jh},dX^*_j)^{q_{\ell jh}},
	\end{multline*}
	where the quantities $Q_n(\boldsymbol T_{1:n}, \boldsymbol X_{1:n})$ and $K_n(x)$ are defined as in \eqref{kernel_quantities}. Conditionally on $\crm_0$, the extended random measures in $\boldsymbol{\crm}^e$ are independent CRMs on $[0,1] \times \X$, characterized by the homogeneous Lévy intensity $\tilde \nu^e(ds,dz,dx) = \rho(ds) \, H(dz) \, \crm_0(dx)$. The conditional expectation of the augmented likelihood, given $\crm_0$, is
	\begin{multline*}
		\mathbb{E} \left[ \mathcal{L}(\boldsymbol{\crm}^e; \mathcal{F}_n^*) \mid \crm_0 \right] = Q_n(\boldsymbol T_{1:n}, \boldsymbol X_{1:n}) \\ 
		\times \mathbb{E} \left[ \exp \left( - \sum_{\ell=1}^D \int_{[0,1] \times \X} K_n(x) \, \tilde \mu^e_\ell(dz,dx) \right) \prod_{\ell=1}^D \prod_{j=1}^k \prod_{h=1}^{r_{\ell j}} \tilde \mu^e_\ell(dZ^*_{\ell jh},dX^*_j)^{q_{\ell jh}} \bigm\vert \crm_0 \right].
	\end{multline*}
	By conditional independence of the extended random measures $\crm^e_1, \dots, \crm^e_D$, we have
	\begin{multline*}
		\mathbb{E} \left[ \mathcal{L}(\boldsymbol{\crm}^e; \mathcal{F}_n^*) \mid \crm_0 \right] = Q_n(\boldsymbol T_{1:n}, \boldsymbol X_{1:n}) \\
		\times \prod_{\ell=1}^D \mathbb{E} \left[ \exp \left( - \int_{[0,1] \times \X} K_n(x) \, \tilde \mu^e_\ell(dz,dx) \right) \prod_{j=1}^k \prod_{h=1}^{r_{\ell j}} \tilde \mu^e_\ell (dZ^*_{\ell jh},dX^*_j)^{q_{\ell jh}} \bigm\vert \crm_0 \right],
	\end{multline*}
	and exploiting the identity in \eqref{key_identity} with $\alpha(dz,dx) = H(dz) \, \crm_0(dx)$ and $f(z,x) = K_n(x)$, we obtain
	\begin{align*}
		\mathbb{E} \left[ \mathcal{L}(\boldsymbol{\crm}^e; \mathcal{F}_n^*) \mid \crm_0 \right] 
		& = Q_n(\boldsymbol T_{1:n}, \boldsymbol X_{1:n}) \, \prod_{\ell =1}^D \exp \left( - \int_\X \psi(K_n(x)) \, \crm_0(dx) \right) \\
		& \qquad \qquad \times \prod_{\ell =1}^D \prod_{j=1}^k \prod_{h=1}^{r_{\ell j}} \tau(q_{\ell jh}; K_n(X_j^*)) \, H(dZ_{\ell jh}^*) \, \crm_0(dX^*_j) \displaybreak \\
		& = Q_n(\boldsymbol T_{1:n}, \boldsymbol X_{1:n}) \, \exp \left( - \int_\X D \psi(K_n(x)) \, \crm_0(dx) \right) \\
		& \qquad \qquad \times \prod_{j=1}^k \crm_0(dX^*_j)^{r_j} \prod_{\ell=1}^D \prod_{h=1}^{r_{\ell j}} \tau(q_{\ell jh}; K_n(X_j^*)) \, H(dZ_{\ell jh}^*).
	\end{align*}
	The random measure $\crm_0$ at the root of the hierarchical prior is a CRM with homogeneous Lévy intensity $\nu_0(ds,dx) = \rho_0(ds) \, \Lambda_0(dx)$. Therefore, the joint marginal distribution of the observations $(\boldsymbol T_{1:n}, \boldsymbol \Delta_{1:n})$ and the latent variables $(\boldsymbol X_{1:n}, \boldsymbol Z_{1:n})$, i.e.~the expectation of the augmented likelihood, is
	\begin{align*}
		\Prob (\mathcal{F}_n^*) 
		& = \mathbb{E} \left[ \mathcal{L}(\boldsymbol{\crm}^e; \mathcal{F}_n^*) \right] 
		= \mathbb{E} \left[ \, \mathbb{E} \left[ \mathcal{L}(\boldsymbol{\crm}^e; \mathcal{F}_n^*) \mid \crm_0 \right] \, \right] \\[8pt]
		& = Q_n(\boldsymbol T_{1:n}, \boldsymbol X_{1:n}) \, \mathbb{E} \left[ \exp \left( - \int_\X D \psi(K_n(x)) \, \crm_0(dx) \right) \prod_{j=1}^k \crm_0(dX^*_j)^{r_j} \right] \\[8pt]
		& \qquad \qquad \times \prod_{j=1}^k \prod_{\ell=1}^D \prod_{h=1}^{r_{\ell j}} \tau(q_{\ell jh}; K_n(X_j^*)) \, H(dZ_{\ell jh}^*).
	\end{align*}
	Leveraging again \eqref{key_identity} with $\alpha(dx) = \Lambda_0(dx)$ and $f(x) = D\psi(K_n(x))$, we conclude that
	\begin{multline*}
		\Prob (\mathcal{F}_n^*)
		= Q_n(\boldsymbol T_{1:n}, \boldsymbol X_{1:n}) \, \exp \left( - \int_\X \psi_0 \big( D \psi(K_n(x)) \big) \, \Lambda_0(dx) \right) \qquad \qquad \\[8pt]
		\times \prod_{j=1}^k \tau_0 \big( r_j ; D \psi(K_n(X_j^*)) \big) \, \Lambda_0(dX_j^*) \, \prod_{\ell =1}^D \prod_{h=1}^{r_{\ell j}} \tau(q_{\ell jh}; K_n(X_j^*)) \, H(dZ_{\ell jh}^*).
	\end{multline*}
	\hfill \qed
	
	\paragraph{Proof of Theorem~\ref{prop:predict_competing}}
	
	The predictive distribution for the observation $(T_{n+1}, \Delta_{n+1})$ and the corresponding latent location $X_{n+1}$ and latent mark $Z_{n+1}$ is obtained from the joint marginal distribution in Theorem~\ref{prop:marginal}, and given by
	\begin{equation} \label{eq:pred_supp}
		\Prob(T_{n+1}, \Delta_{n+1}, X_{n+1}, Z_{n+1} \mid \mathcal{F}_n^*) = \frac{\displaystyle \Prob(T_{n+1}, \Delta_{n+1}, X_{n+1}, Z_{n+1}, \mathcal{F}_n^*)}{\displaystyle \Prob(\mathcal{F}_n^*)}.
	\end{equation}
	where the numerator is the joint marginal distribution of the $n+1$ observations and corresponding latent variables. 
	
	The analytical form of the predictive distribution \eqref{eq:pred_supp} depends on the allocation of the observation $n+1$ within the latent partition structure, i.e.~on the specific values assumed by $X_{n+1}$ and $Z_{n+1}$. Three alternative cases can be identified:
	\begin{enumerate}
		\item[(1)] both $X_{n+1}$ and $Z_{n+1}$ display ties with values in $\boldsymbol X^*$ and $\boldsymbol Z^*$, respectively, say $X_{n+1} = X^*_j$ and $Z_{n+1} = Z^*_{\delta jh}$, where necessarily $\Delta_{n+1} = \delta$;
		\item[(2)] $X_{n+1}$ displays a tie with values in $\boldsymbol X^*$, say $X_{n+1} = X^*_j$, while $Z_{n+1}$ assumes a value not included in $\boldsymbol Z^*$;
		\item[(3)] both $X_{n+1}$ and $Z_{n+1}$ assume values not included in $\boldsymbol X^*$ and $\boldsymbol Z^*$, respectively.
	\end{enumerate}
	These three cases are highlighted in the following expression for the numerator:
	\begin{multline*}
		\Prob(T_{n+1} \in dt, \Delta_{n+1} = \delta, X_{n+1} \in dx, Z_{n+1} \in dz, \mathcal{F}_n^*) \\[8pt]
		= Q_n(\boldsymbol T_{1:n}, \boldsymbol X_{1:n}) \, k(t; x) \, \exp \left( - \int_\X \psi_0 \big( D \psi(K_{n+1}(y)) \big) \, \Lambda_0(dy) \right) \\[8pt]
		\times \prod_{j=1}^k \left( \prod_{\ell=1}^D \prod_{h=1}^{r_{\ell j}} \tau(q_{\ell jh}; K_{n+1}(X_j^*)) \, H(dZ_{\ell jh}^*) \right) \tau_0 \big( r_j ; D \psi(K_{n+1}(X_j^*)) \big) \, \Lambda_0(dX_j^*) \\[8pt]
		\times \Bigg\{ \sum_{j=1}^k \sum_{h=1}^{r_{\delta j}} \frac{\tau(q_{\delta jh}+1; K_{n+1}(X_j^*))}{\tau(q_{\delta jh}; K_{n+1}(X_j^*))} \, \delta_{X^*_j}(dx) \, \delta_{Z^*_{\delta jh}}(dz) \qquad \leftarrow \ \text{case (1)} \\[8pt]
		\text{case (2)} \ \rightarrow \qquad + \sum_{j=1}^k \tau(1; K_{n+1}(X_j^*)) \, \frac{\tau_0 ( r_j + 1 ; D \psi(K_{n+1}(X_j^*)))}{\tau_0 ( r_j ; D \psi(K_{n+1}(X_j^*)) )} \, \delta_{X^*_j}(dx) \, H(dz) \\[8pt]
		\text{case (3)} \ \rightarrow \qquad + \, \tau(1; K_{n+1}(x)) \, \tau_0 ( 1 ; D \psi(K_{n+1}(x))) \, \Lambda_0(dx) \, H(dz) \Bigg\} \,dt,
	\end{multline*}
	where $K_{n+1}(x) = K_n(x) + \displaystyle \int_0^t k(s,x)\,ds$ is the updated kernel term, since $T_{n+1} = t$. Remarkably, these three cases coincide with the three cases appearing in the description of the marginal sampling algorithm (Section~\ref{sec:marginalMCMC}). The joint predictive distribution is thus obtained by dividing the expression above by the joint marginal distribution in Theorem~\ref{prop:marginal}: 
	\begin{multline*} 
		\Prob(T_{n+1} \in dt, \Delta_{n+1} = \delta, X_{n+1} \in dx, Z_{n+1} \in dz \mid \mathcal{F}_n^*) \\[8pt]
		= k(t; x) \, \exp \left( - \int_\X \left( \psi_0 \big( D \psi(K_{n+1}(y)) \big) - \psi_0 \big( D \psi(K_n(y)) \big) \right) \Lambda_0(dy) \right) \\[8pt]
		\times \prod_{j=1}^k \left( \prod_{\ell=1}^D \prod_{h=1}^{r_{\ell j}} \frac{\tau(q_{\ell jh}; K_{n+1}(X_j^*))}{\tau(q_{\ell jh}; K_n(X_j^*))} \right) \frac{\tau_0 \big( r_j ; D \psi(K_{n+1}(X_j^*)) \big)}{\tau_0 \big( r_j ; D \psi(K_n(X_j^*)) \big)} \displaybreak \\[8pt]
		\times \Bigg\{ \sum_{j=1}^k \sum_{h=1}^{r_{\delta j}} \frac{\tau(q_{\delta jh}+1; K_{n+1}(X_j^*))}{\tau(q_{\delta jh}; K_{n+1}(X_j^*))} \, \delta_{X^*_j}(dx) \, \delta_{Z^*_{\delta jh}}(dz) \qquad \qquad \qquad \\[8pt]
		\qquad \qquad \qquad \qquad + \sum_{j=1}^k \tau(1; K_{n+1}(X_j^*)) \, \frac{\tau_0 ( r_j +1 ; D \psi(K_{n+1}(X_j^*)))}{\tau_0 ( r_j ; D \psi(K_{n+1}(X_j^*)) )} \, \delta_{X^*_j}(dx) \, H(dz) \\[8pt]
		+ \, \tau(1; K_{n+1}(x)) \, \tau_0 ( 1 ; D \psi(K_{n+1}(x))) \, \Lambda_0(dx) \, H(dz) \Bigg\} \,dt.
	\end{multline*}
	The predictive distribution for observation $(T_{n+1}, \Delta_{n+1})$ is obtained by marginalizing the latent variables
	\begin{multline} \label{survival_predictive}
		\Prob(T_{n+1} \in dt, \Delta_{n+1} = \delta \mid \mathcal{F}_n) = \exp \left( - \int_\X \left( \psi_0 \big( D \psi(K_{n+1}(y)) \big) - \psi_0 \big( D \psi(K_n(y)) \big) \right) \Lambda_0(dy) \right) \\[8pt] 
		\qquad \qquad \times \prod_{j=1}^k \left( \prod_{\ell=1}^D \prod_{h=1}^{r_{\ell j}} \frac{\tau(q_{\ell jh}; K_{n+1}(X_j^*))}{\tau(q_{\ell jh}; K_n(X_j^*))} \right) \frac{\tau_0 \big( r_j ; D \psi(K_{n+1}(X_j^*)) \big)}{\tau_0 \big( r_j ; D \psi(K_n(X_j^*)) \big)} \\[8pt]
		\times \Bigg\{ \sum_{j=1}^k k(t; X^*_j) \sum_{h=1}^{r_{\delta j}} \frac{\tau(q_{\delta jh}+1; K_{n+1}(X_j^*))}{\tau(q_{\delta jh}; K_{n+1}(X_j^*))} \qquad \qquad \qquad \qquad \\[8pt]
		\qquad \qquad + \sum_{j=1}^k k(t; X^*_j) \, \tau(1; K_{n+1}(X_j^*)) \, \frac{\tau_0 ( r_j +1 ; D \psi(K_{n+1}(X_j^*)))}{\tau_0 ( r_j ; D \psi(K_{n+1}(X_j^*)) )} \\[8pt]
		+ \int_\X k(t; x) \, \tau(1; K_{n+1}(x)) \, \tau_0 ( 1 ; D \psi(K_{n+1}(x))) \, \Lambda_0(dx) \Bigg\} \,dt.
	\end{multline}
	Finally, the predictive distribution of $\Delta_{n+1}$, given the survival time $T_{n+1}$ is proportional to the expression above, in which factors not depending on $\Delta_{n+1} = \delta$ can be ignored:
	\begin{multline*}
		\Prob(\Delta_{n+1} = \delta \mid T_{n+1} = t, \, \mathcal{F}_n)
		\, \propto \, \sum_{j=1}^k k(t; X^*_j) \sum_{h=1}^{r_{\delta j}} \frac{\tau(q_{\delta jh}+1; K_{n+1}(X_j^*))}{\tau(q_{\delta jh}; K_{n+1}(X_j^*))} \\[8pt]
		+ \sum_{j=1}^k k(t; X^*_j) \, \tau(1; K_{n+1}(X_j^*)) \, \frac{\tau_0 ( r_j +1 ; D \psi(K_{n+1}(X_j^*)))}{\tau_0 ( r_j ; D \psi(K_{n+1}(X_j^*)) )} \\[8pt]
		+ \int_\X k(t; x) \,\tau(1; K_{n+1}(x)) \, \tau_0 ( 1 ; D \, \psi(K_{n+1}(x))) \, \Lambda_0(dx).
	\end{multline*}
	\hfill \qed
	
	\paragraph{Proof of Theorem~\ref{prop:posterior}}
	
	The posterior distribution of the random measure $\crm_0$ at the root of the hierarchy, given $\mathcal{F}_n^* = (\bm{T}_{1:n},\bm{\Delta}_{1:n}, \bm{X}_{1:n}, \bm{Z}_{1:n})$, can be characterized through its conditional Laplace transform,
	\begin{equation*}
		\mathbb{E} \left[ \exp \left( - \int_\X h_0(x) \, \crm_0(dx) \right) \bigm\vert \mathcal{F}_n^* \,\right]
		= \frac{\displaystyle \mathbb{E} \left[ \exp \left( - \int_\X h_0(x) \, \crm_0(dx) \right) \mathcal{L}(\bm{\crm}^e; \mathcal{F}_n^*) \right]}{\mathbb{E} \left[ \mathcal{L}(\bm{\crm}^e; \mathcal{F}_n^*) \right]},
	\end{equation*}
	where $h_0 \colon \X \mapsto \R^+$ is any non-negative measurable function. The quantity at the denominator is the joint marginal distribution in Theorem~\ref{prop:marginal}, while the numerator can be computed similarly. Specifically, the expectation at the numerator, conditionally on $\crm_0$, is
	\begin{multline*}
		\mathbb{E} \left[ \exp \left( - \int_\X h_0(x) \, \crm_0(dx) \right) \mathcal{L}(\bm{\crm}^e; \mathcal{F}_n^*) \bigm\vert \crm_0 \right] \\[8pt]
		\begin{aligned}
			& = \exp \left( - \int_\X h_0(x) \, \crm_0(dx) \right) \mathbb{E} \left[ \mathcal{L}(\bm{\crm}^e; \mathcal{F}_n^*) \mid \crm_0 \right] \\[8pt]
			& = Q_n(\boldsymbol T_{1:n}, \boldsymbol X_{1:n}) \, \exp \left( - \int_\X \left(h_0(x) + D \psi(K_n(x))\right) \crm_0(dx) \right) 
		\end{aligned} \\
		\times \prod_{j=1}^k \crm_0(dX^*_j)^{r_j} \prod_{\ell =1}^D \prod_{h=1}^{r_{\ell j}} \tau(q_{\ell jh}; K_n(X_j^*)) \, H(dZ_{\ell jh}^*).
	\end{multline*}
	By marginalization of the conditional expectation with respect to $\crm_0$, we get
	\begin{multline*}
		\mathbb{E} \left[ \exp \left( - \int_\X h_0(x) \, \crm_0(dx) \right) \mathcal{L}(\bm{\crm}^e; \mathcal{F}_n^*) \right] \\[8pt]
		\begin{aligned}
			& = Q_n(\boldsymbol T_{1:n}, \boldsymbol X_{1:n}) \ \mathbb{E} \left[ \exp \left( - \int_\X \left(h_0(x) + D \psi(K_n(x))\right) \crm_0(dx) \right) \prod_{j=1}^k \crm_0(dX^*_j)^{r_j} \right] \\
			& \qquad \qquad \times \prod_{j=1}^k \prod_{\ell =1}^D \prod_{h=1}^{r_{\ell j}} \tau(q_{\ell jh}; K_n(X_j^*)) \, H(dZ_{\ell jh}^*) \\[8pt]
			& = Q_n(\boldsymbol T_{1:n}, \boldsymbol X_{1:n}) \, \exp \left( - \int_\X \psi_0 \big( h_0(x) + D \psi(K_n(x)) \big) \, \Lambda_0(dx) \right) \prod_{j=1}^k \Lambda_0(dX_j^*) \\
			& \qquad \qquad \times \prod_{j=1}^k \tau_0 \big( r_j ; h_0(X^*_j) + D \psi(K_n(X_j^*)) \big) \prod_{\ell =1}^D \prod_{h=1}^{r_{\ell j}} \tau(q_{\ell jh}; K_n(X_j^*)) \, H(dZ_{\ell jh}^*),
		\end{aligned}
	\end{multline*}
	where the last identity is obtained resorting to \eqref{key_identity} with $f(x) = h_0(x) + D \psi(K_n(x))$.
	Therefore, the conditional Laplace transform is \cut
	\begin{multline*}
		\mathbb{E} \left[ \exp \left( - \int_\X h_0(x) \, \crm_0(dx) \right) \bigm\vert \mathcal{F}_n^*\, \right] \\[8pt]
		= \exp \left( - \int_\X \Big( \psi_0 \big( h_0(x) + D  \psi(K(x)) \big) - \psi_0 \big(D \psi(K(x)) \big) \Big) \, \Lambda_0(dx) \right) \\
		\times \prod_{j=1}^k \frac{\tau_0 \big( r_j ; h_0(X^*_j) + D \psi(K(X_j^*)) \big)}{\tau_0 \big( r_j; D \psi(K(X_j^*)) \big)}.
	\end{multline*}
	Thanks to simple algebra with Laplace exponents, the exponential term can be rewritten as
	\begin{multline*}
		\exp \left( - \int_\X \Big( \psi_0 \big( h_0(x) + D \psi(K_n(x)) \big) - \psi_0 \big(D \psi(K_n(x)) \big) \Big) \, \Lambda_0(dx) \right) \\[8pt]
		= \exp \left( - \int_\X \psi_{0,x}^+(h_0(x)) \, \Lambda_0(dx) \right) = \mathbb{E} \left[ \exp \left( - \int_\X h_0(x) \, \crm^+_0(dx) \right) \right],
	\end{multline*}
	where $\psi_{0,x}^+(\cdot)$ is the Laplace exponent corresponding to $\rho_0^+(ds \,\vert\, x) = \exp \left( - D \psi(K_n(x))\,s \right) \, \rho_0(ds)$, and $\crm^+_0$ is a CRM with Lévy intensity $\nu^+_0(ds,dx) = \rho_0^+(ds \,\vert\, x) \, \Lambda_0(dx)$. Moreover, for each $j = 1, \dots, k$,
	\begin{align*}
		\tau_0 \big( r_j ; h_0(X^*_j) + D \psi(K_n(X_j^*)) \big) & = \int_{\R^+} s^{r_j} e^{-h_0(X^*_j)\,s} \, \exp \left( - D \psi(K_n(X_j^*)\,s \right) \,\rho_0(ds) \\[8pt]
		& = \int_{\R^+} e^{-h_0(X^*_j)\,s} \, s^{r_j} \rho_{0}^+(ds \mid X^*_j).
	\end{align*}
	Therefore, the ratios in the expression of the conditional Laplace transform can be regarded as expectations,
	\begin{equation*}
		\frac{\tau_0 \big( r_j ; h_0(X^*_j) + D \psi(K_n(X_j^*)) \big)}{\tau_0 \big( r_j; D \psi(K_n(X_j^*)) \big)} = \frac{\displaystyle \int_{\R^+} e^{-h_0(X^*_j)\,s} \, s^{r_j} \rho_0^+(ds \mid X^*_j) }{\displaystyle \int_{\R^+} s^{r_j} \rho_0^+(ds \mid X^*_j)} = \mathbb{E} \left[ \exp \left( -h_0(X^*_j)\,V_j \right) \right],
	\end{equation*}
	where each $V_j$ is a random variable with density function
	\begin{equation*}
		f_{V_j}(s) \, ds \propto s^{r_j} \exp \left( - D \psi(K_n(X_j^*))\,s \right) \, \rho_0(ds) = s^{r_j} \rho_0^+(ds \mid X^*_j).
	\end{equation*}
	In conclusion, the conditional Laplace transform can be written as
	\begin{align*}
		\mathbb{E} \left[ \exp \left( - \int_\X h_0(x) \, \crm_0(dx) \right) \bigm\vert \mathcal{F}_n^*\, \right]
		& = \mathbb{E} \left[ \exp \left( - \int_\X h_0(x) \, \crm^+_0(dx) \right) \right] \prod_{j=1}^k \mathbb{E} \left[ \exp \left( -h_0(X^*_j)\,V_j \right) \right] \\[8pt]
		& = \mathbb{E} \left[ \exp \left( - \int_\X h_0(x) \, \crm^+_0(dx) - \sum_{j=1}^k h_0(X^*_j)\,V_j \right) \right],
	\end{align*}
	which highlights the structure of the posterior distribution of $\crm_0$ as the sum of the non-homogeneous CRM $\crm_0^*$ and the random jumps $V_1, \dots, V_k$ at the fixed locations $X^*_1, \dots, X^*_k$; moreover, these components are mutually independent (second line). 
	
	A similar characterization holds for the posterior distribution of the extended random measures $\crm^e_1, \dots, \crm^e_D$ at the bottom of the hierarchy. Indeed, their joint conditional Laplace transform, given $\mathcal{F}_n^* = (\bm{T}_{1:n},\bm{\Delta}_{1:n}, \bm{X}_{1:n}, \bm{Z}_{1:n})$ and the root random measure $\crm_0$, is
	\begin{multline*}
		\mathbb{E} \left[ \exp \left( - \sum_{\ell=1}^D \int_{[0,1] \times \X} h_\ell(z,x) \, \crm^e_\ell(dz,dx) \right) \bigm\vert \mathcal{F}_n^*, \crm_0 \right] \\
		= \frac{\displaystyle \mathbb{E} \left[ \exp \left( - \sum_{\ell=1}^D \int_{[0,1] \times \X} h_\ell(z,x) \, \crm^e_\ell(dz,dx) \right) \mathcal{L}(\bm{\crm}^e; \mathcal{F}_n^*) \bigm\vert \crm_0 \right]}{\mathbb{E} \left[ \mathcal{L}(\bm{\crm}^e; \mathcal{F}_n^*) \mid \crm_0 \right]},
	\end{multline*}
	where $h_\delta \colon [0,1] \times \X \mapsto \R^+$, for $\delta \in \set{1, \dots, D}$, are non-negative measurable functions. The quantity at the denominator is the conditional expectation of the augmented likelihood, obtained as an intermediate step in the proof of Theorem~\ref{prop:marginal}, while the expectation at the numerator can be computed similarly. In particular, we have
	\begin{multline*}
		\mathbb{E} \left[ \exp \left( - \sum_{\ell=1}^D \int_{[0,1] \times \X} h_\ell(z,x) \, \crm^e_\ell(dz,dx) \right) \mathcal{L}(\bm{\crm}^e; \mathcal{F}_n^*) \bigm\vert \crm_0 \right] \\
		= Q(\boldsymbol T, \boldsymbol X) \, \mathbb{E} \Bigg[ \exp \left( - \sum_{\ell=1}^D \int_{[0,1] \times \X} \big( h_\ell(z,x) + K_n(x) \big) \, \tilde \mu^e_\ell(dz,dx) \right) \\ 
		\times \prod_{\ell=1}^D \prod_{j=1}^k \prod_{h=1}^{r_{\ell j}} \tilde \mu^e_\ell(dZ^*_{\ell jh},dX^*_j)^{q_{\ell jh}} \Bigg].
	\end{multline*}
	By conditional independence of the extended random measures $\crm^e_1, \dots, \crm^e_D$, and leveraging again \eqref{key_identity} with $f(z,x) = h_\delta(z,x) + K_n(x)$, for $\delta = 1, \dots, D$, we obtain
	\begin{align*}
		& \mathbb{E} \left[ \exp \left( - \sum_{\ell=1}^D \int_{[0,1] \times \X} h_\ell(z,x) \, \crm^e_\ell(dz,dx) \right) \mathcal{L}(\bm{\crm}^e; \mathcal{F}_n^*) \bigm\vert \crm_0 \right] \\
		& \hspace{6em} = Q(\boldsymbol T, \boldsymbol X) \, \prod_{\ell=1}^D \mathbb{E} \Bigg[ \exp \left( - \int_{[0,1] \times \X} \big( h_\ell(z,x) + K_n(x) \big) \, \tilde \mu^e_\ell(dz,dx) \right) \\
		& \hspace{6em} \qquad \qquad \times \prod_{j=1}^k \prod_{h=1}^{r_{\ell j}} \tilde \mu^e_\ell(dZ^*_{\ell jh},dX^*_j)^{q_{\ell jh}} \Bigg] \displaybreak \\
		& \hspace{6em} = Q(\boldsymbol T, \boldsymbol X) \, \prod_{\ell =1}^D \exp \left( - \int_{[0,1] \times \X} \psi(h_\ell(z,x) + K_n(x)) \, H(dz)\, \crm_0(dx) \right) \\
		& \hspace{6em} \qquad \qquad \times \prod_{\ell=1}^D \prod_{j=1}^k \prod_{h=1}^{r_{\ell j}} \tau(q_{\ell jh}; h_\ell(Z^*_{\ell jh}, X^*_j) + K_n(X_j^*)) \, H(dZ_{\ell jh}^*) \, \crm_0(dX^*_j).
	\end{align*}
	Therefore, the conditional Laplace transform is
	\begin{multline*}
		\mathbb{E} \left[ \exp \left( - \sum_{\ell=1}^D \int_{[0,1] \times \X} h_\ell(z,x) \, \crm^e_\ell(dz,dx) \right) \bigm\vert \mathcal{F}_n^*, \crm_0 \right] \\[8pt]
		= \prod_{\ell=1}^D \exp \left( - \int_{[0,1] \times \X} \Big( \psi(h_\ell(z,x) + K_n(x)) - \psi(K_n(x)) \Big) H(dz)\, \crm_0(dx) \right) \\
		\times \prod_{\ell=1}^D \prod_{j=1}^k \prod_{h=1}^{r_{\ell j}} \frac{\tau(q_{\ell jh}; h_\ell(Z^*_{\ell jh}, X^*_j) + K_n(X_j^*))}{\tau(q_{\ell jh}; K_n(X_j^*))}.
	\end{multline*}
	For each $\delta = 1, \dots, D$, the exponential term can be rewritten as
	\begin{align*}
		\exp & \left( - \int_{[0,1] \times \X} \Big( \psi(h_\delta(z,x) + K_n(x)) - \psi(K_n(x)) \Big) \, H(dz)\, \crm_0(dx) \right) \\[8pt]
		& \qquad \qquad \qquad = \mathbb{E} \left[ \exp \left( - \int_{[0,1] \times \X} h_\delta(z,x) \, \crm^+_\delta(dz,dx) \right) \right],
	\end{align*}
	where $\crm^+_\delta$ is a CRM with non-homogeneous Lévy intensity $\nu^+(ds,dz,dx) = \rho^+(ds \,\vert\, x) \, H(dz)\, \crm_0(dx)$, with jump component $\rho^+(ds \,\vert\, x) = e^{ - K_n(x)\,s} \rho(ds)$. 
	
	Moreover, for each $\delta = 1, \dots, D$, $j = 1, \dots, k$, and $h = 1, \dots, r_{\delta j}$,
	\begin{equation*}
		\tau(q_{\delta jh}; h_\delta(Z^*_{\delta jh}, X^*_j) + K_n(X_j^*)) = \int_{\R^+} e^{-h_\delta(Z^*_{\delta jh}, X^*_j)\,s} \, s^{q_{\delta jh}} \rho^+(ds \mid X^*_j).
	\end{equation*}
	Therefore, the ratios in the expression of the conditional Laplace transform can be regarded as expectations
	\begin{equation*}
		\frac{\tau(q_{\delta jh}; h_\delta(Z^*_{\delta jh}, X^*_j) + K_n(X_j^*))}{\tau(q_{\delta jh}; K_n(X_j^*))} = \mathbb{E} \left[ \exp \left( - h_\delta(Z^*_{\delta jh}, X^*_j)\,S_{\delta jh} \right) \right],
	\end{equation*}
	where each $S_{\delta jh}$ is a non-negative random variable having density function proportional to
	\begin{equation*}
		s^{q_{\delta jh}} \, e^{ - K_n(X^*_j)\,s} \, \rho(ds) = s^{q_{\delta jh}} \rho^+(ds \mid X^*_j).
	\end{equation*}
	In conclusion, the conditional Laplace transform can be written as
	\begin{align*}
		& \mathbb{E} \left[ \exp \left( - \sum_{\ell=1}^D \int_{[0,1] \times \X} h_\ell(z,x) \, \crm^e_\ell(dz,dx) \right) \bigm\vert \mathcal{F}^*_n, \crm_0 \right] \\[8pt]
		& \qquad \qquad = \prod_{\ell=1}^D \mathbb{E} \left[ \exp \left( - \int_{[0,1] \times \X} h_\ell(z,x) \, \crm^+_\ell(dz,dx) \right) \right] \prod_{j=1}^k \prod_{h=1}^{r_{\ell j}} \mathbb{E} \left[ \exp \left( - h_\ell(Z^*_{\ell jh}, X^*_j)\,S_{\ell jh} \right) \right] \\[8pt]
		& \qquad \qquad = \prod_{\ell=1}^D \mathbb{E} \left[ \exp \left( - \int_{[0,1] \times \X} h_\ell(z,x) \, \crm^+_\ell(dz,dx) - \sum_{j=1}^k \sum_{h=1}^{r_{\ell j}} h_\ell(Z^*_{\ell jh}, X^*_j)\,S_{\ell jh} \right) \right],
	\end{align*}
	which highlights the structure of the posterior distribution of each $\crm_\delta$, given the root random measure $\crm_0$, as the sum of the non-homogeneous CRM $\crm_\delta^*$ and the random jumps $(S_{\delta jh})_{jh}$ at fixed locations. These components are mutually independent, and the random measures $\crm_1^e, \dots, \crm_D^e$ are conditionally independent, given $\crm_0$. The posterior characterization in Theorem~\ref{prop:posterior} is obtained by marginalization with respect to the first component of the random measure, that is $\crm_\delta(dx) = \crm^e_\delta([0,1],dx)$, for $\delta = 1, \dots, D$.
	\hfill \qed 
	
	\paragraph{Proof of Proposition~\ref{prop:posterior_survival}}
	
	The posterior estimate of the overall survival function $\tilde S(t)$, given $\mathcal{F}_n = (\bm{T}_{1:n},\bm{\Delta}_{1:n}, \Pi_{\bm{X}_{1:n}}, \Pi_{\bm{Z}_{1:n}}, \bm{X}^*)$, at $t \ge 0$, is obtained as
	\begin{align*}
		\E \left[ \tilde S(t) \mid \mathcal{F}_n \right]
		& = \E \left[ \exp \left( - \sum_{\ell=1}^D \int_0^t \int_\X k(s;x) \, \tilde \mu_\ell(dx)\,ds \right) \bigm\vert \mathcal{F}_n \right] \\[8pt]
		& = \E \left[ \exp \left( - \sum_{\ell=1}^D \int_\X K_1(x) \, \tilde \mu_\ell(dx) \right) \bigm\vert \mathcal{F}_n \right],
	\end{align*}
	where $K_1(x) = K_1(x; t) = \displaystyle \int_0^t k(s;x)\,ds$ is the integrated kernel up to time $t \ge 0$. Conditionally on the root random measure $\crm_0$, the expressions for the posterior distributions of the random measures $\crm_1, \dots, \crm_D$ derived in Theorem~\ref{prop:posterior} are substituted above, leading to
	\begin{align*}
		\E \left[ \tilde S(t) \mid \mathcal{F}_n, \crm_0 \right] & = \E \left[ \exp \left( - \sum_{\ell=1}^D \int_\X K_1(x) \, \tilde \mu^+_\ell(dx) - \sum_{\ell=1}^D \sum_{j=1}^k \sum_{h=1}^{r_{\ell j}} K_1(X^*_j) \, S_{\ell jh} \right) \Bigm\vert \crm_0 \right] \\[8pt]
		& = \E \left[ \prod_{\ell=1}^D \exp \left( - \int_\X K_1(x) \, \tilde \mu^+_\ell(dx) \right) \prod_{j=1}^k \prod_{h=1}^{r_{\ell j}} \exp \left( - K_1(X^*_j) \, S_{\ell jh} \right) \Bigm\vert \crm_0 \right];
	\end{align*}
	By the conditional independence \emph{a posteriori} of the random measures  $\crm_1, \dots, \crm_D$, and of the their components, we have \cut
	\begin{equation*}
		\E \left[ \tilde S(t) \mid \mathcal{F}_n, \crm_0 \right] = \prod_{\ell=1}^D \E \left[ \exp \left( - \int_\X K_1(x) \, \tilde \mu^+_\ell(dx) \right) \bigm\vert \crm_0 \right] \prod_{j=1}^k \prod_{h=1}^{r_{\ell j}} \E \left[ \exp \left( - K_1(X^*_j) \, S_{\ell jh} \right) \right].
	\end{equation*}
	The posterior characterization of the random measures $\crm_1, \dots, \crm_D$ implies that
	\begin{align*}
		\E \left[ \tilde S(t) \mid \mathcal{F}_n, \crm_0 \right] & = \prod_{\ell=1}^D \exp \left( - \int_\X \psi^+_{x}(K_1(x)) \, \tilde \mu_0(dx) \right) \prod_{j=1}^k \prod_{h=1}^{r_{\ell j}} \frac{\tau^+_{X^*_j}(q_{\ell jh}; K_1(X^*_j))}{\tau^+_{X^*_j}(q_{\ell jh}; 0)} \\[8pt]
		& = \exp \left( - \int_\X D \psi^+_{x}(K_1(x)) \, \tilde \mu_0(dx) \right) \prod_{\ell=1}^D \prod_{j=1}^k \prod_{h=1}^{r_{\ell j}} \frac{\tau^+_{X^*_j}(q_{\ell jh}; K_1(X^*_j))}{\tau^+_{X^*_j}(q_{\ell jh}; 0)},
	\end{align*}
	where $\psi^+_x$ and $\tau^+_x$ are the Laplace exponent and cumulants, respectively, corresponding to $\rho^+$ and defined in \eqref{eq:cumulants_posterior}. The posterior estimate is obtained by substituting the specification of the posterior distribution of the root random measure $\crm_0$, and exploiting the mutual independence of its components,
	\begin{multline*}
		\E \left[ \tilde S(t) \mid \mathcal{F}_n \right] = \E \left[ \exp \left( - \int_\X D \psi^+_x(K_1(x)) \, \tilde \mu^+_0(dx) \right) \right] \\
		\times \prod_{j=1}^k \E \left[ \exp \left( - D \psi^+_{X^*_j}(K_1(X^*_j)) \, V_j \right) \right] \prod_{\ell=1}^D \prod_{j=1}^k \prod_{h=1}^{r_{\ell j}} \frac{\tau^+_{X^*_j}(q_{\ell jh}; K_1(X^*_j))}{\tau^+_{X^*_j}(q_{\ell jh}; 0)}.
	\end{multline*}
	In conclusion, the posterior characterization of the random measure $\crm_0$ implies that
	\begin{multline*}
		\E \left[ \tilde S(t) \mid \mathcal{F}_n \right] = \exp \left( - \int_\X \psi_{0,x}^+ \big( D \psi^+_x(K_1(x))\big) \, \Lambda_0(dx) \right) \\
		\times \prod_{j=1}^k \frac{\tau_{0,X^*_j}^+\big(r_j; D \psi^+_{X^*_j}(K_1(X_j^*)) \big)}{\tau_{0,X^*_j}^+(r_j; 0)} \, \prod_{\ell=1}^D \prod_{j=1}^k \prod_{h=1}^{r_{\ell j}} \frac{\tau^+_{X^*_j}(q_{\ell jh}; K_1(X^*_j))}{\tau^+_{X^*_j}(q_{\ell jh}; 0)},
	\end{multline*}
	where $\psi_{0,x}^+$ and $\tau_{0,x}^+$ are the Laplace exponent and cumulants corresponding to $\rho_0^+$. 
	\hfill \qed
	
	\paragraph{Proof of Proposition~\ref{prop:posterior_incidence}}
	
	The proof follows a similar line of reasoning to that in the proof of Proposition~\ref{prop:posterior_survival}. Indeed, the posterior estimate of the cause-specific incidence function $\tilde p(dt, \delta) = \tilde f_\delta(t)\,dt$ for $\delta \in \set{1, \dots, D}$, given $\mathcal{F}_n = (\bm{T}_{1:n},\bm{\Delta}_{1:n}, \Pi_{\bm{X}_{1:n}}, \Pi_{\bm{Z}_{1:n}}, \bm{X}^*)$, at $t \ge 0$, is obtained as
	\begin{equation*}
		\E \big[ \tilde p(dt, \delta) \mid \mathcal{F}_n \big] = \E \left[ \int_\X k(t;x) \, \crm_\delta(dx) \, \exp \left( - \sum_{\ell=1}^D \int_\X K_1(y) \, \tilde \mu_\ell(dy) \right) dt \bigm\vert \mathcal{F}_n \right],
	\end{equation*}
	where $K_1(x) = K_1(x; t) = \displaystyle \int_0^t k(s;x)\,ds$ is the integrated kernel up to time $t \ge 0$. 
	Conditionally on the root random measure $\crm_0$, the expressions for the posterior distributions of the random measures $\crm_1, \dots, \crm_D$ are inserted into the previous formula yielding
	\begin{multline*}
		\E \big[ \tilde p(dt, \delta) \mid \mathcal{F}_n, \crm_0 \big] = \E \Bigg[ \prod_{\ell \neq \delta} \exp \left( - \int_\X K_1(y) \, \tilde \mu^+_\ell(dy) \right) \prod_{j=1}^k \prod_{h=1}^{r_{\ell j}} \exp \left( - K_1(X^*_j) \, S_{\ell jh} \right) \\[8pt]
		\times \Bigg( \int_\X k(t;x) \, \crm^+_\delta(dx) \, \exp \left( - \int_\X K_1(y) \, \tilde \mu^+_\delta(dy) \right) \prod_{j=1}^k \prod_{h=1}^{r_{\delta j}} \exp \left( - K_1(X^*_j) \, S_{\delta jh} \right) \\[8pt]
		+ \sum_{\ell=1}^k k(t;X^*_\ell) \sum_{\varepsilon=1}^{r_{\delta \ell}} \exp \left( - \int_\X K_1(y) \, \tilde \mu^+_\delta(dy) \right) S_{\delta \ell \varepsilon} \prod_{j=1}^k \prod_{h=1}^{r_{\delta j}} \exp \left( - K_1(X^*_j) \, S_{\delta jh} \right) \Bigg) dt \bigm\vert \crm_0 \Bigg].
	\end{multline*}
	By the conditional independence \emph{a posteriori} of $\crm_1, \dots, \crm_D$, and of the their components,
	\begin{multline*}
		\E \big[ \tilde p(dt, \delta) \mid \mathcal{F}_n, \crm_0 \big] = \prod_{\ell \neq \delta} \E \left[ \exp \left( - \int_\X K_1(y) \, \tilde \mu^+_\ell(dy) \right) \bigm\vert \crm_0 \right] \prod_{j=1}^k \prod_{h=1}^{r_{\ell j}} \E \left[ \exp \left( - K_1(X^*_j) \, S_{\ell jh} \right) \right] \\[8pt]
		\times \Bigg( \int_\X k(t;x) \, \E \left[ \crm^+_\delta(dx) \, \exp \left( - \int_\X K_1(y) \, \tilde \mu^+_\delta(dy) \right) \bigm\vert \crm_0 \right] \prod_{j=1}^k \prod_{h=1}^{r_{\delta j}} \E \left[ \exp \left( - K_1(X^*_j) \, S_{\delta jh} \right) \right] \\[8pt]
		+ \sum_{\ell=1}^k k(t;X^*_\ell) \sum_{\varepsilon=1}^{r_{\delta \ell}} \E \left[ \exp \left( - \int_\X K_1(y) \, \tilde \mu^+_\delta(dy) \right) \bigm\vert \crm_0 \right] \E \left[ S_{\delta \ell \varepsilon} \exp \left( - K_1(X^*_\ell) \, S_{\delta \ell \varepsilon} \right) \right] \\[8pt]
		\times \prod_{j \neq \ell} \prod_{h \neq \varepsilon} \E \left[ \exp \left( - K_1(X^*_j) \, S_{\delta j h} \right) \right] \Bigg) dt.
	\end{multline*}
	The posterior characterization of the random measures $\crm_1, \dots, \crm_D$ in Theorem~\ref{prop:posterior}, implies that
	\begin{multline*}
		\E \big[ \tilde p(dt, \delta) \mid \mathcal{F}_n, \crm_0 \big] = \exp \left( - (D-1) \int_\X \psi^+_y(K_1(y)) \, \tilde \mu_0(dy) \right) \prod_{\ell \neq \delta} \prod_{j=1}^k \prod_{h=1}^{r_{\ell j}} \frac{\tau^+_{X^*_j}(q_{\ell jh}; K_1(X^*_j))}{\tau_{X^*_j}^+(q_{\ell jh}; 0)} \\[8pt]
		\times \Bigg( \int_\X k(t;x) \, \tau^+_x(1; K_1(x)) \, \crm_0(dx) \, \exp \left( - \int_\X \psi^+_y(K_1(y)) \, \tilde \mu_0(dy) \right) \prod_{j=1}^k \prod_{h=1}^{r_{\delta j}} \frac{\tau^+_{X^*_j}(q_{\delta jh}; K_1(X^*_j))}{\tau^+_{X^*_j}(q_{\delta jh}; 0)} \\[8pt]
		+ \sum_{\ell=1}^k k(t;X^*_\ell) \sum_{\varepsilon=1}^{r_{\delta \ell}} \exp \left( - \int_\X \psi^+_y(K_1(y)) \, \tilde \mu_0(dy) \right) \frac{\tau^+_{X^*_\ell}(q_{\delta \ell \varepsilon} + 1; K_1(X^*_\ell))}{\tau^+_{X^*_\ell}(q_{\delta \ell \varepsilon}; 0)} \\[8pt]
		\times \prod_{j \neq \ell} \prod_{h \neq \varepsilon} \frac{\tau^+_{X^*_j}(q_{\delta jh}; K_1(X^*_j))}{\tau^+_{X^*_j}(q_{\delta jh}; 0)} \Bigg) dt.
	\end{multline*}
	Rearranging the terms in the expression above, we obtain
	\begin{multline*}
		\E \big[ \tilde p(dt, \delta) \mid \mathcal{F}_n, \crm_0 \big] = \exp \left( - \int_\X D \psi^+_y(K_1(y)) \, \tilde \mu_0(dy) \right) \prod_{\ell=1}^D \prod_{j=1}^k \prod_{h=1}^{r_{\ell j}} \frac{\tau^+_{X^*_j}(q_{\ell jh}; K_1(X^*_j))}{\tau^+_{X^*_j}(q_{\ell jh}; 0)} \\[8pt]
		\times \left( \int_\X k(t;x) \, \tau^+_x(1; K_1(x)) \, \crm_0(dx) + \sum_{j=1}^k k(t;X^*_j) \sum_{h=1}^{r_{\delta j}} \frac{\tau^+_{X^*_j}(q_{\delta j h} + 1; K_1(X^*_j))}{\tau^+_{X^*_j}(q_{\delta j h}; K_1(X^*_j))} \right) dt.
	\end{multline*}
	The posterior estimator is then derived by inserting the expression of the posterior distribution of the root random measure $\crm_0$,
	\begin{multline*}
		\E \big[ \tilde p(dt, \delta) \mid \mathcal{F}_n \big] = \E \Bigg[ \exp \left( - \int_\X D \psi^+_y(K_1(y)) \, \tilde \mu^+_0(dy) \right) \prod_{j=1}^k \exp \left( - D \psi^+_{X^*_j}(K_1(X^*_j)) \, V_j \right) \\[8pt]
		\times \prod_{\ell=1}^D \prod_{j=1}^k \prod_{h=1}^{r_{\ell j}} \frac{\tau^+_{X^*_j}(q_{\ell jh}; K_1(X^*_j))}{\tau^+_{X^*_j}(q_{\ell jh}; 0)} \ \Bigg( \int_\X k(t;x) \, \tau^+_x(1; K_1(x)) \, \crm^+_0(dx) \\[8pt]
		+ \sum_{j=1}^k k(t;X^*_j) \, \tau^+_{X^*_j}(1; K_1(X^*_j)) \, V_j + \sum_{j=1}^k k(t;X^*_j) \sum_{h=1}^{r_{\delta j}} \frac{\tau^+_{X^*_j}(q_{\delta j h} + 1; K_1(X^*_j))}{\tau^+_{X^*_j}(q_{\delta j h}; K_1(X^*_j))} \Bigg) \Bigg] dt.
	\end{multline*}
	By rearranging the terms and invoking the mutual independence of $\crm^+_0$ and $V_1, \dots, V_k$, one obtains a sum of three components:
	\begin{equation*}
		\E \big[ \tilde p(dt, \delta) \mid \mathcal{F}_n \big] = \prod_{\ell=1}^D \prod_{j=1}^k \prod_{h=1}^{r_{\ell j}} \frac{\tau^+_{X^*_j}(q_{\ell jh}; K_1(X^*_j))}{\tau^+_{X^*_j}(q_{\ell jh}; 0)} \ \big( I_1(t) + I_2(t) + I_3(t,\delta) \big) \,dt,
	\end{equation*}
	where the quantities $I_1(t), I_2(t)$ and $I_3(t,\delta)$ are given by
	\begin{align*}
		I_1(t) & = \int_\X k(t;x) \, \tau^+_x(1; K_1(x)) \, \E \left[ \crm^+_0(dx) \, \exp \left( - \int_\X D \psi^+_y(K_1(y)) \, \tilde \mu^+_0(dy) \right) \right] \\ 
		& \qquad \qquad \times \prod_{j=1}^k \E \left[ \exp \left( - D \psi^+_{X^*_j}(K_1(X^*_j)) \, V_j \right) \right],\\[8pt]
		I_2(t) & = \sum_{\ell=1}^k k(t;X^*_\ell) \, \tau^+_{X^*_\ell}(1; K_1(X^*_\ell)) \, \E \left[ \exp \left( - \int_\X D \psi^+_y(K_1(y)) \, \tilde \mu^+_0(dy) \right) \right] \\
		& \qquad \qquad \times \E \left[ V_\ell \, \exp \left( - D \psi^+_{X^*_\ell}(K_1(X^*_\ell)) \, V_\ell \right) \right] \ \prod_{j \neq \ell} \E \left[ \exp \left( - D \psi^+_{X^*_j}(K_1(X^*_j)) \, V_j \right) \right], \displaybreak \\[8pt]
		I_3(t,\delta) & = \sum_{\ell=1}^k k(t;X^*_\ell) \, \sum_{h=1}^{r_{\delta \ell}} \frac{\tau^+_{X^*_\ell}(q_{\delta \ell h} + 1; K_1(X^*_\ell))}{\tau^+_{X^*_\ell}(q_{\delta \ell h}; K_1(X^*_\ell))} \ \E \left[ \exp \left( - \int_\X D \psi^+_y(K_1(y)) \, \tilde \mu^+_0(dy) \right) \right] \\
		& \qquad \qquad \times \prod_{j=1}^k \E \left[ \exp \left( - D \psi^+_{X^*_j}(K_1(X^*_j)) \, V_j \right) \right].
	\end{align*}
	As for the first component, by the posterior characterization of $\crm_0$, and resorting to the identity \eqref{key_identity} with $f(y) = D \psi^+_y(K_1(y))$, we get
	\begin{multline*}
		I_1(t) = \int_\X k(t;x) \, \tau^+_x(1; K_1(x)) \, \tau_{0,x}^+\big(1; D \psi^+_x(K_1(x)) \big) \, \Lambda_0(dx) \\[8pt]
		\times \exp \left( - \int_\X \psi^+_{0,y} \big( D\psi^+_y(K_1(y)) \big) \, \Lambda_0(dy) \right) \prod_{j=1}^k \frac{\tau_{0,X^*_j}^+\big(r_j; D \psi^+_{X^*_j}(K_1(X_j^*)) \big)}{\tau_{0,X^*_j}^+(r_j; 0)},
	\end{multline*}
	while for the second and third component, the expectation with respect to $\crm_0$ and $V_1, \dots, V_k$ reads
	\begin{align*}
		I_2(t) & = \sum_{\ell=1}^k k(t;X^*_\ell) \, \tau^+_{X^*_\ell}(1; K_1(X^*_\ell)) \, \exp \left( - \int_\X \psi_{0,y}^+ \big( D \psi^+_y(K_1(y)) \big) \, \Lambda_0(dy) \right) \\[8pt]
		& \qquad \qquad \times \frac{\tau_{0,X^*_\ell}^+\big(r_\ell + 1; D \psi^+_{X^*_\ell}(K_1(X_\ell^*)) \big)}{\tau_{0,X^*_\ell}^+(r_\ell; 0)} \, \prod_{j \neq \ell} \frac{\tau_{0,X^*_j}^+\big(r_j; D \psi^+_{X^*_j}(K_1(X_j^*))\big)}{\tau_{0,X^*_j}^+(r_j; 0)}, \\[10pt]
		I_3(t,\delta) & = \sum_{\ell=1}^k k(t;X^*_\ell) \, \sum_{h=1}^{r_{\delta \ell}} \, \frac{\tau^+_{X^*_\ell}(q_{\delta \ell h} + 1; K_1(X^*_\ell))}{\tau^+_{X^*_\ell}(q_{\delta \ell h}; K_1(X^*_\ell))} \, \exp \left( - \int_\X \psi_{0,y}^+ \big( D \psi^+_y(K_1(y))\big) \, \Lambda_0(dy) \right) \\[8pt]
		& \qquad \qquad \times \prod_{j=1}^k \frac{\tau_{0,X^*_j}^+\big(r_j; D \psi^+_{X^*_j}(K_1(X_j^*)) \big)}{\tau_{0,X^*_j}^+(r_j; 0)}.
	\end{align*}
	In conclusion, by collecting and rearranging the above terms, we finally obtain
	\begin{multline} \label{survival_posterior}
		\E \big[ \tilde p(dt, \delta) \mid \mathcal{F}_n \big] = \exp \left( - \int_\X \psi_{0,y}^+ \big( D \psi^+_y(K_1(y)) \big) \, \Lambda_0(dy) \right) \\[8pt] 
		\qquad \qquad \times \prod_{j=1}^k \frac{\tau_{0,X^*_j}^+\big(r_j; D \psi^+_{X^*_j}(K_1(X_j^*))\big)}{\tau_{0,X^*_j}^+(r_j; 0)} \, \prod_{\ell=1}^D \prod_{h=1}^{r_{\ell j}} \frac{\tau^+_{X^*_j}(q_{\ell jh}; K_1(X^*_j))}{\tau^+_{X^*_j}(q_{\ell jh}; 0)} \\[8pt]
		\times \Bigg( \int_\X k(t;x) \, \tau^+_x(1; K_1(x)) \,\tau_{0,x}^+\big(1; D\psi^+_x(K_1(x))\big) \, \Lambda_0(dx) \qquad \qquad \qquad \qquad \displaybreak \\[8pt]
		+ \sum_{j=1}^k k(t;X_j^*) \, \tau^+_{X^*_j}(1; K_1(X_j^*)) \, \frac{\tau_{0,X^*_j}^+\big( r_j+1; D \psi^+_{X^*_j}(K_1(X_j^*)) \big)}{\tau_{0,X^*_j}^+ \big( r_j; D \psi^+_{X^*_j}(K_1(X_j^*)) \big)} \\[8pt]
		+ \sum_{j=1}^k k(t;X_j^*) \sum_{h=1}^{r_{\delta j}} \frac{\tau^+_{X_j^*}(q_{\delta jh}+1; K_1(X_j^*))}{\tau^+_{X_j^*}(q_{\delta jh}; K_1(X_j^*))} \Bigg) \, dt.
	\end{multline}
	Note that the factors in the first two lines of this expression coincide with the posterior estimate of the survival function obtained in Proposition~\ref{prop:posterior_survival}.
	\hfill \qed
	
	\paragraph{Alternative proof of Theorem~\ref{prop:predict_competing} based on Proposition~\ref{prop:posterior_incidence} (Remark~\ref{remark})}
	
	The posterior estimate of $\tilde p(dt,\delta)$ detailed in Proposition~\ref{prop:posterior_incidence}, and more explicitly in \eqref{survival_posterior}, coincides with the predictive distribution for observation $(T_{n+1}, \Delta_{n+1})$ described in \eqref{survival_predictive}. This fact can be shown by clarifying the relationships of the Laplace exponents and cumulants for the random measures $\crm^+_1, \dots, \crm^+_D$, defined in \eqref{eq:cumulants_posterior} with the corresponding quantities for $\crm_1, \dots, \crm_D$, defined in \eqref{cumulants}. At the bottom of the hierarchy, the Laplace exponents are related as
	\begin{equation*}
		\psi_x^+(u) = \psi(u+K_n(x)) - \psi(K_n(x)),
	\end{equation*}
	or equivalently, $\psi_x^+(u) + \psi(K_n(x)) = \psi(u+K_n(x))$, while at the root of the hierarchy
	\begin{equation*}
		\psi_{0,x}^+(u) = \psi_0(u+D \psi(K_n(x))) - \psi_0(D \psi(K_n(x))).
	\end{equation*}
	Therefore, exploiting the expressions above and the fact that the updated kernel term is $\displaystyle K_{n+1}(x) = K_n(x) + \int_0^t k(s; x) \,ds = K_n(x) + K_1(x)$, the exponential term in \eqref{survival_posterior} becomes
	\begin{multline*}
		\exp \left( - \int_\X \psi_{0,y}^+ \big( D \psi_y^+(K_1(y))\big) \, \Lambda_0(dy) \right) \\
		= \exp \left( - \int_\X \left( \psi_0(D \psi(K_{n+1}(y))) - \psi_0(D \psi(K_n(y))) \right) \Lambda_0(dy) \right).
	\end{multline*}
	On the other hand, for the cumulants we have
	\begin{equation*}
		\tau_x^+(m; u) = \tau(m; u + K_n(x)), \qquad
		\tau_{0,x}^+(m; u) = \tau_0(m; u + D \psi(K_n(x))).
	\end{equation*}
	The quantities appearing in \eqref{survival_posterior} can be thus rewritten as
	\begin{align*}
		\tau_{0,x}^+\big(m; D\psi_x^+(K_1(x)) \big) = \tau_0\big(m; D \psi_x^+(K_1(x)) + D\psi(K_n(x)) \big) = \tau_0\big(m; D \psi(K_{n+1}(x)) \big),
	\end{align*}
	while $ \tau_{0,x}^+(m; 0) = \tau_0(m; D \psi(K_n(x)))$, and moreover 
	\begin{equation*}
		\tau_x^+(m; K_1(x)) = \tau(m; K_{n+1}(x)), \qquad \tau_x^+(m; 0) = \tau(m; K_n(x))
	\end{equation*}
	In conclusion, the expression in \eqref{survival_predictive} can be obtained from \eqref{survival_posterior} using the identities above. The result in Theorem~\ref{prop:predict_competing} is then retrieved conditioning on the survival time $T_{n+1} = t$.
	\hfill \qed
	
	\clearpage
	\section{Algorithms for generalized gamma mixtures}
	\label{supsec:sampling}
	
	In the following, we specialize the sampling algorithms described in Section~\ref{sec:gibbs} to the subclass of \emph{generalized gamma} hCRMs \eqref{generalized_gamma}, with the kernel choices considered in Section~\ref{sec:mixture_hazard}.
	
	\subsection{Marginal sampling algorithm (Section~\ref{sec:marginalMCMC})} 
	\label{ex:gibbs_marginal}
	
	The marginal Gibbs sampling scheme for the reallocation of observation $i \in \set{1,\ldots,n}$ to the latent partition structure simplifies to: 
	\begin{itemize}[noitemsep]
		\item[(1)] for each latent location $X_j^*$ such that $X_j^* \le T_i$,
		\begin{equation*}
			\Prob \big( X_{i}= X_j^*, \, Z_{i} = Z_{\delta jh}^* \mid \mathcal{F}_n^{-i} \big) \propto k(T_i; X_j^*) \, \frac{q_{\delta jh}^{-i} - \sigma}{\beta + K_n(X_j^*)}, \qquad h = 1, \dots, r_{\delta j};
		\end{equation*}
		\item[(2)] for each latent location $X_j^*$ such that $X_j^* \le T_i$,
		\begin{multline*}
			\Prob \big( X_{i} = X_j^*, \, Z_{i} = \text{`new'} \mid \mathcal{F}_n^{-i} \big) \\
			\propto k(T_{i}; X_j^*) \, (\beta + K_n(X_j^*))^{\sigma-1} \ \frac{r_j^{-i} - \sigma_0}{\beta_0 + \frac{D}{\sigma} \big( (\beta + K_n(X^*_j))^\sigma - \beta^\sigma \big)};
		\end{multline*}
		\item[(3)] both $X_i$ and $Z_i$ take on new values with probability
		\begin{multline*}
			\Prob \big( X_{i} = \text{`new'}, \, Z_{i} = \text{`new'} \mid \mathcal{F}_n^{-i} \big) \\
			\propto \int_0^{T_i} k(T_i;x)\,(\beta + K_n(x))^{\sigma-1} \, \left( \beta_0 + \frac{D}{\sigma} \big( (\beta + K_n(x))^\sigma - \beta^\sigma \big) \right)^{\sigma_0-1} \Lambda_0(dx),
		\end{multline*}
		which depends on $\mathcal{F}_n^{-i}$ only through the observed survival times $\boldsymbol T_{1:n}$; in this case, the new value assigned to $X_i$ is sampled between $0$ and $T_i$, proportionally to
		\begin{equation*}
			k(T_i;x)\, (\beta + K_n(x))^{\sigma-1} \, \left( \beta_0 + \frac{D}{\sigma} \big( (\beta + K_n(x))^\sigma - \beta^\sigma \big) \right)^{\sigma_0-1} \Lambda_0(dx).
		\end{equation*}
	\end{itemize}
	
	\noindent Notably, cases (1) and (2) are in closed form, while the integral in case (3) can be evaluated via quadrature formulae over bounded intervals. Moreover, the potential new value of $X_i$ can be sampled via rejection sampling. As for the acceleration step, each latent location $X_j^*$ for $j \in \set{1,\ldots,k}$ is independently resampled between $0$ and the minimum observed survival time associated with it, $\min_{i \in C_j^X} T_i$, proportionally to \cut
	\begin{equation*}
		\bigg( \prod_{i \in C_j^X} k(T_i; x) \bigg) (\beta + K_n(x))^{r_j \sigma-n_j} \left(\beta + \frac{D}{\sigma} \big( (\beta + K_n(x))^\sigma - \beta^\sigma \big) \right)^{\sigma_0-r_j} \Lambda_0(dx);
	\end{equation*}
	this task is routinely performed by means of Metropolis-Hastings steps.
	Finally, a straightforward initialization strategy consists in populating an empty partition structure using the same sequential allocation scheme outlined above.
	
	\subsection{Posterior sampling of random measures (Section~\ref{sec:conditionalMCMC})} \label{ex:gibbs_conditional}
	
	For the implementation of this posterior sampling algorithm, a pivotal role in generating the jumps' sequences is played by the upper incomplete gamma function $\displaystyle \Gamma(a, x) := \int_x^\infty s^{a-1} \, e^{-s}\, ds$,
	for any $x \ge 0$ and $a \in (-1,0]$. At the root of the hierarchy, an approximate realization of $\crm_0^+$ in \eqref{approx_root} is obtained as follows:
	\begin{itemize}[noitemsep]
		\item[1$_0$)] the jumps $V_1, \dots, V_k$ at fixed locations $X^*_1, \dots, X^*_k$ are sampled from \begin{equation*}
			V_j \stackrel{\mbox{\scriptsize ind}}{\sim} \text{Gamma} \left( r_j - \sigma_0, \, \beta_0 + \frac{D}{\sigma} \Big( (\beta + K_n(X^*_j))^\sigma - \beta^\sigma \Big) \right), \qquad j = 1, \dots, k;
		\end{equation*}
		\item[2$_0$)] the random jumps $V^{(0)}_h$, for $h \ge 1$, are sampled by solving the equations
		\begin{multline}
			\label{eq:jumps_mu0}
			\Gamma \left( - \sigma_0, \, V^{(0)}_h \left( \beta_0 + \frac{D}{\sigma} \Big( \big(\beta + K_n\big(X^{(0)}_h\big)\big)^\sigma - \beta^\sigma \Big) \right) \right) \\
			= \frac{N^{(0)}_h \, \Gamma(1-\sigma_0)}{\Lambda_0(\X) } \left( \beta_0 + \frac{D}{\sigma} \Big( \big(\beta + K_n\big(X^{(0)}_h\big)\big)^\sigma - \beta^\sigma \Big) \right)^{-\sigma_0}.
		\end{multline}
	\end{itemize}
	
	\noindent The truncation level $H_0 \ge 1$ is the first value of $h$ for which
	\begin{equation*}
		N^{(0)}_h > \Lambda_0(\X) \, \frac{\Gamma( - \sigma_0, \, \beta_0 \varepsilon)}{\Gamma(1-\sigma_0)} \, \beta_0^{\sigma_0},
	\end{equation*}		
	if $\beta_0>0$, while if $\beta_0=0$, the inequality simplifies to $N^{(0)}_h > \Lambda_0(\X) \, \varepsilon^{-\sigma_0} / (\sigma_0 \, \Gamma(1-\sigma_0))$.
	Similarly, at the bottom of the hierarchy, the approximate realization of $\crm_\delta^+$ in \eqref{approx_lower} with $\delta \in \set{1,\ldots,D}$ specializes to:
	\begin{itemize}[noitemsep]
		\item[1$_\delta$)] the jumps $(S_{\delta jh})_{jh}$ at fixed locations $X^*_1, \dots, X^*_k$ are sampled in groups as
		\begin{equation*}
			\sum_{h=1}^{r_{\delta j}} S_{	\delta jh} \sim \text{Gamma} \big(n_{\delta j} - r_{\delta j} \sigma, \, \beta + K_n(X_j^*) \big), \qquad j = 1, \dots, k;
		\end{equation*}
		\item[2$_\delta$)] the random jumps $S^{(\delta)}_h$, for $h \ge 1$, are sampled by solving the equations
		\begin{equation}
			\label{eq:jumps_mud}
			\Gamma \left( - \sigma, \, S^{(\delta)}_h \left( \beta + K_n\big(X^{(\delta)}_h\big) \right) \right) = \frac{N^{(\delta)}_h \, \Gamma(1-\sigma)}{
				\crm_0^+(\X) \left( \beta + K_n\big(X^{(\delta)}_h\big) \right)^\sigma}, 
		\end{equation}
		where $\displaystyle \crm_0^+ = \sum_{j=1}^k V_j + \sum_{h=1}^{H_0} V^{(0)}_h$ is the approximated total mass of $\crm_0^+$.
	\end{itemize}
	Also in this case, the truncation level $H_\delta$ is the first value of $h$ for which
	\begin{equation*}
		N^{(\delta)}_h > \frac{\Gamma ( - \sigma, \, \beta \varepsilon)}{\Gamma(1-\sigma)} \, \beta^\sigma \crm_0^+(\X), \quad \mbox{if }\beta>0, \qquad
		N^{(\delta)}_h > \frac{\crm_0^+(\X) \, \varepsilon^{-\sigma}}{\sigma \, \Gamma(1-\sigma)}, \quad \mbox{if } \beta = 0.
	\end{equation*}
	A potential bottleneck for the concrete implementation of this algorithm is the solution of the equations \eqref{eq:jumps_mu0}-\eqref{eq:jumps_mud}. A simple and effective approach consists in solving these with respect to the logarithm of the jumps, through the Newton-Raphson method. Indeed, the involved functions are strictly convex and their derivatives can be computed in closed form. 
	
	\clearpage
	\section{Additional material on the applications in Section~\ref{sec:simulation_study}}
	\label{supsec:applications}
	
	We provide further details, diagnostic plots, and additional figures on the illustrations with synthetic and real datasets presented in Section~\ref{sec:simulation_study}, which were omitted from the main manuscript due to space constraints.
	
	\subsection{A simulation study with three competing risks}
	\label{supsec:illustration}
	
	The simulation scenario in Section~\ref{sec:simulation_three_risks} considers $D = 3$ competing sources of risk: synthetic data are generated according to the latent failure times approach, whose distributions are detailed in \eqref{data_generation}. The sample size is $n = 300$ and the number of subjects in the simulated dataset experiencing each competing event is $120$, $99$ and $81$, respectively. Figure~\ref{fig:histogram} displays the histogram of the observed survival times, overlapped with the true data--generating density function. The model hyperparameters are set at $\beta = \beta_0 = 1.0$ and $\sigma = \sigma_0 = 0.25$, while $\Lambda_0 = \theta\,dx$ for $\theta > 0$. Moreover, non-informative exponential hyperpriors are specified for the concentration parameter $\theta$ and kernel parameter $\gamma$, namely $\theta \sim \text{Exp}(0.1)$ and $\gamma \sim \text{Exp}(0.1)$.
	
	The MCMC procedure is run for $25,000$ iterations, with a burn-in of $5,000$. Posterior samples are summarized for the number $k$ of shared latent locations in the hierarchical model (Figure~\ref{fig:dishes}), the concentration parameter $\theta$ (Figure~\ref{fig:theta}), and the kernel parameter $\gamma$ (Figure~\ref{fig:gamma}). The burn-in samples are highlighted in gray in traceplots, and discarded when constructing histograms. After thinning, $2,000$ samples from the latent partition structure are collected and employed to compute posterior estimates with the marginal method. Figure~\ref{supfig:survival} displays the traceplot (after burn-in and thinning) and corresponding autocorrelation function for the posterior estimates of the survival function, given the latent variables (Proposition~\ref{prop:posterior_survival}), at time $t = 0.4$. The same plots are reported in Figure~\ref{supfig:incidence} for the posterior estimates of incidence functions (Proposition~\ref{prop:posterior_incidence}), again at time $t = 0.4$.
	
	\begin{figure}
		\centering
		\includegraphics[width=0.6\textwidth]{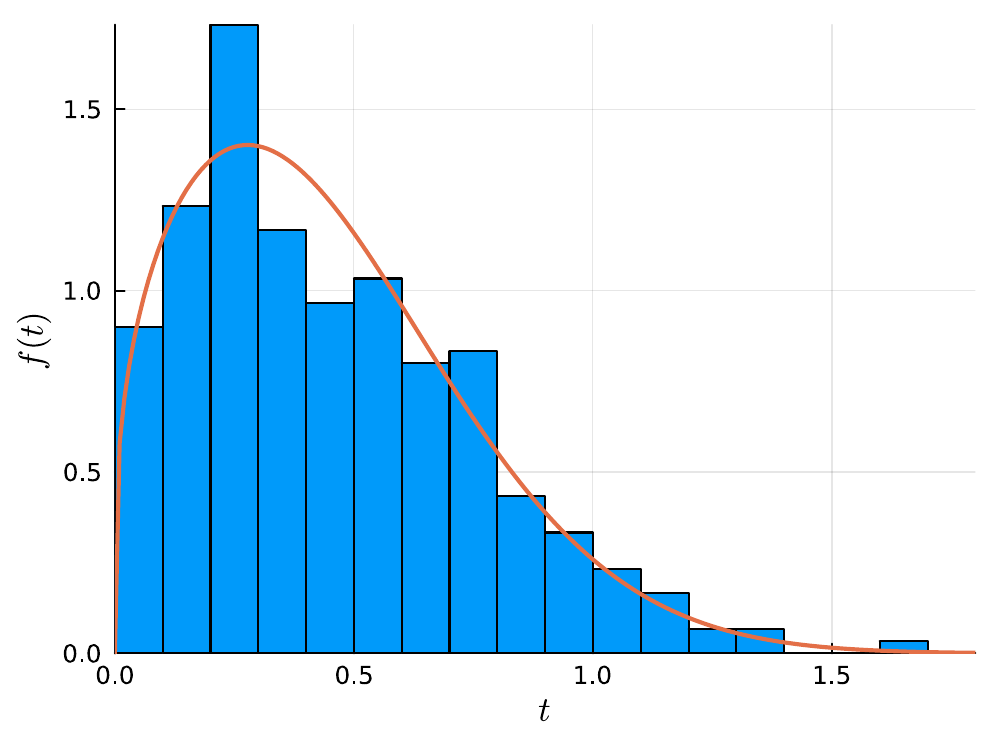}
		\captionsetup{width=0.85\textwidth,font=small}
		\caption{Histogram of the observed survival times for the synthetic dataset with $n = 300$ data points in Section~\ref{sec:simulation_three_risks}, and corresponding data-generating density function; the maximum observed survival time is $1.61$.}
		\label{fig:histogram}
	\end{figure}
	
	\begin{figure}
		\centering
		\noindent\makebox[\textwidth]{\includegraphics[width=0.5\textwidth]{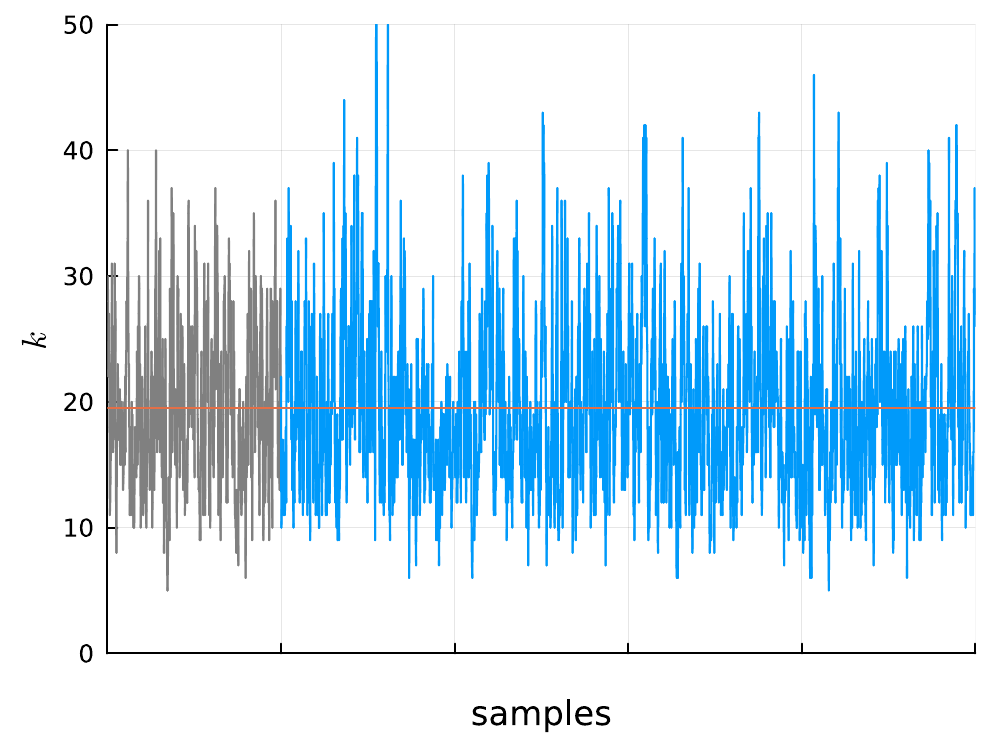}
			\includegraphics[width=0.5\textwidth]{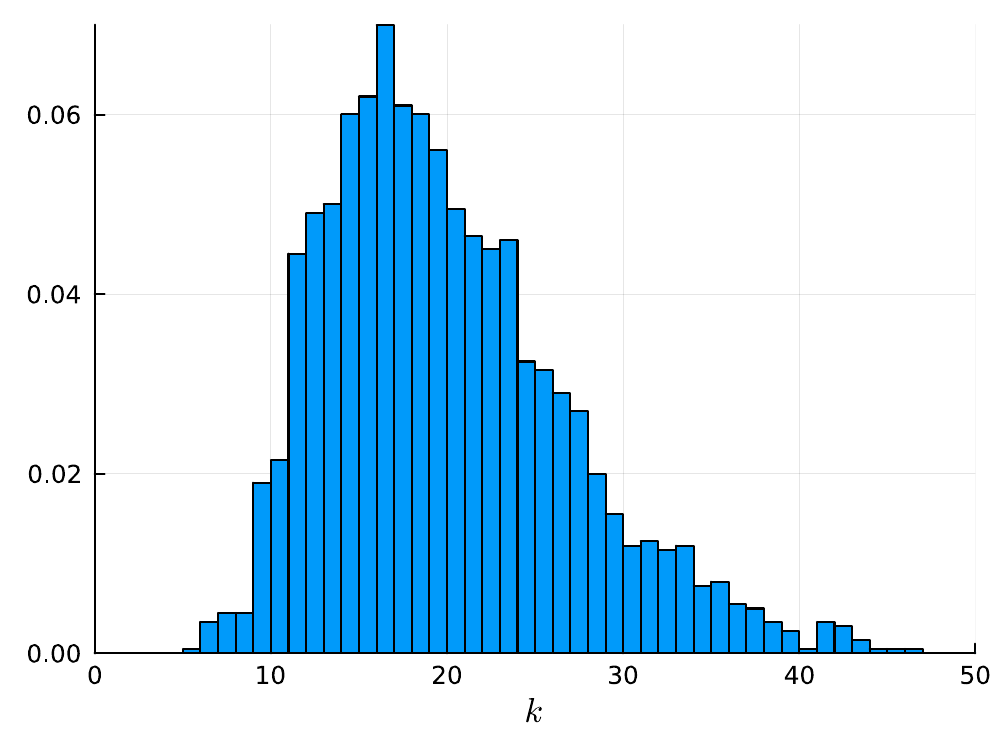}
		}
		\captionsetup{font=small}
		\caption{Posterior samples of the number $k$ of shared latent locations for the synthetic dataset in Section~\ref{sec:simulation_three_risks}; the posterior mean is $19.51$.}
		\label{fig:dishes}
	\end{figure}
	
	\begin{figure}
		\centering
		\noindent\makebox[\textwidth]{\includegraphics[width=0.5\textwidth]{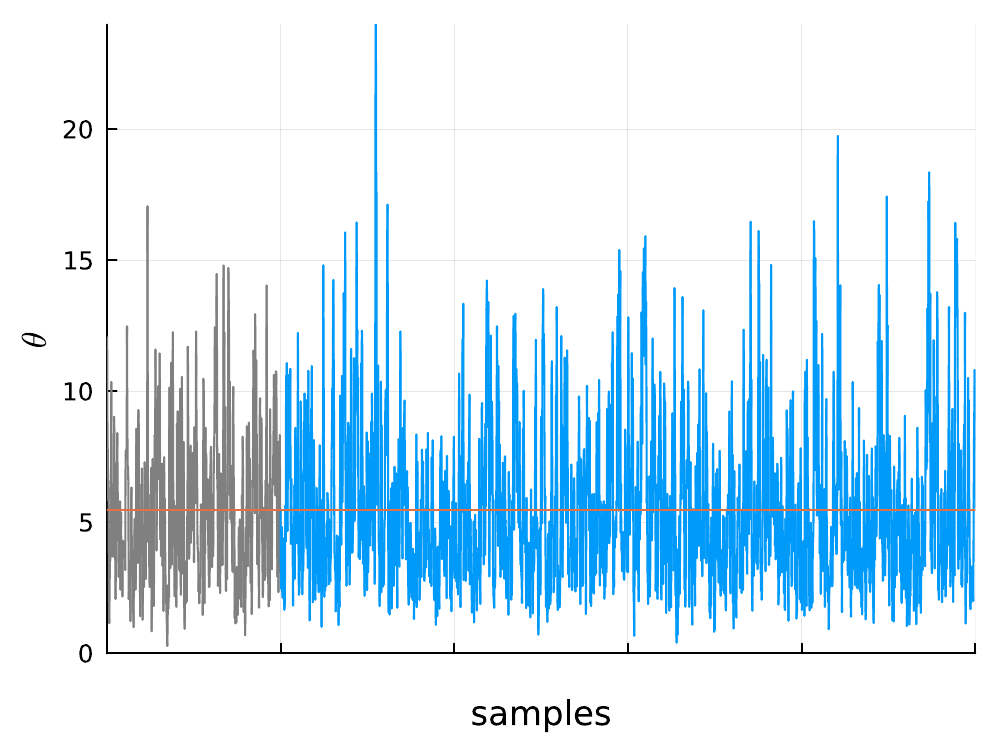}
			\includegraphics[width=0.5\textwidth]{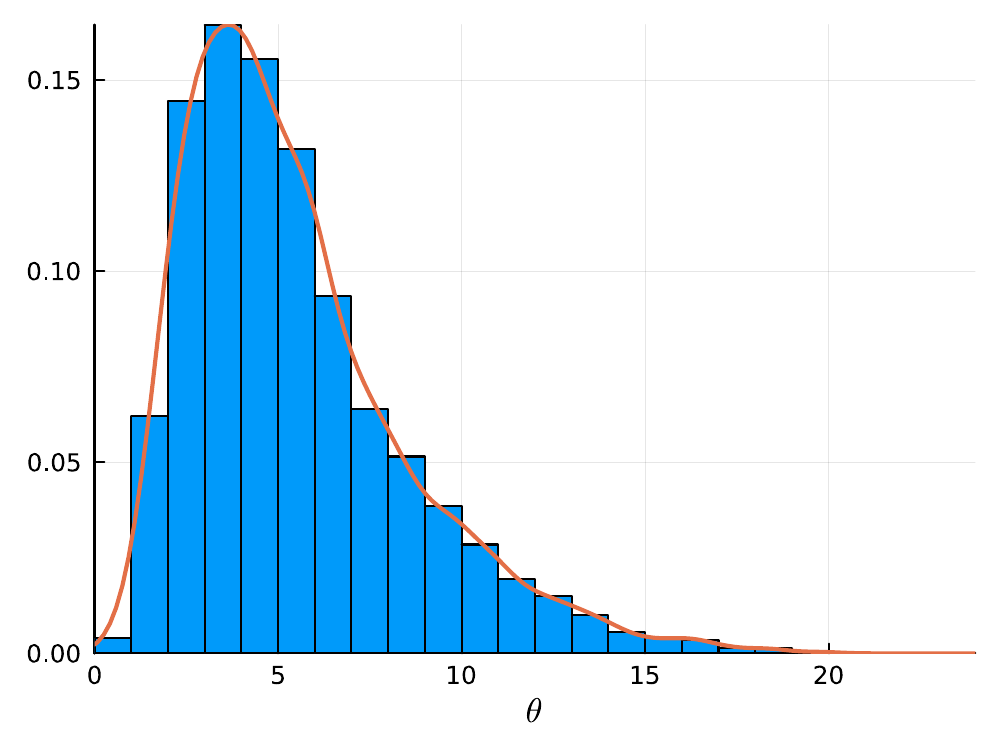}
		}
		\captionsetup{font=small}
		\caption{Posterior samples of the concentration parameter $\theta$ for the synthetic dataset in Section~\ref{sec:simulation_three_risks}; the posterior mean is $5.47$.}
		\label{fig:theta}
	\end{figure}
	
	\begin{figure}
		\centering
		\noindent\makebox[\textwidth]{\includegraphics[width=0.5\textwidth]{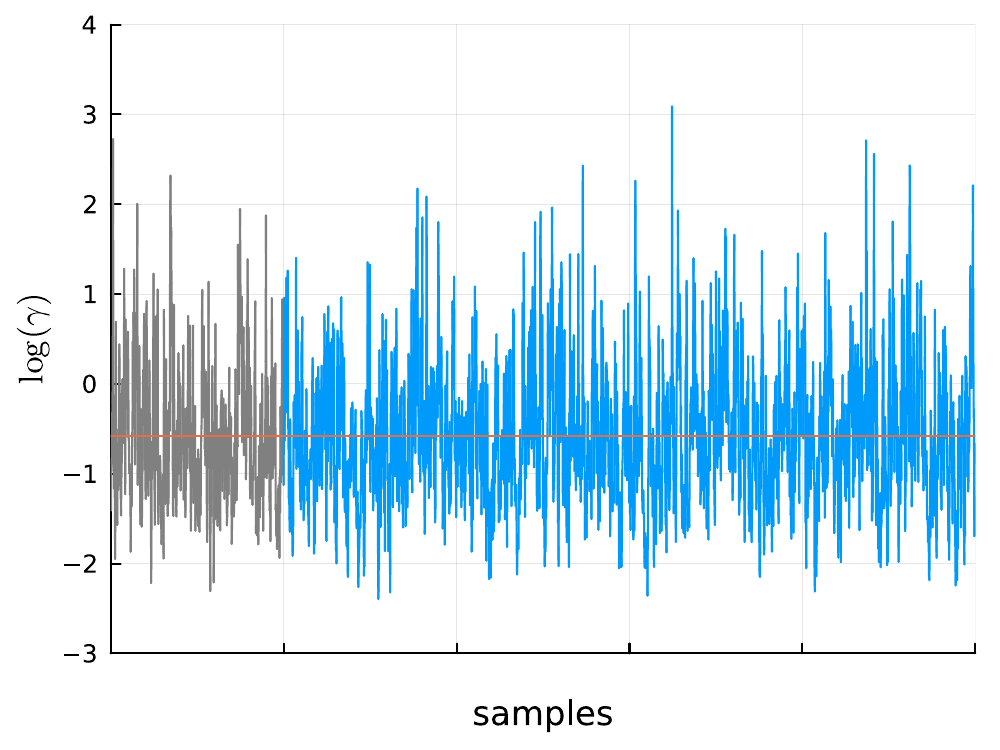}
			\includegraphics[width=0.5\textwidth]{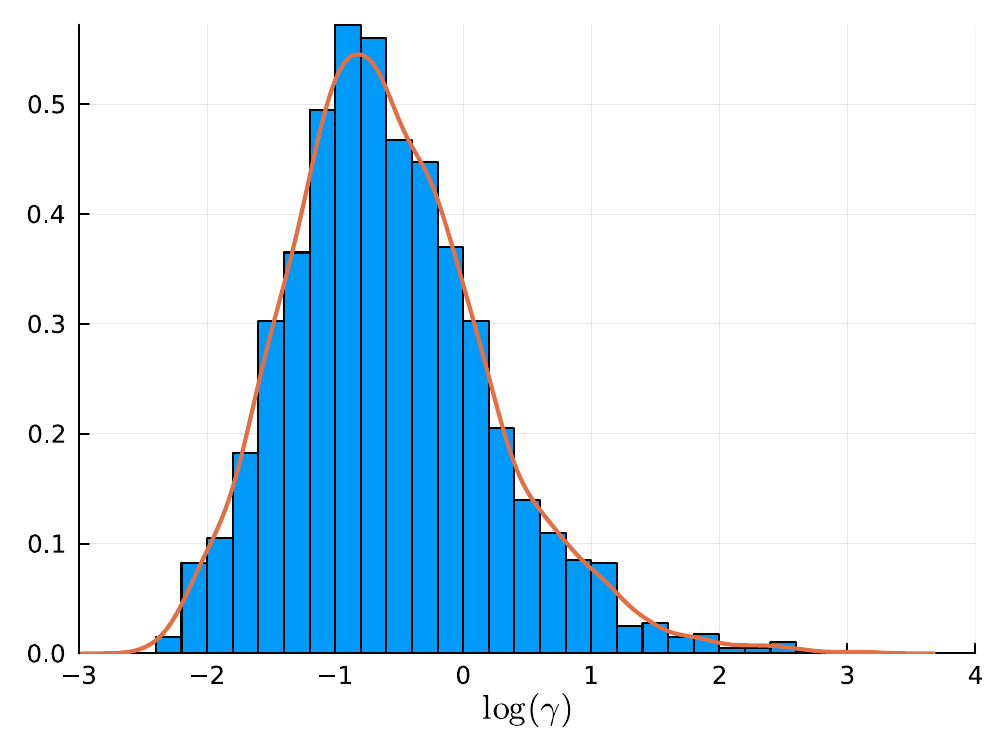}
		}
		\captionsetup{font=small}
		\caption{Posterior samples of the Dykstra-Laud kernel parameter $\gamma$ for the synthetic dataset in Section~\ref{sec:simulation_three_risks}, displayed on the log-scale; the posterior mean is $0.815$, the acceptance rate of the Metropolis-Hastings step is $0.414$.}
		\label{fig:gamma}
	\end{figure}
	
	\begin{figure}
		\centering
		\noindent\makebox[\textwidth]{\includegraphics[width=0.5\textwidth]{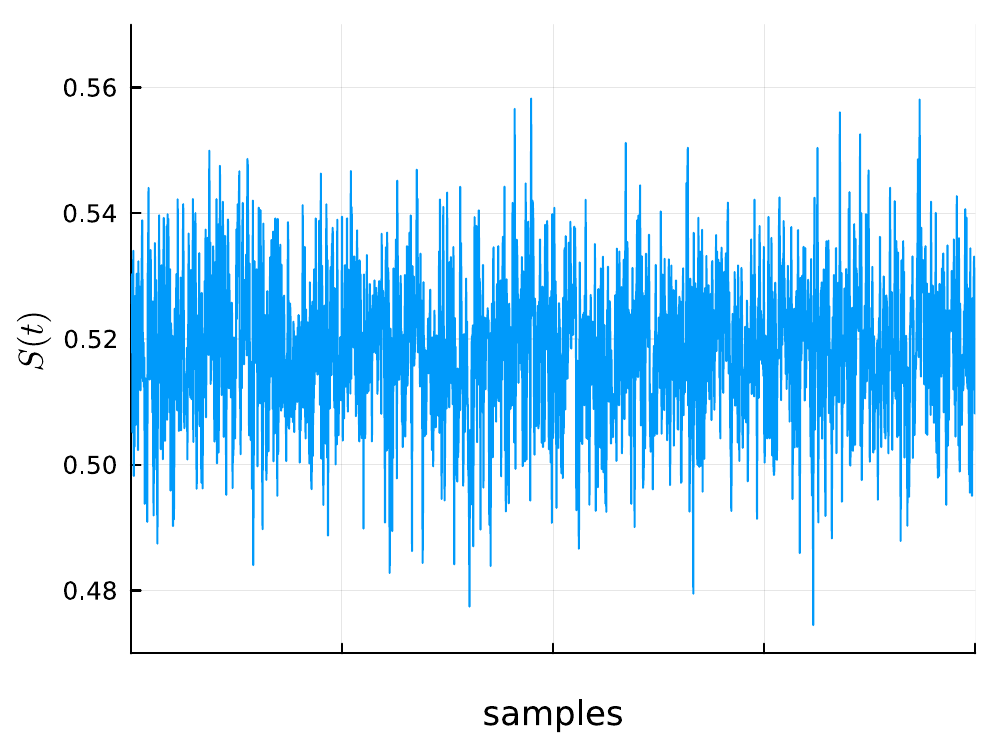}
			\includegraphics[width=0.5\textwidth]{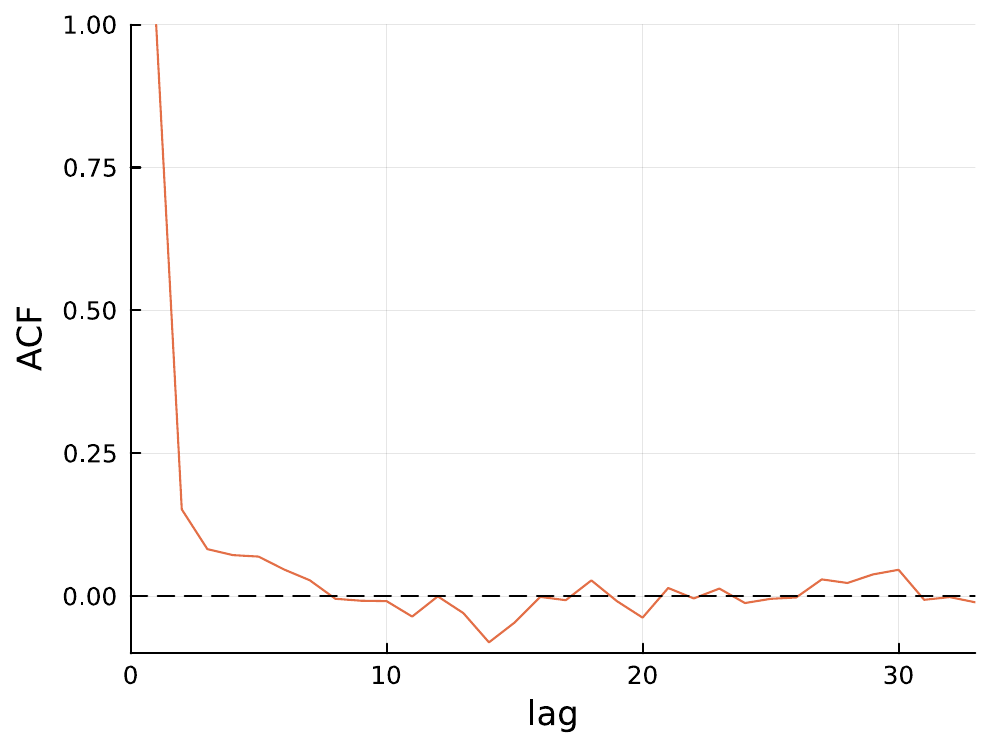}
		}
		\captionsetup{font=small}
		\caption{Posterior estimates of the survival function at time $t = 0.4$, given the latent variables, for the synthetic dataset in Section~\ref{sec:simulation_three_risks}; the effective sample size is $1069$.}
		\label{supfig:survival}
	\end{figure}
	
	\begin{figure}
		\centering
		\noindent\makebox[\textwidth]{\includegraphics[width=0.5\textwidth]{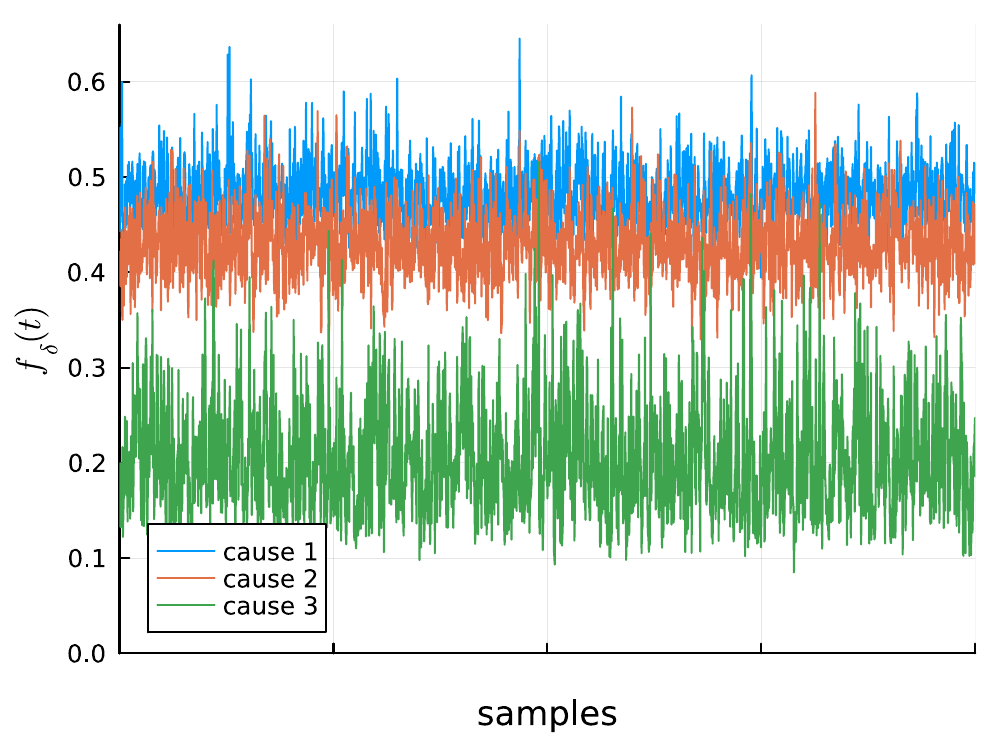}
			\includegraphics[width=0.5\textwidth]{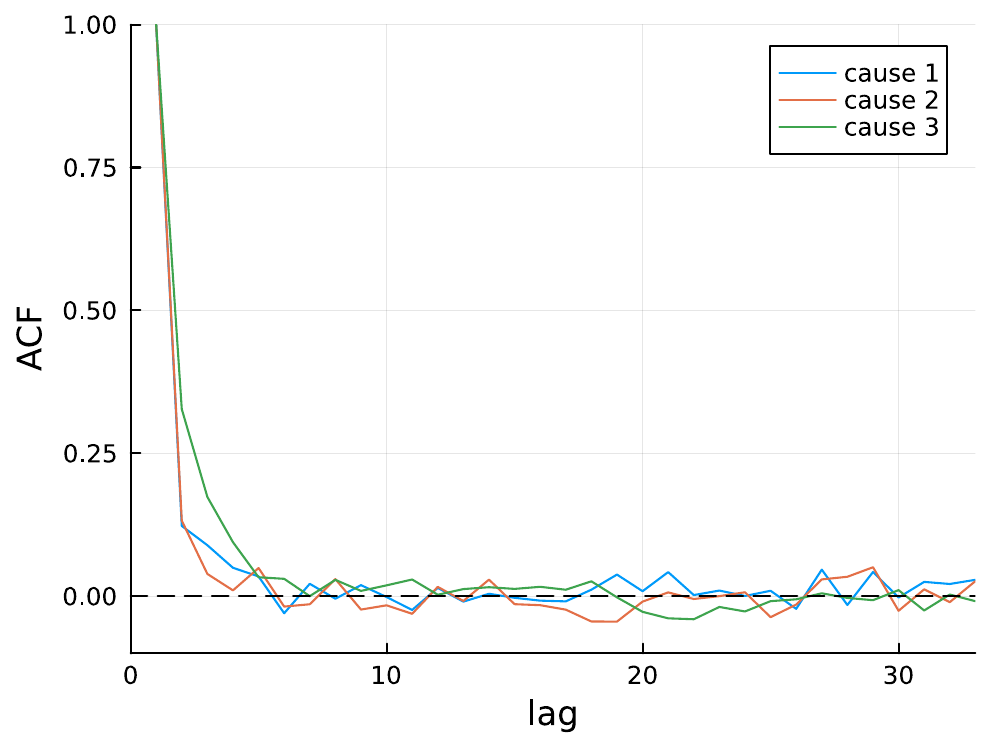}
		}
		\captionsetup{font=small}
		\caption{Posterior estimates of the incidence functions at time $t = 0.4$, given the latent variables, for the synthetic dataset in Section~\ref{sec:simulation_three_risks}; the effective sample sizes are $1267$, $1376$ and $722$, respectively.}
		\label{supfig:incidence}
	\end{figure}
	
	\subsection{Illustration of posterior consistency}
	\label{supsec:consistency}
	
	Theorem~\ref{prop:support} in Section~\ref{sec:consistency} investigates the Kullback-Leibler support and weak consistency of the survival density induced by the mixture hazard model \eqref{mixture_hazard} with hierarchical generalized gamma mixing measures, for the different kernel choices. In particular, the Dykstra-Laud kernel is weakly consistent at each density with non-decreasing hazard rate. This section provides an empirical confirmation of this consistency result by means of a simulation study. Consider the simulation setup in Section~\ref{sec:simulation_three_risks}, with $D = 3$ competing sources of risk, and generate synthetic data according to the latent failure times approach. Notably, the distributions of the latent failure times in \eqref{data_generation} are Weibull with shape parameters larger than $1$, implying a strictly increasing hazard rate for the survival density. The model hyperparameters and hyperpriors are specified as in Section~\ref{supsec:illustration} above. The sample sizes are $n = 30, 60, 120, 240, 480$, and results are averaged over $50$ simulated datasets for each value of $n$. The burn-in period and thinning parameter are scaled linearly with $n$ to ensure proper mixing of the MCMC procedure for each sample size. Figure~\ref{fig:consistency} displays the average errors in estimating the survival function for the proposed hierarchical model and the Kaplan--Meier estimate, in terms of Kolmogorov distances from the ground truth; the posterior estimate is obtained with the marginal approach. For both methods, the estimation errors exhibit a nearly linear relationship with the sample size on the log-log scale. Although far from a comprehensive simulation study, this illustration supports the theoretical results and further confirms the effectiveness of our proposal.
	
	\begin{figure}
		\centering
		\includegraphics[width=0.6\textwidth]{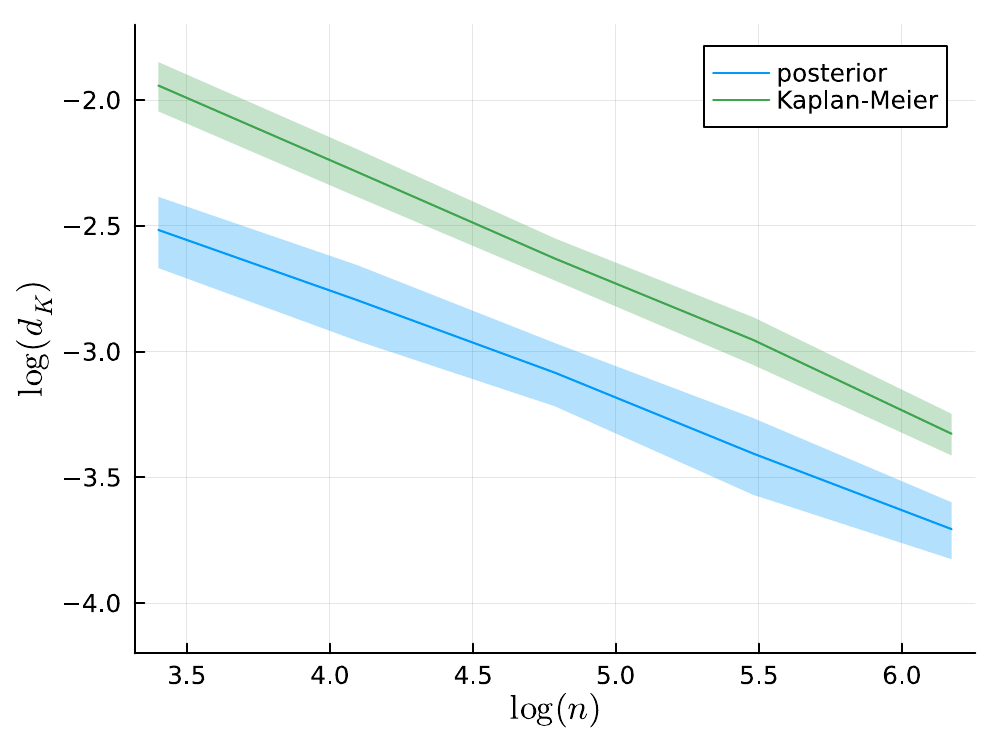}
		\captionsetup{width=0.85\textwidth,font=small}
		\caption{Average errors in estimating the survival function for the hierarchical model (posterior) and Kaplan--Meier estimate (green), in terms of Kolmogorov distances $d_K$ from the ground truth, for different sample sizes; pointwise asymptotic $0.95$ confidence bands for the mean errors are also displayed. Data are generated according to the latent failure times approach detailed in \eqref{data_generation}; results are averaged over $50$ simulated datasets for each sample size.}
		\label{fig:consistency}
	\end{figure}
	
	\subsection{Dependent vs.~independent priors}
	
	The simulation study in Section~\ref{sec:simulation_independent} compares the performances of the proposed model, with hierarchically dependent hazard rates, and of a Bayesian nonparametric model with independent hazard rates. The data--generating model involves $D = 3$ competing and independent sources of risk.
	For each of the $100$ simulated dataset, observations are generated according to the latent failure time approach. Specifically, for each event type $\delta \in \set{1,2,3}$, the potential survival time $Y_{i,\delta}$ is sampled from a mixture of a cause--specific and a common distribution, with equal weights. Denote by $t \mapsto \text{Weibull}(t\,; \xi)$ the density of a Weibull distribution with shape $\xi$ and unit scale parameters.
	The density of $Y_{i,\delta}$ is $g_\delta(t) = 0.5\, \phi_\delta(t) + 0.5 \, \phi_0(t)$,
	where $\phi_0(t)= 0.5 \, \text{Weibull}(t\,; 1.2) + 0.5 \, \text{Weibull}(t - 1\, ; 3.0)$,
	and $\phi_\delta(t) = \text{Weibull}(t; \xi_\delta)$, with shape parameters $\xi_1 = 1.5$, $\xi_2 = 2.0$ and $\xi_3 = 2.5$, respectively.
	The cause--specific hazard rates and incidence functions resulting from such data--generating model are displayed in Figure \ref{fig:model}. The hazard rate functions are essentially increasing, making the Dykstra--Laud kernel an appropriate choice in this scenario. Moreover, the data--generating distributions entails a balanced allocation of the observations to the event types: the average number of subjects experiencing each competing event is $34.11$, $33.21$ and $32.68$, respectively.
	
	\begin{figure}
		\centering
		\noindent\makebox[\textwidth]{\includegraphics[width=0.5\textwidth]{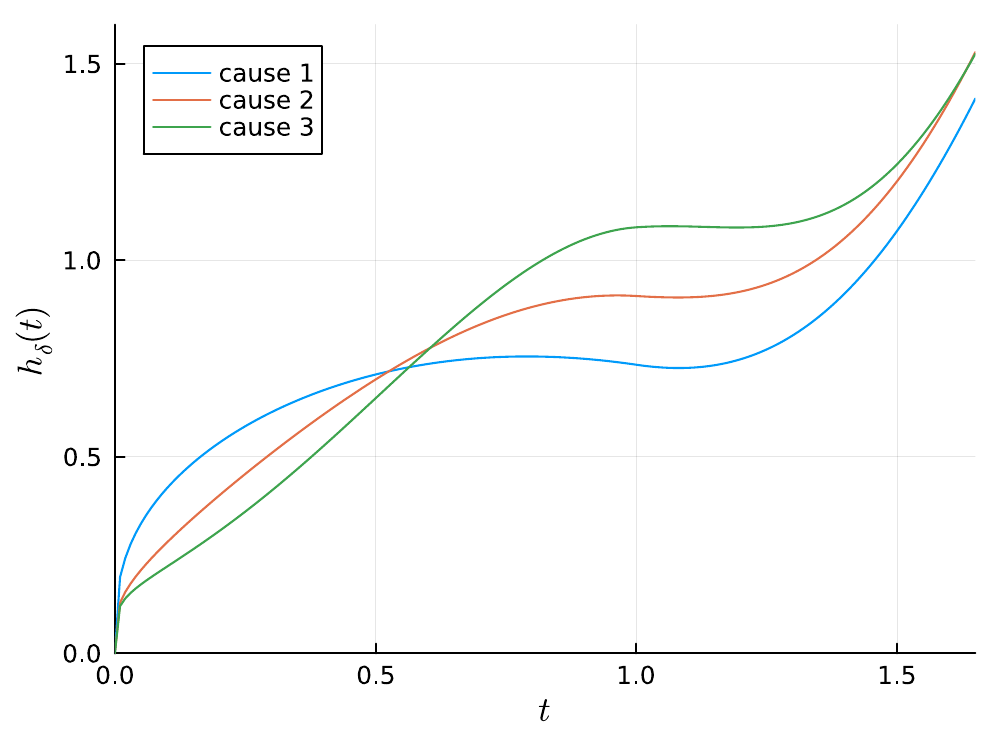}
			\includegraphics[width=0.5\textwidth]{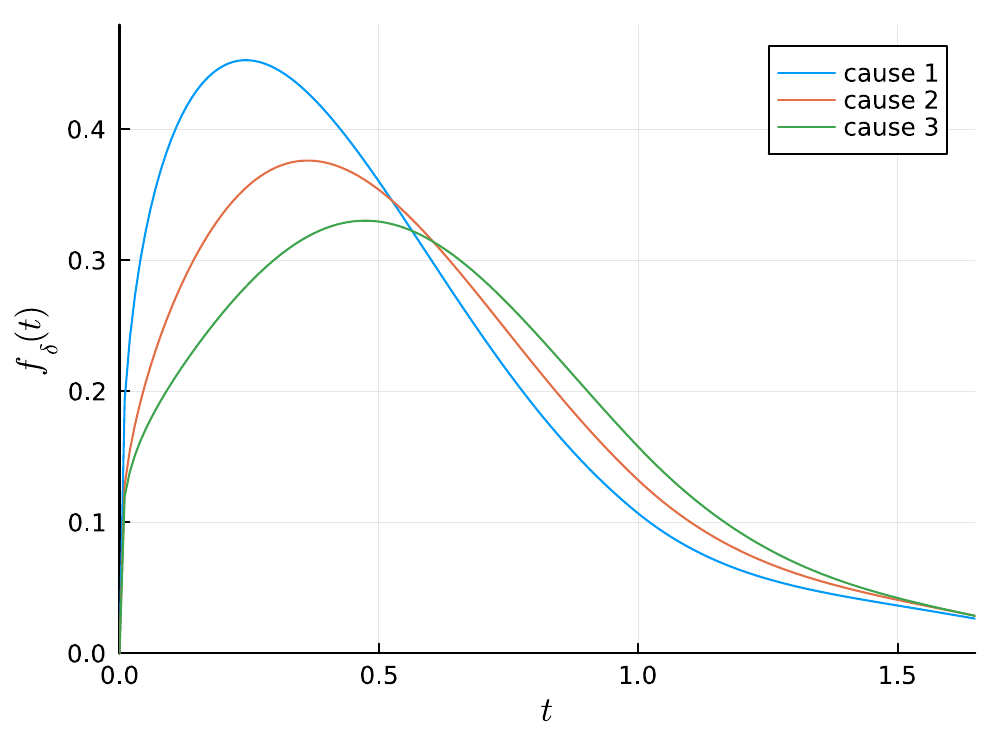}
		}
		\captionsetup{font=small}
		\caption{Cause-specific hazard rates (left) and incidence functions (right) for the $D = 3$ competing 
		independent sources of risks in the data--generating model for the simulation study in Section~\ref{sec:simulation_independent}.}
		\label{fig:model}
	\end{figure}
	
	\subsection{Application to the bone marrow transplantation dataset}
	
	The EBMT dataset in Section~\ref{sec:applications} includes data for $n = 400$ patients diagnosed with acute myeloid leukemia, who underwent a bone marrow transplantation.
	Our analysis involves a single binary predictor (the stem cells source), and therefore a single regression coefficient $\eta$, for which a non-informative centered Gaussian hyperprior $\eta \sim \mathcal{N}(0, 100)$ is specified. Moreover, the increasing hazard rates assumption implied by the Dykstra-Laud kernel appears too restrictive, and the Ornstein-Uhlenbeck kernel is considered instead,
	\begin{equation*}
		k(t; x) = \sqrt{2 \kappa} \, \exp(- \kappa(t-x)) \,\mathds{1}_{\{t \ge x\}},
	\end{equation*}
	where a non--informative exponential hyperprior $\kappa \sim \text{Exp}(0.1)$ is placed on its rate parameter $\kappa > 0$; with this kernel choice, the function $K_n$ in \eqref{kernel_quantities} takes the form
	\begin{equation*}
		K_n(x) = \sqrt{2/\kappa} \, \sum_{i=1}^n \max \left(1 - e^{- \kappa(t-x)}, \, 0 \right).
	\end{equation*}
	Figure~\ref{supfig:transplantation} reports histograms of posterior samples of the number $k$ of shared latent locations, the concentration parameter $\theta$, the kernel parameter $\kappa$ and the regression coefficient $\eta$.
	
	\begin{figure}
		\centering
		\noindent\makebox[\textwidth]{\includegraphics[width=0.5\textwidth]{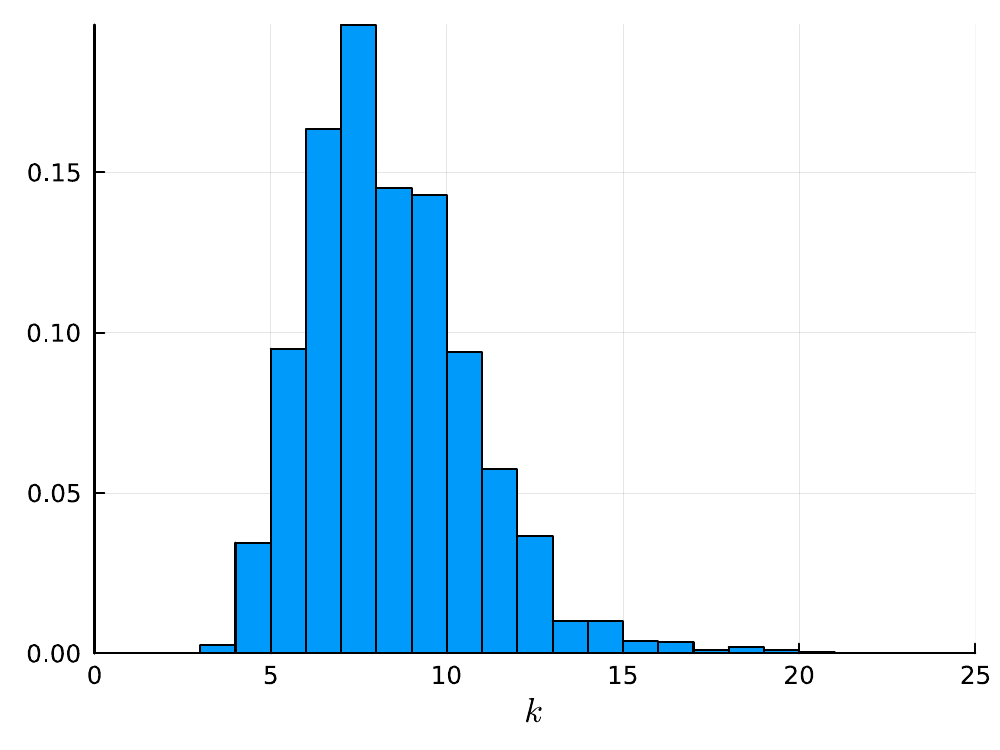}
			\includegraphics[width=0.5\textwidth]{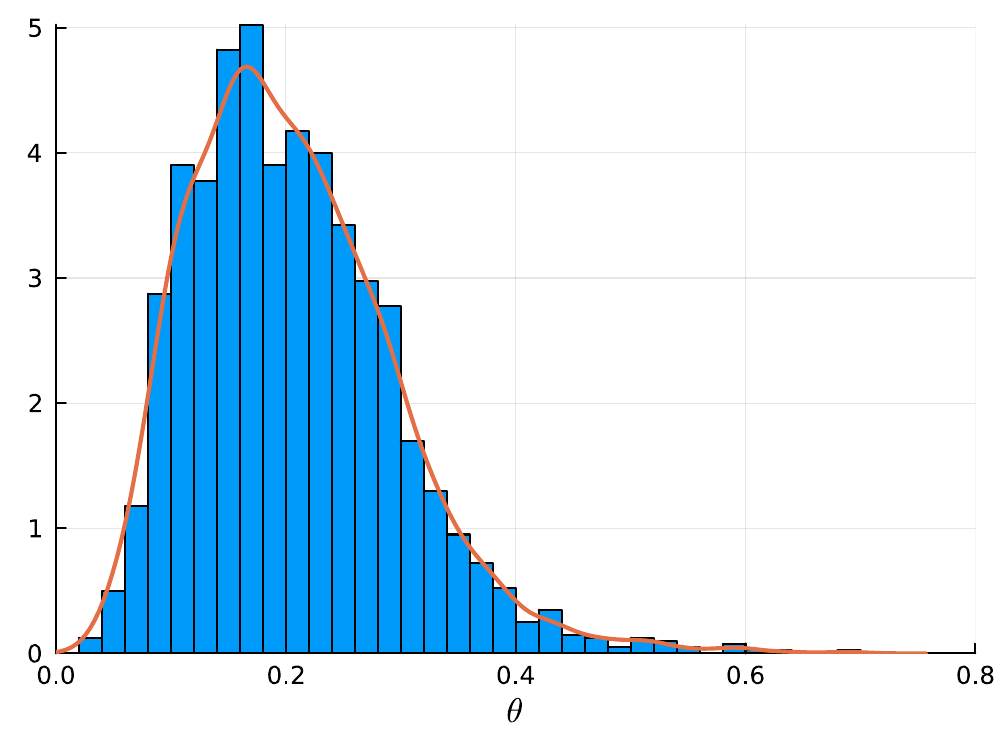}
		}
		\noindent\makebox[\textwidth]{\includegraphics[width=0.5\textwidth]{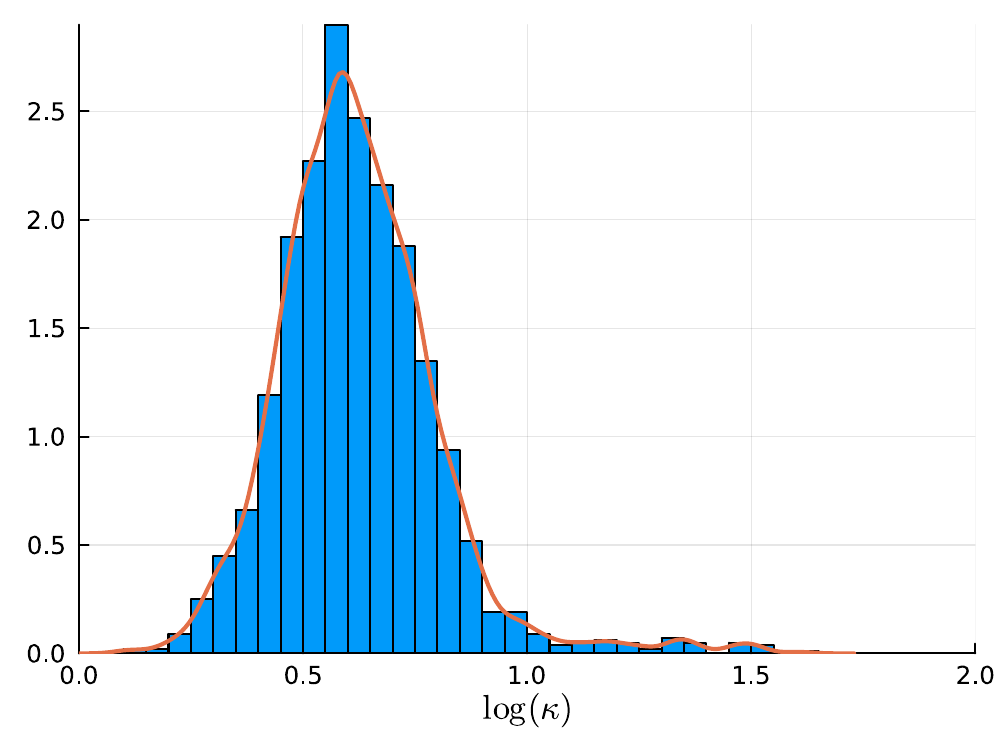}
			\includegraphics[width=0.5\textwidth]{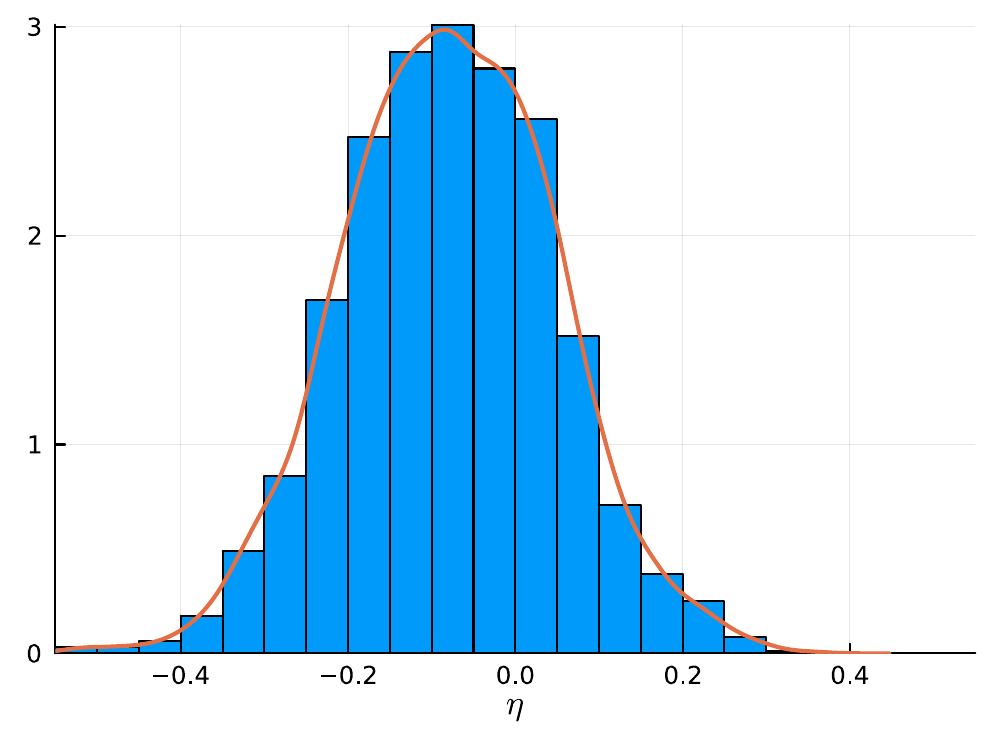}
		}
		\captionsetup{font=small}
		\caption{\emph{EBMT dataset}. Histograms of posterior samples of relevant model parameters: number $k$ of shared latent locations (top left), concentration parameter $\theta$ (top right), kernel parameter $\kappa$ on log--scale (bottom left) and regression coefficient $\eta$ (bottom right).}
		\label{supfig:transplantation}
	\end{figure}
	
	\clearpage
	\section{Analysis of the melanoma dataset}
	\label{supsec:melanoma}
	
	The melanoma dataset was collected by \cite{drzewiecki1980} at the Odense University Hospital, Denmark, and includes $n = 205$ patients affected by stage I melanoma, who underwent surgery between $1962$ and $1977$. Data are publicly available as part of the \texttt{timereg} package in R. \cite{andersen1993} illustrate several survival analysis methods on this dataset, while \cite{arfe2019} analyzed it with their model for competing risks.
	The primary event is death due to melanoma ($57$ patients, $27.8$\%), with death from other causes as competing event ($14$ patients, $6.8$\%); the remaining observations are censored ($65.4$\%). The median survival time is $5.89$ years, with the maximum observed survival time for the primary event of interest being $9.14$ years. The categorical predictor is gender, with $126$ women ($61.5$\%) and $79$  men ($38.5$\%). 
	As for the EBMT dataset, we consider the Ornstein-Uhlenbeck kernel with exponential hyperprior $\kappa \sim \text{Exp}(0.1)$ on the rate parameter.
	
	Figure~\ref{supfig:melanoma} shows the posterior estimates of the survival function and subdistribution functions for the primary and competing event types, for both female and male patients, compared with frequentist estimators. The posterior estimate of the hazard rate ratio $\exp(\eta)$ is $1.93$ with a $0.95$ credible interval of $[1.19,2.99]$, indicating a significant difference between male and female patients. This aligns with clinical evidence showing that male patients with melanoma typically have a higher risk \citep{scoggins2006}.
	The prediction curves for both the primary event (death due to melanoma) and the competing event (death from other causes) are showcased in Figure~\ref{fig:prediction_melanoma} of the main manuscript.
	Finally, Figure~\ref{supfig:melanoma_parameters} reports histograms of posterior samples of the number $k$ of shared latent locations, the concentration parameter $\theta$, the kernel parameter $\kappa$ and the regression coefficient $\eta$, for which a centered Gaussian hyperprior $\eta \sim \mathcal{N}(0, 100)$ is specified.
	
	\begin{figure}
		\centering
		\noindent\makebox[\textwidth]{\includegraphics[width=0.5\textwidth]{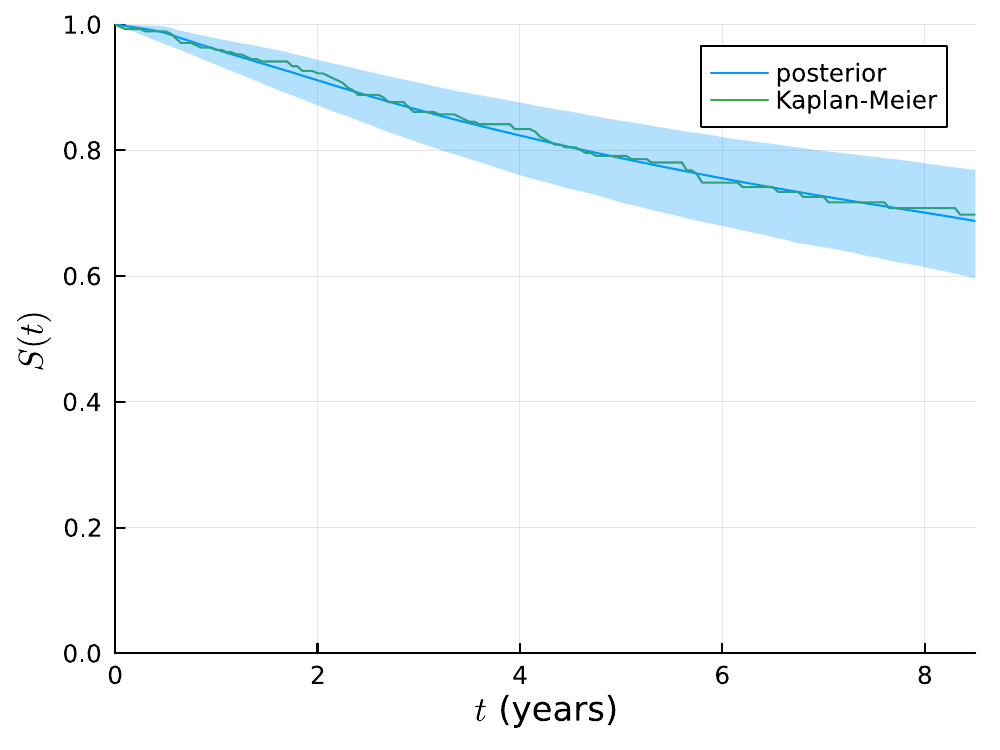}
			\includegraphics[width=0.5\textwidth]{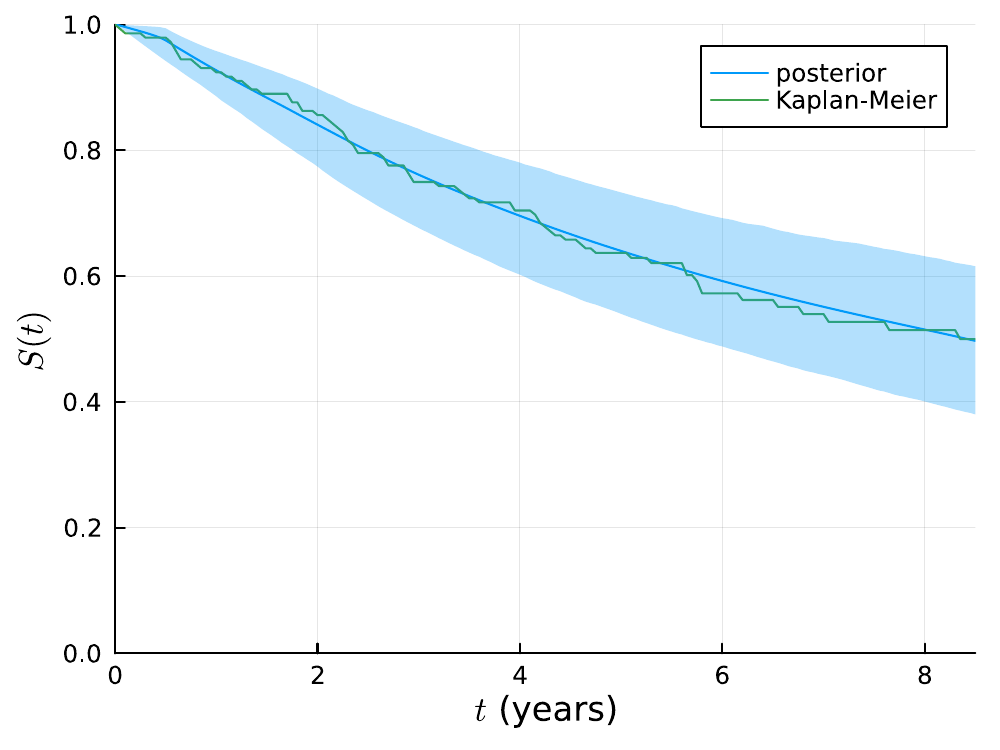}
		}
		\noindent\makebox[\textwidth]{\includegraphics[width=0.5\textwidth]{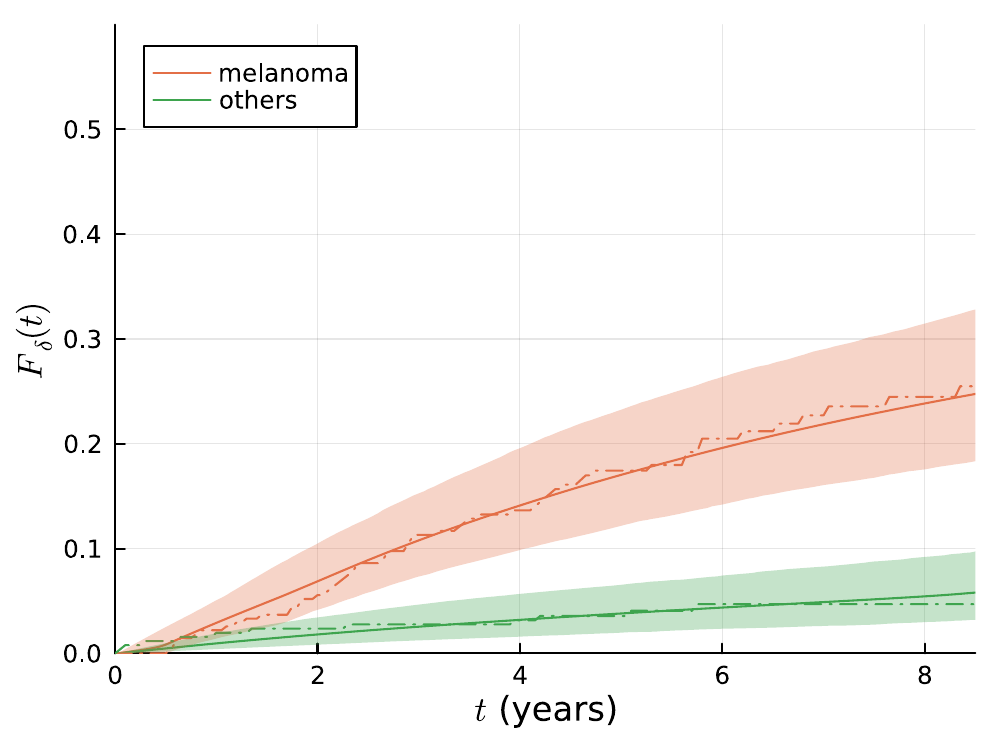}
			\includegraphics[width=0.5\textwidth]{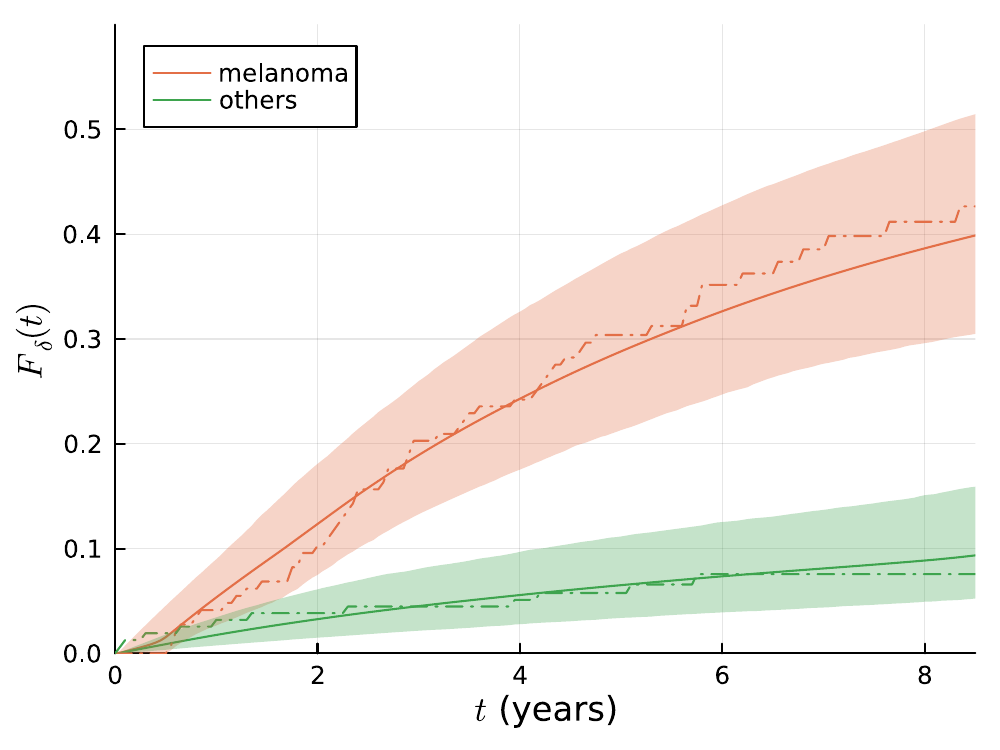}
		}
		\captionsetup{font=small}
		\caption{\emph{Melanoma dataset}. Posterior estimates of survival function (top) and subdistribution functions (bottom) for the primary (death due to melanoma) and competing (death from other causes) event types, compared with the frequentist estimators (top: green; bottom: dash-dotted); plots refer to female (left) and male (right) patients; pointwise $0.95$ credible bands are also displayed.} \label{supfig:melanoma}
	\end{figure}
	
	\begin{figure}
		\centering
		\noindent\makebox[\textwidth]{\includegraphics[width=0.5\textwidth]{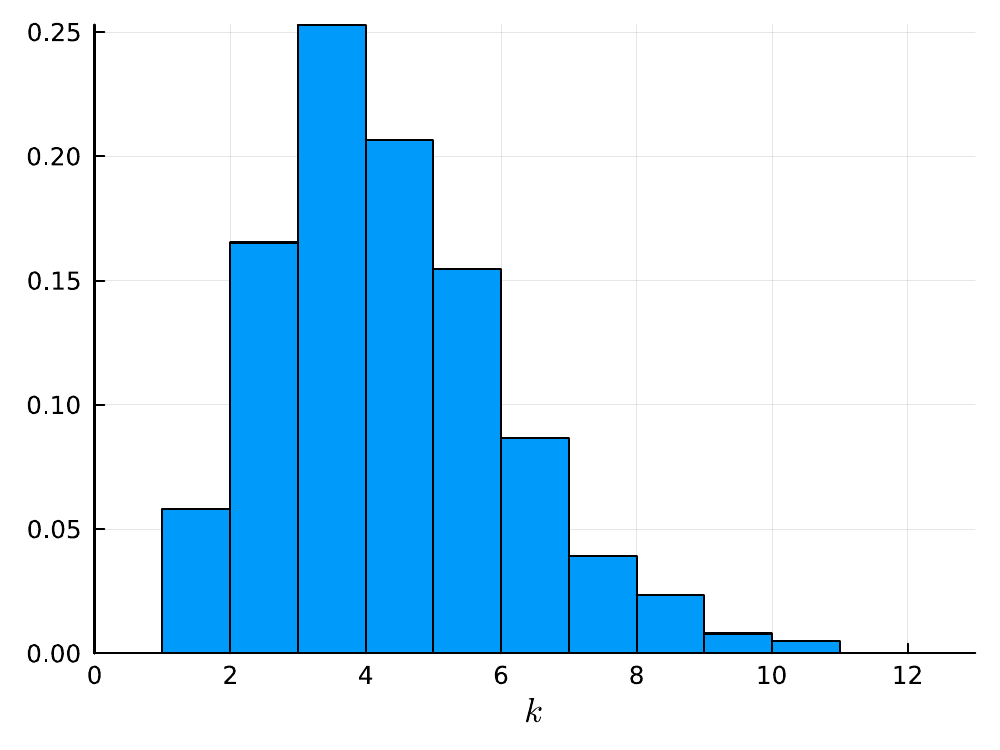}
			\includegraphics[width=0.5\textwidth]{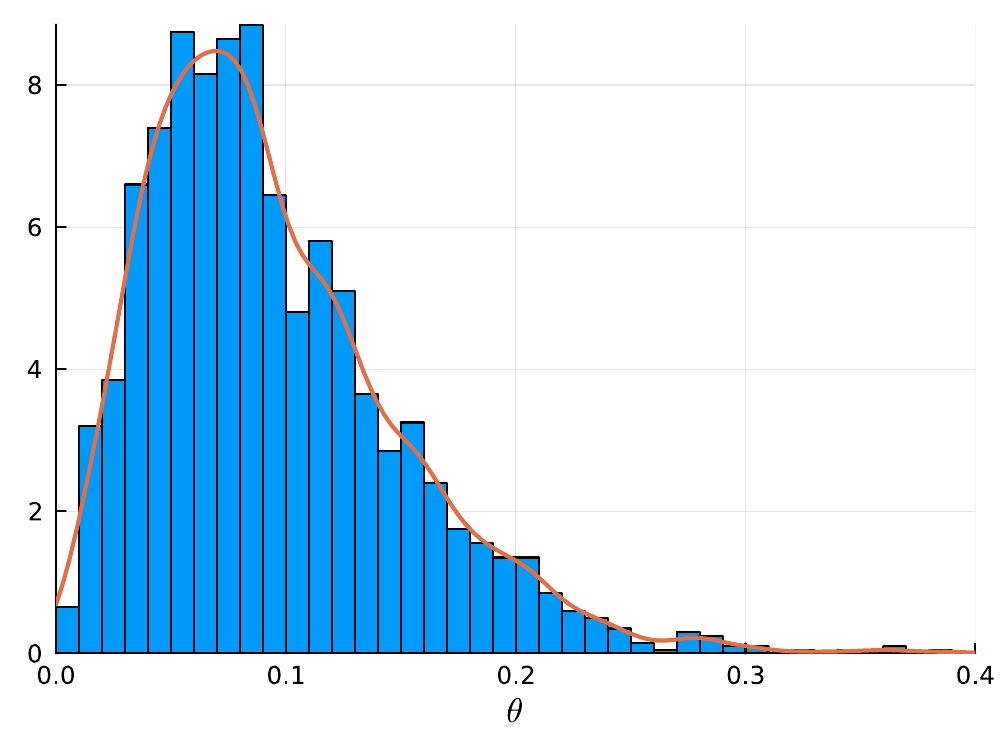}
		}
		\noindent\makebox[\textwidth]{\includegraphics[width=0.5\textwidth]{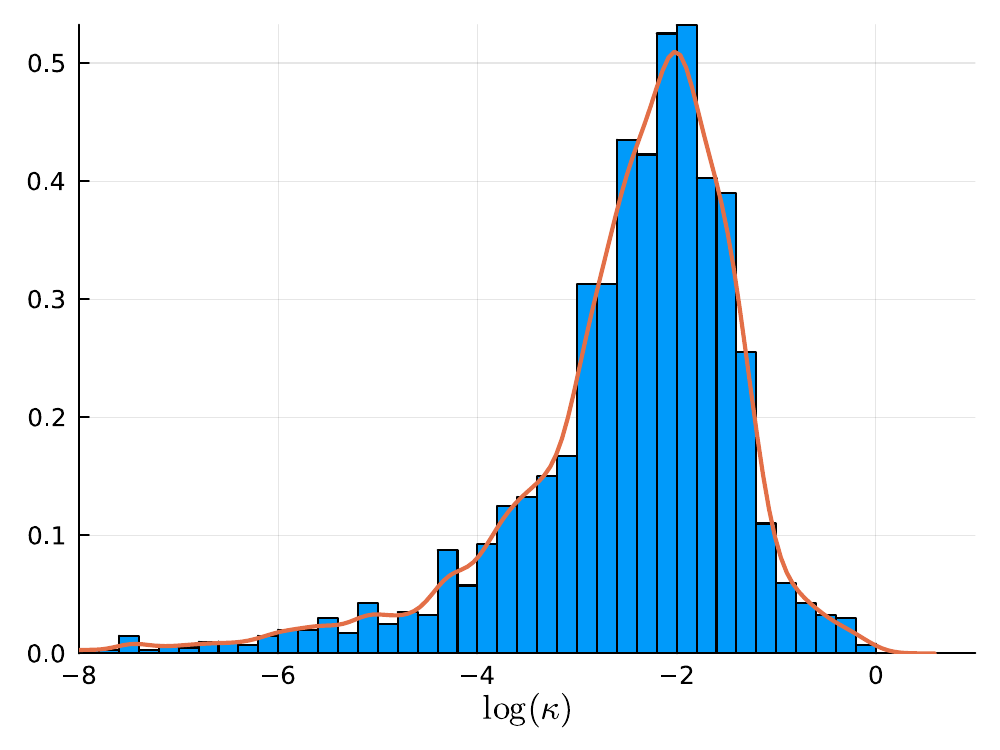}
			\includegraphics[width=0.5\textwidth]{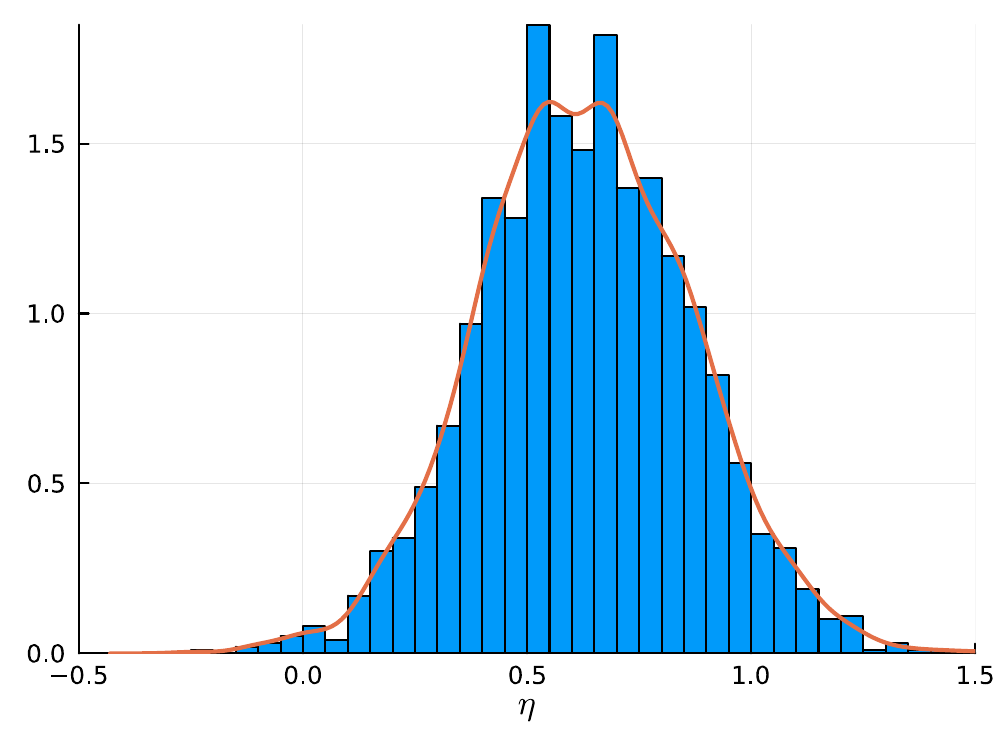}
		}
		\captionsetup{font=small}
		\caption{\emph{Melanoma dataset}. Histograms of posterior samples of relevant model parameters: number $k$ of shared latent locations (top left), concentration parameter $\theta$ (top right), kernel parameter $\kappa$ on log--scale (bottom left) and regression coefficient $\eta$ (bottom right).}
		\label{supfig:melanoma_parameters}
	\end{figure}
	
	\clearpage
	\section{Comparison with a tree-based approach} \label{supsec:sparapani}
	
	This section presents a comparison of the method proposed in our paper, based on hCRMs,
	with the nonparametric approach to competing risks of \citet{sparapani2020}. Their method relies on Bayesian Additive Regression Trees (BART) \citep{chipman2010}, a tree--based ensemble technique widely used for both classification and regression.
	Specifically, the authors adapted the binary probit BART developed in \cite{sparapani2016} for discrete--time survival analysis to the competing risks scenario.  A software implementation of their method is publicly available as part of the \texttt{BART} package in \texttt{R}.
	
	Before presenting the main numerical results on synthetic and real data, we highlight several remarkable features that set the two modeling approaches apart. First, our model is based on kernel mixtures, and thus the prior is supported on the space of continuous cause--specific hazards. As a result, our method is naturally suited to model continuous survival data. In contrast, the tree--based approach in \cite{sparapani2020} assumes discrete--time survival data, since these can be easily accommodated within the BART framework. Additionally, while our approach naturally yields estimates of cause--specific incidence functions and prediction curves, the tree--based method is limited to estimating survival and subdistribution functions. Indeed, these are the functionals included in the output of the \texttt{crisk.bart} function. Moreover, introducing the BART method, the authors consider only two competing causes of risks, namely the main cause of interest and all the others together. Accordingly, the comparison focuses on scenarios involving two competing risks. Finally, categorical predictors are incorporated via a Cox proportional hazards model (Section~\ref{sec:applications}), which preserves interpretability at the cost of reduced flexibility. In contrast, their proposal allows for a richer dependence structure. 
	
	The comparison presented below is necessarily limited to the estimation of survival and subdistribution functions, since, as noted, the tree--based method does not cover cause--specific incidence functions and prediction curves, which are a distinctive capability of our approach. The results should be viewed as preliminary, and we refrain from drawing definitive conclusions, as a more extensive analysis is required and will be the subject of future work.
	
	\subsection{Illustration with synthetic data}
	\label{supsec:illustration_bart}
	
	Consider a scenario with $D = 2$ competing sources of risk. Observations are sampled according to the latent failure times approach (Section~\ref{supsec:latent_times}), and the sample size is $n = 100$. Specifically, for $i = 1, \ldots, n$, 
	\begin{equation*}
		Y_{i,1} \stackrel{\mbox{\scriptsize iid}}{\sim} \text{Weibull} \big(\xi_{1} = 1.2\big), \qquad
		Y_{i,2} \stackrel{\mbox{\scriptsize iid}}{\sim} \text{Weibull} \big(\xi_{2} = 2.4\big),
	\end{equation*}
	where $\xi_1$ and $\xi_2$ are shape parameters. Following the illustration in Section~\ref{sec:simulation_three_risks}, our method relies on the generalized gamma hCRM prior with Dykstra-Laud kernel; the hyperparameters specification, as well as the MCMC settings, coincide with those adopted in Section~\ref{supsec:applications}. For the BART method, we consider the default parameters in the \texttt{crisk.bart} function; the survival times are rounded at the second decimal digit. Figure~\ref{supfig:comparison} displays the posterior estimates of the survival and subdistribution functions, along with their pointwise $0.95$ credible bands, obtained with the two alternative methods; the frequentist estimates are also reported. Notably, the estimated curves produced with the hCRM approach are smoother than those obtained with the BART method, which instead nearly overlap with the frequentist estimates. 
	A smooth estimate is clearly preferable in this scenario, and more generally in any context where the underlying data--generating mechanism is known to be continuous.
	On the other hand, the uncertainty around the estimates is comparable across the two methods, which yield very similar credible bands. Table~\ref{suptable:errors} reports the estimation errors  for  the survival and subdistribution functions under the hCRM method (using the marginal algorithm), the BART method, and the frequentist approach. The errors are computed as described in Section~\ref{sec:simulation_independent}. In this simulation scenario, our method clearly outperforms the tree--based approach  due to its smoother estimates.
	
	\begin{table}
		\centering
		\begin{tabular}{lccc}
			\hline
			& survival & \multicolumn{2}{c}{subdistribution} \\ 
			\cline{3-4} 
			&	& $\delta=1$ & $\delta=2$ \\ 
			\hline
			hCRM & \textbf{.0339} & \textbf{.0592} & \textbf{.0818} \\ 
			BART & .0686 & .1048 & .1252 \\ 
			frequentist & .0744 & .0938 & .1354 \\ 
			\hline
		\end{tabular}
		\captionsetup{width=0.85\textwidth,font=small}
		\caption{Comparison of rescaled total variation distances between estimated and true survival and subdistribution functions, for the synthetic dataset in Section~\ref{supsec:illustration_bart}, using the hCRM method, the BART method and the frequentist approach.}
		\label{suptable:errors}
	\end{table}
	
	\begin{figure}
		\centering
		\noindent\makebox[\textwidth]{\includegraphics[width=0.5\textwidth]{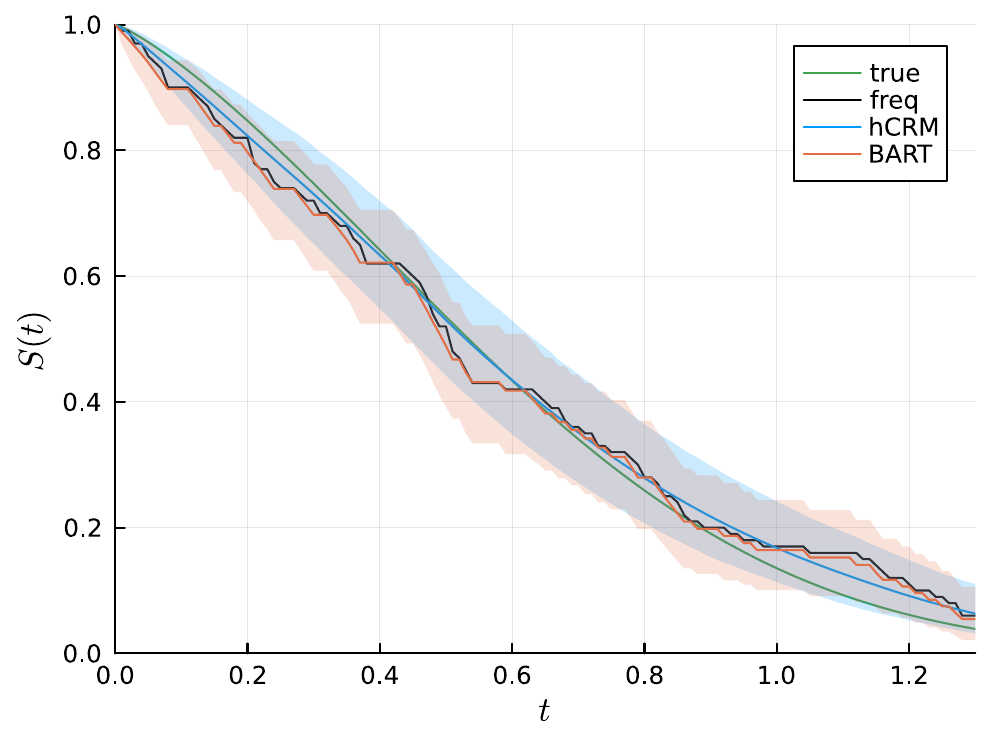}
			\includegraphics[width=0.5\textwidth]{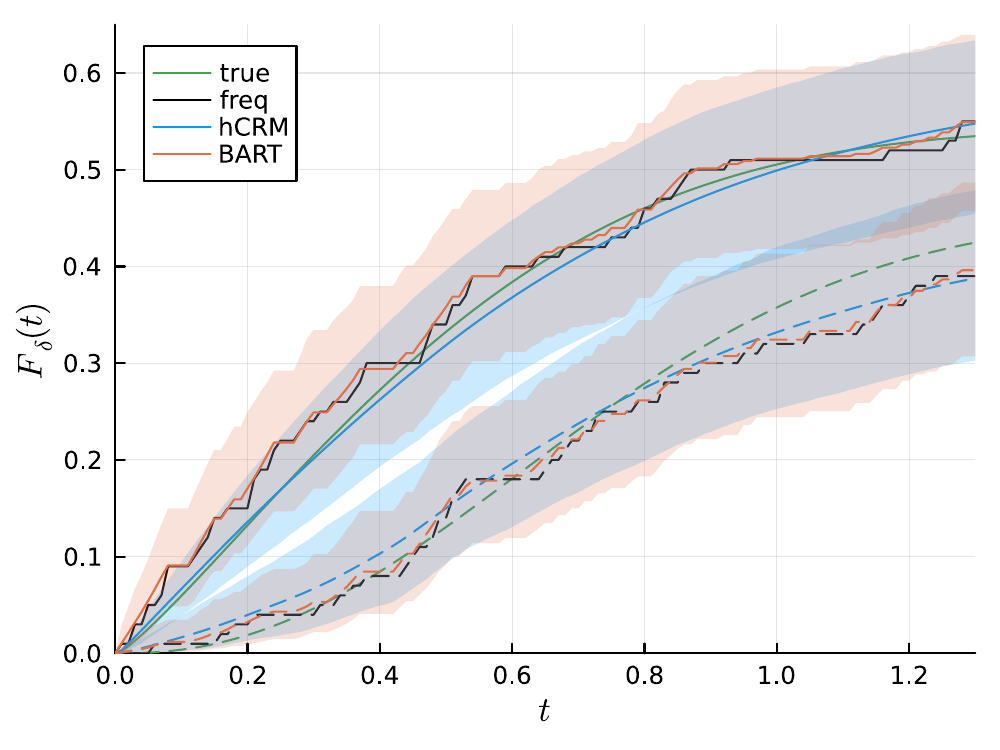}
		}
		\captionsetup{font=small}
		\caption{Posterior estimates of the survival function (left) and subdistribution functions (right), for the synthetic dataset in Section~\ref{supsec:illustration_bart}; comparison of the proposed method (hCRM, blue), the tree--based method (BART, orange), and the frequentist estimates (black) with the true curves (green); the pointwise $0.95$ credible bands are also displayed.}
		\label{supfig:comparison}
	\end{figure}
	
	\subsection{Application to clinical datasets}
	
	The comparison with the tree--based method of \citet{sparapani2020} is also conducted on the two clinical datasets considered in this paper: the EBMT dataset (Section~\ref{sec:applications}) and the melanoma dataset (Section~\ref{supsec:melanoma}). We refer the reader to those sections for details on our modeling choices and the specification of hyperparameters. For the BART model, we use the default parameters with survival times rounded to the nearest week.
	Figures~\ref{supfig:comparison_transplantation} and~\ref{supfig:comparison_melanoma} display the posterior estimates of the survival and subdistribution functions, along with their pointwise $0.95$ credible bands, obtained using the two alternative methods across different levels of the binary predictors; frequentist estimates are also reported for comparison. The considerations made above for the synthetic dataset continue to hold in the case of the melanoma data: our method yields smoother curves, whereas the BART approach tends to more closely follow the frequentist estimates. Instead, no clear or consistent pattern emerges for the EBMT dataset. The posterior estimates from the BART method appear relatively smooth for larger survival times, but occasionally show noticeable departures from the frequentist estimates. As for the credible bands, they are generally comparable across methods, with a remarkable exception for the subdistribution functions for the EBMT dataset: the credible bands obtained via the BART method are substantially wider. 
	
	\begin{figure}
		\centering
		\noindent\makebox[\textwidth]{\includegraphics[width=0.5\textwidth]{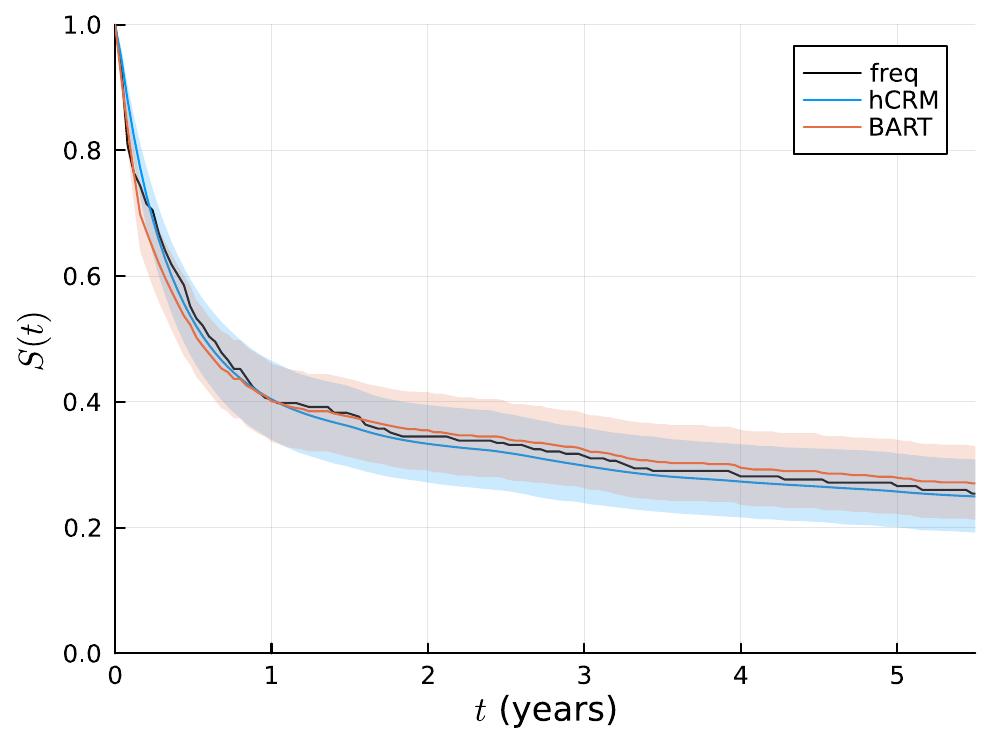}
			\includegraphics[width=0.5\textwidth]{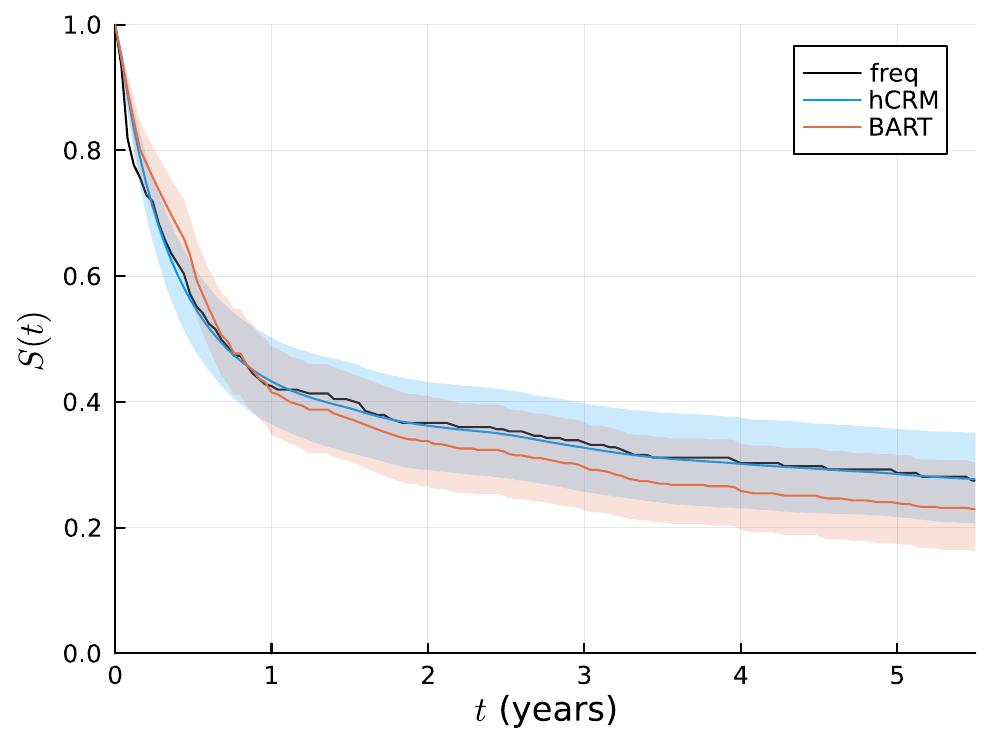}
			
		}
		\noindent\makebox[\textwidth]{\includegraphics[width=0.5\textwidth]{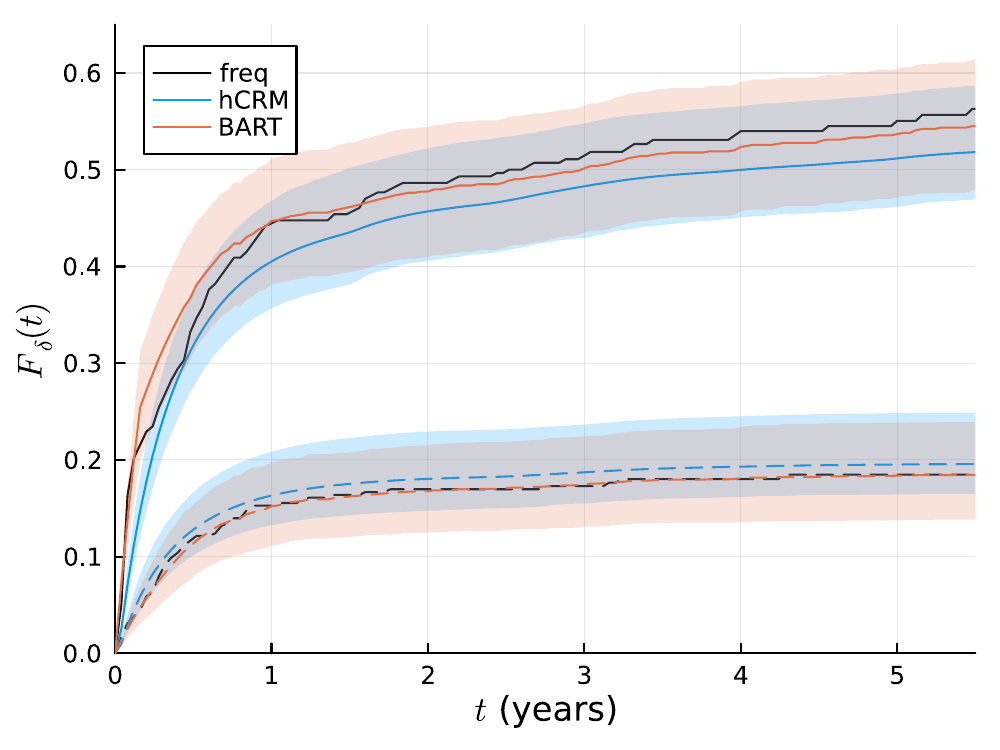}
			\includegraphics[width=0.5\textwidth]{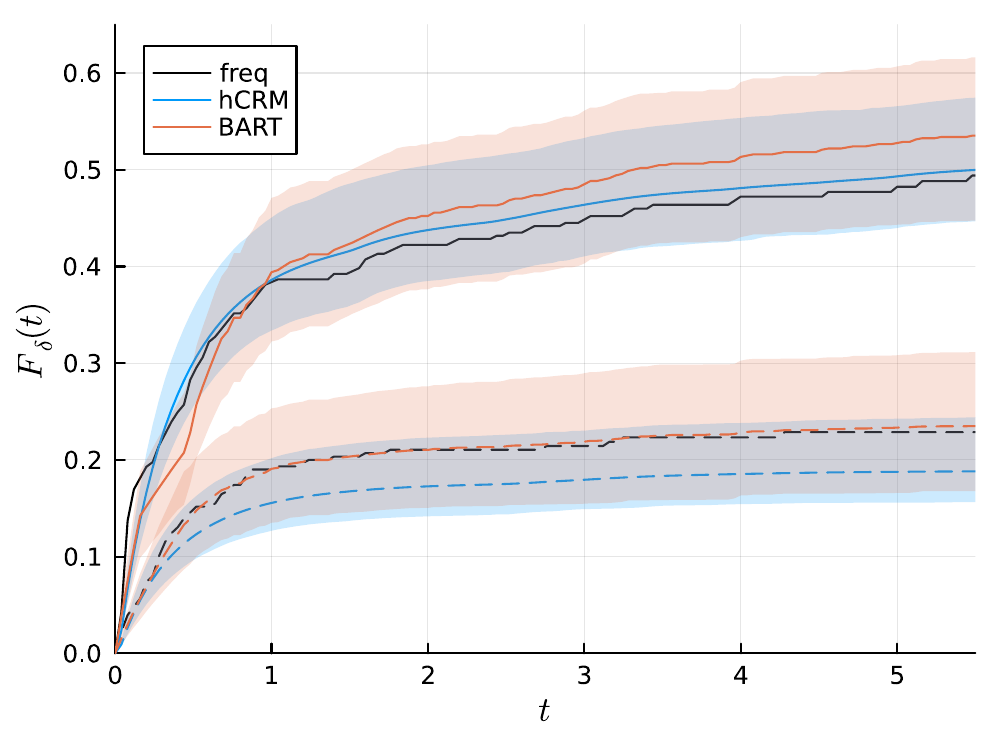}
		}
		\captionsetup{font=small}
		\caption{\emph{EBMT dataset}. Posterior estimates of the survival function (top) and subdistribution functions (bottom), for patients who experienced graft from bone marrow (left) and peripheral blood (right) cells; comparison of the proposed method (hCRM, blue), the tree--based method (BART, orange), and the frequentist estimates (black); the pointwise $0.95$ credible bands are also displayed.}
		\label{supfig:comparison_transplantation}
	\end{figure}
	
	\begin{figure}
		\centering
		\noindent\makebox[\textwidth]{\includegraphics[width=0.5\textwidth]{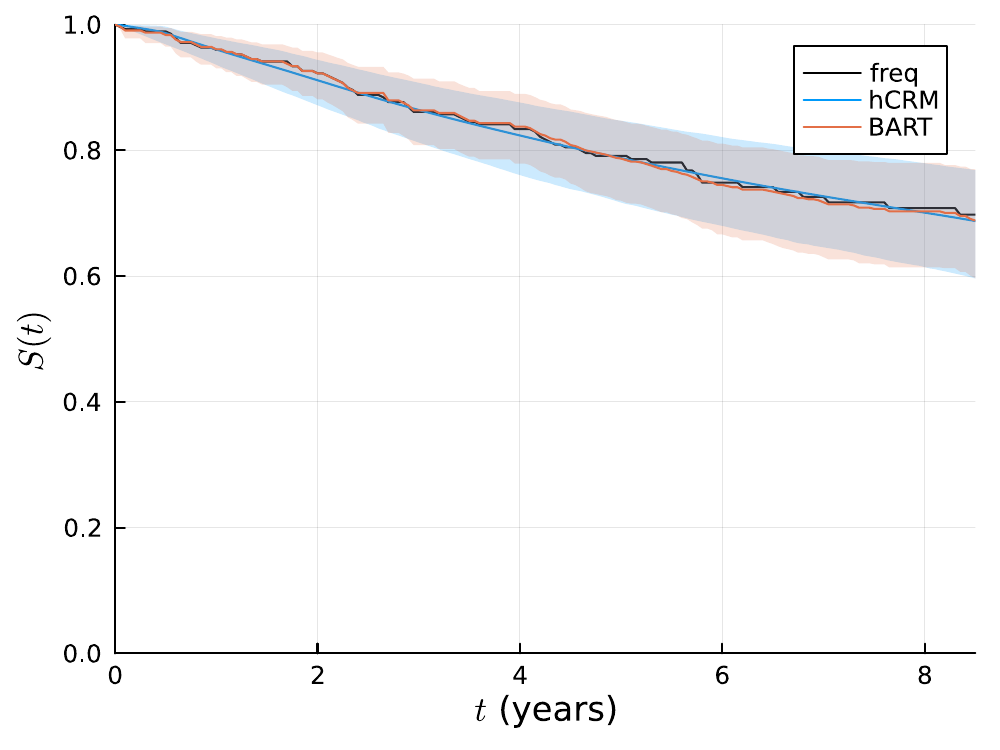}
			\includegraphics[width=0.5\textwidth]{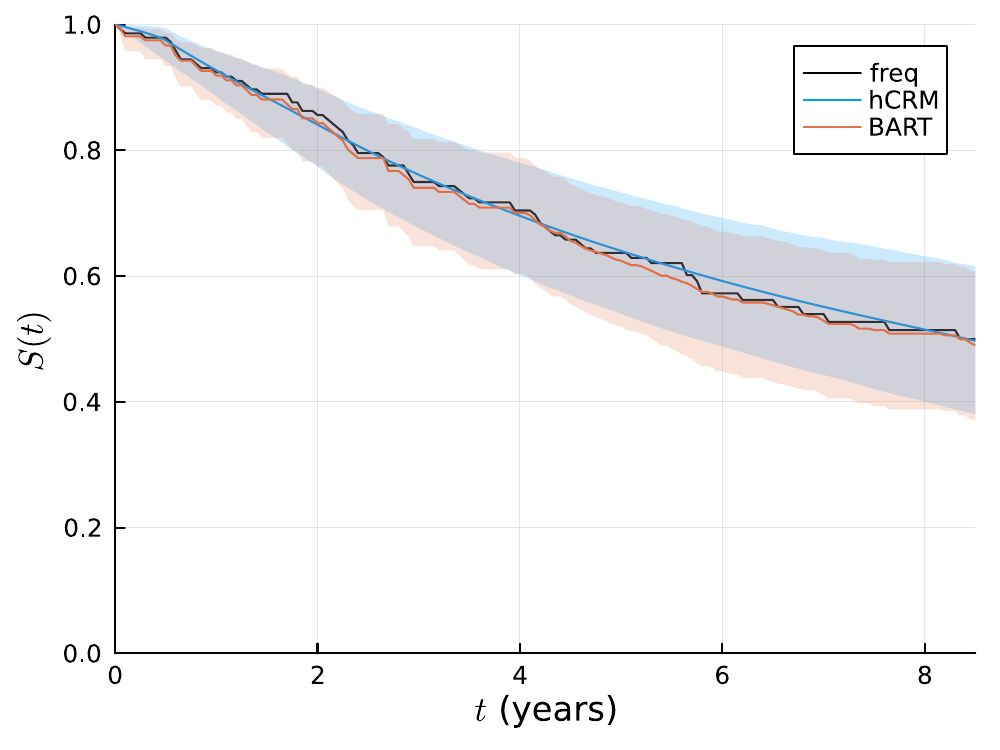}
		}
		\noindent\makebox[\textwidth]{\includegraphics[width=0.5\textwidth]{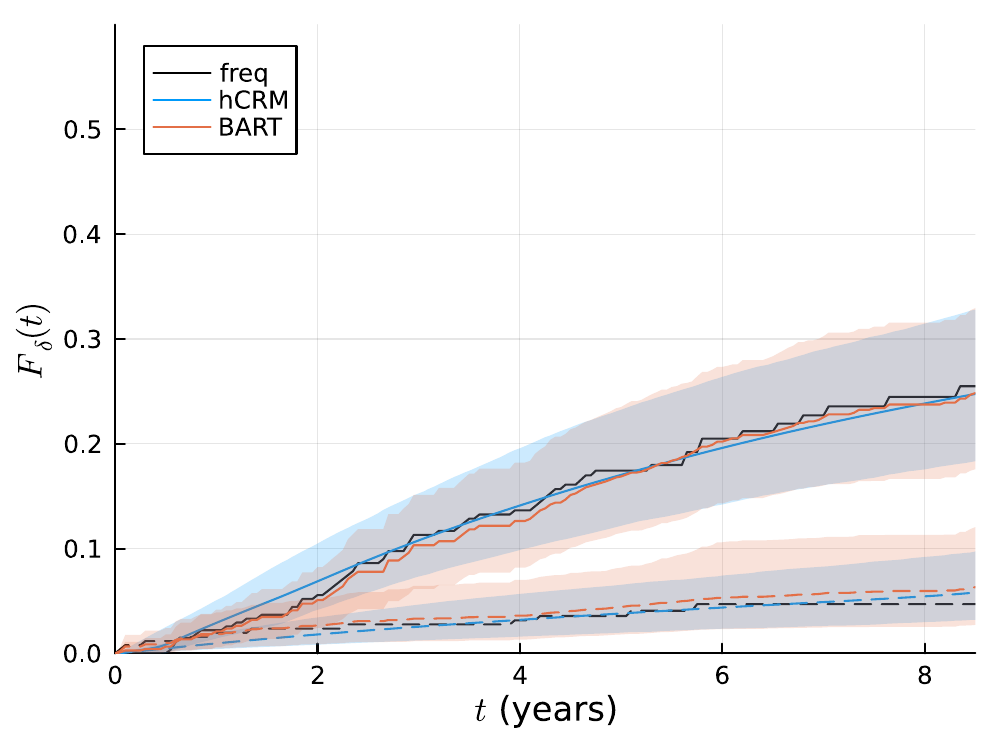}
			\includegraphics[width=0.5\textwidth]{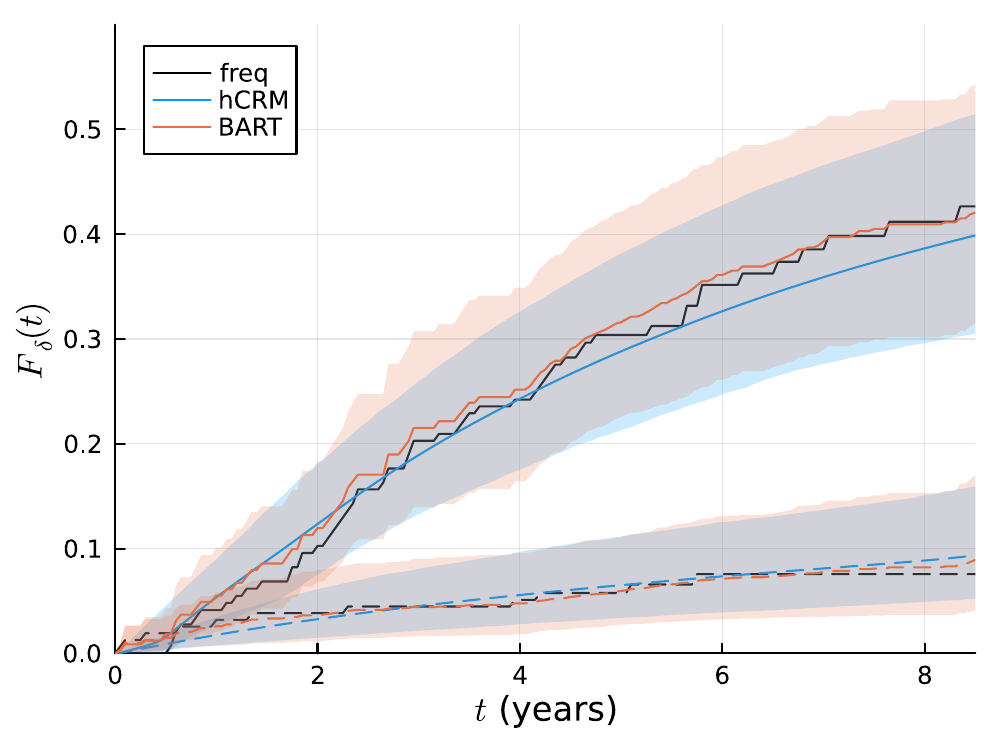}
		}
		\captionsetup{font=small}
		\caption{\emph{Melanoma dataset}. Posterior estimates of the survival function (top) and subdistribution functions (bottom), for female (left) and male (right) patients; comparison of the proposed method (hCRM, blue), the tree--based method (BART, orange), and the frequentist estimates (black); the pointwise $0.95$ credible bands are also displayed.}
		\label{supfig:comparison_melanoma}
	\end{figure}

\end{document}